\newcommand{\ie}{i.e.\@\xspace}
\newcommand{\wrt}{w.r.t.\@\xspace}
\newcommand{\eg}{e.g.\@\xspace}
\newcommand{\cf}{cf.\@\xspace}
\newcommand{\Reals}{\ensuremath{\mathbb{R}}\xspace}    
\newcommand{\Integers}{\ensuremath{\mathbb{Z}}\xspace}    
\newcommand{\Naturals}{\ensuremath{\mathbb{N}}}
\newcommand{\Ireal}{[0,\, 1]\subseteq\mathbb{R}}  
\newcommand{\Ex}{\ensuremath{\mathbb{E}}\xspace}        
\newcommand{\BEx}{\ensuremath{\mathbb{E}_{\scriptscriptstyle \leq 1}}\xspace}        
\newcommand{\EV}{\ensuremath{\mathbf{E}}}
\newcommand{\true}{\ensuremath{\mathsf{true}}\xspace}
\newcommand{\false}{\ensuremath{\mathsf{false}}\xspace}
\newcommand{\ExpZero}{\CteFun{0}\xspace}
\newcommand{\ExpOne}{\CteFun{1}\xspace}
\newcommand{\eqdef}{\triangleq}                              
\DeclareMathOperator*{\lfp}{\boldsymbol \mu}            
\DeclareMathOperator*{\gfp}{\boldsymbol \nu}            
\newcommand{\suchthat}{\mydot}
\newcommand{\mydot}{\ensuremath                         
  \raisebox{0pt}{${\scriptscriptstyle \bullet}~$}}
\newcommand{\CteFun}[1]{\ensuremath{\boldsymbol{#1}}\xspace}    
\newcommand{\Finally}{\lozenge \, }
\renewcommand{\wp}{{\sf wp}\xspace}
\newcommand{\wlp}{{\sf wlp}\xspace}
\newcommand{\wllp}{{\textsf{w}(\textsf{l})\textsf{p}}\xspace}
\newcommand{\cwp}{{\sf cwp}\xspace}
\newcommand{\cwlp}{{\sf cwlp}\xspace}
\newcommand{\cwllp}{{\textsf{cw}(\textsf{l})\textsf{p}}\xspace}
\newcommand{\qcwp}{\ensuremath{\mathsf{\ushortw{cw}p}}\xspace}
\newcommand{\qcwlp}{\ensuremath{\mathsf{\ushortw{cwl}p}}\xspace}
\newcommand{\sem}[1]{\llbracket #1 \rrbracket}
\newcommand{\ToExp}[1]{\ensuremath{ \chi_{#1} }\xspace}    
\newcommand{\GCL}{\textnormal{\textsf{GCL}}\xspace}
\newcommand{\pGCL}{\textnormal{\textsf{pGCL}}\xspace}
\newcommand{\fpGCL}{\textnormal{$\textsf{pGCL}^{\boxtimes}$}\xspace}
\newcommand{\cpGCL}{\textnormal{\textsf{cpGCL}}\xspace}
\newcommand{\cfpGCL}{\textnormal{$\textsf{cpGCL}^{\boxtimes}$}\xspace}
\newcommand{\Cmd}{\ensuremath{\mathcal{P}}\xspace}        
\newcommand{\Var}{\ensuremath{\mathcal{V}}\xspace}        
\newcommand{\State}{\ensuremath{\mathbb{S}}\xspace}      
\newcommand{\Loop}{\textsf{loop}\xspace} 
\newcommand{\Body}{\textsf{body}\xspace}
\newcommand{\Init}{\textsf{init}\xspace}
\newcommand{\counter}{\textit{cntr}\xspace}
\newcommand{\intercepted}{\textit{int}\xspace}
\newcommand{\delivered}{\textit{del}\xspace}
\newcommand{\continue}{\textit{rerun}\xspace}
\newcommand{\Skip}{{\tt skip}\xspace}
\newcommand{\Abort}{{\tt abort}\xspace}
\newcommand{\Ass}[2]{#1 \coloneqq #2}
\newcommand{\Observe}{\textnormal{\texttt{observe}}\xspace}
\newcommand{\PChoice}[3]{\{#1\} \, \left[#2 \right] \, \{#3\}}
\newcommand{\NDChoice}[2]{\{#1\} \, \Box \, \{#2\}}
\newcommand{\If}{{\tt if}\xspace}
\newcommand{\While}{{\tt while}\xspace}
\newcommand{\Ite}{{\tt ite}\xspace}
\newcommand{\Cond}[3]{{\Ite} \, (#1) \allowbreak\, \{#2\} \allowbreak\, \{#3\}}
\newcommand{\WhileDo}[2]{{\While} \,\allowbreak (#1) \,\allowbreak \{#2\}}
\newcommand{\WhileDok}[3]{{\While}^{{}< #1} \,\allowbreak (#2) \,\allowbreak \{#3\}}
\newcommand{\Tr}{\ensuremath{\mathcal{T}}}    
\newcommand{\To}{\rightarrow}                          
\newcommand{\subst}[2]{[#1 / #2]}
\newcommand{\ourunderscore}{\rule{1.5ex}{0.4pt}}
\newcommand{\Par}{\ensuremath{{V}}}
\newcommand{\poly}[1]{\ensuremath{\Integers_{#1}}}
\newcommand{\Distr}{\mathit{Distr}}
\newcommand{\rdtmc}{\ensuremath{\mathcal{R}}}
\newcommand{\Rdtmc}[3]{\ensuremath{\rdtmc_{#1}^{#2}\llbracket #3\rrbracket}}
\newcommand{\pdtmc}{\ensuremath{\mathcal{P}}}
\newcommand{\sinit}{\ensuremath{s_{I}}}
\newcommand{\AP}{\ensuremath{\mathit{AP}}}
\renewcommand{\Pr}{\ensuremath{\textnormal{Pr}}}
\newcommand{\Prr}[1]{\ensuremath{\Pr^{#1}}}
\newcommand{\sched}{\ensuremath{\mathfrak{S}}}
\newcommand{\Sched}{\ensuremath{\mathit{Sched}}}
\newcommand{\rmdp}{\ensuremath{\mathfrak{R}}}
\newcommand{\Rmdp}[3]{\ensuremath{\rmdp_{#1}^{#2}\llbracket #3\rrbracket}}
\newcommand{\Act}{\ensuremath{\mathit{Act}}}
\newcommand{\act}{\ensuremath{\alpha}}
\newcommand{\pmdp}{\ensuremath{\mathcal{P}}}
\newcommand{\pathset}{\mathsf{Paths}}
\newcommand{\paths}[2]{\pathset(#1,\allowbreak\,#2)}
\newcommand{\pathsfin}{\pathset_{\mathit{fin}}}
\newcommand{\pathsinf}{\pathset_{\mathit{inf}}}
\newcommand{\path}{\pi}
\newcommand{\fpath}{\hat\path}
\newcommand{\last}[1]{\mathit{last}(#1)}
\newcommand{\bad}{\ensuremath{\lightning}}
\newcommand{\undesired}{\ensuremath{\langle \bad \rangle}\xspace}
\newcommand{\sinklabel}{\mathpzc{sink}\xspace}
\newcommand{\sink}{\ensuremath{\langle \sinklabel \rangle}\xspace}
\newcommand{\term}{\ensuremath{{\downarrow}}}
\newcommand{\exit}{\ensuremath{\checkmark}}
\newcommand{\ExpRew}[2]{\ensuremath{\textnormal{\textsf{ExpRew}}^{#1}\left(#2\right)}}
\newcommand{\LExpRew}[2]{\ensuremath{\textnormal{\textsf{LExpRew}}^{#1}\left(#2\right)}}
\newcommand{\CExpRew}[3]{\ensuremath{\textnormal{\textsf{CExpRew}}^{#1}\left(#2\,|\,#3\right)}}
\newcommand{\CLExpRew}[3]{\ensuremath{\textnormal{\textsf{CLExpRew}}^{#1}\left(#2\,|\,#3\right)}}
\DeclareMathAlphabet{\mathpzc}{OT1}{pzc}{m}{it}
\def\presuper#1#2%
\theoremstyle{plain}
\newtheorem{lemma}{Lemma}[section]
\newtheorem{theorem}[lemma]{Theorem}
\newtheorem{corollary}[lemma]{Corollary}
\theoremstyle{definition}
\newtheorem{definition}{Definition}[section]
\theoremstyle{remark}
\newtheorem{example}{Example}[section]
\begin{document}

\allowdisplaybreaks
\pagestyle{plain}

\title{Conditioning in Probabilistic Programming}

\author{
\IEEEauthorblockN{Nils Jansen\\ and Benjamin Lucien Kaminski\\  and Joost-Pieter Katoen\\ and Federico Olmedo}
\IEEEauthorblockA{RWTH Aachen University\\ Aachen, Germany}

\and

\IEEEauthorblockN{Friedrich Gretz\\ and Annabelle McIver}
\IEEEauthorblockA{Macquarie University\\ Sydney, Australia}
}


\maketitle
\begin{abstract}
We investigate the semantic intricacies of conditioning, a main feature in probabilistic programming.
We provide a weakest (liberal) pre--condition (\textsf{w}(\textsf{l})\textsf{p}) semantics for the elementary probabilistic 
programming language \pGCL extended with conditioning.
We prove that quantitative weakest (liberal) pre--conditions coincide with conditional (liberal) 
expected rewards in Markov chains and show that semantically conditioning is a truly conservative 
extension. 
We present two program transformations which entirely eliminate conditioning from any
program and prove their correctness using the \textsf{w}(\textsf{l})\textsf{p}--semantics.
Finally, we show how the \textsf{w}(\textsf{l})\textsf{p}--semantics can be used to determine conditional probabilities in a 
parametric anonymity protocol and show that an inductive \textsf{w}(\textsf{l})\textsf{p}--semantics for conditioning in 
non--deterministic probabilistic programs cannot exist.
\end{abstract}


\section{Introduction}
Probabilistic programming is \emph{en vogue}~\cite{Goodman:2014,DBLP:conf/icse/GordonHNR14}. 
It is mainstream in machine learning for describing distribution functions; Bay{\-}esi{\-}an inference is pivotal in their analysis.
It is used in security for describing both cryptographic constructions such as randomized encryption and experiments defining security notions~\cite{DBLP:journals/toplas/BartheKOB13}.
Probabilistic programs, being an extension of familiar notions, render these various fields accessible to programming communities.
A rich palette of probabilistic programming languages exists including {\tt Church}~\cite{DBLP:conf/uai/GoodmanMRBT08} 
as well as modern approaches like probabilistic {\tt C}~\cite{DBLP:conf/icml/PaigeW14}, {\tt Tabular}~\cite{DBLP:conf/popl/GordonGRRBG14} and {\tt R2}~\cite{Nori:2014}. 

Probabilistic programs are sequential programs having two main features: (1) the ability to draw values at random from 
probability distributions, and (2) the ability to condition values of variables in a program through observations.
The semantics of languages without conditioning is well--understood.
Kozen~\cite{DBLP:journals/jcss/Kozen81} considered denotational semantics, whereas McIver and Morgan~\cite{McIver:2004} provided a weakest (liberal) precondition (\textsf{w}(\textsf{l})\textsf{p}) semantics; a corresponding operational semantics is given by Gretz \emph{et al.}~\cite{DBLP:journals/pe/GretzKM14}.
Other relevant works include probabilistic power--domains~\cite{DBLP:conf/lics/JonesP89}, semantics of constraint probabilistic programming languages~\cite{DBLP:conf/concur/GuptaJS97}, and semantics for stochastic $\lambda$--calculi~\cite{DBLP:journals/japll/Scott14}.

Conditioning of variables through observations is less well--understood and raises various semantic difficulties as we will discuss in this paper. 
Previous work on semantics for programs with \Observe statements~\cite{Nori:2014,Hur:2014} do neither consider the possibility of non--termination nor the powerful feature of non--determinism. 
In this paper, we thoroughly study a more general setting which accounts for non--termination by means of a very simple yet powerful probabilistic programming language supporting non--determinism and observations. 
Let us first study a few examples that illustrate the semantic intricacies.
The sample program snippet $P_\mathit{obs_1}$
$$
\PChoice{\Ass x 0}{\nicefrac 1 2}{\Ass x 1};~\Observe~x = 1
$$
assigns zero to the variable $x$ with probability $\nicefrac 1 2$ while $x$ is assigned one with the same likelihood, after which we condition to 
the outcome $x$ being one.
The \Observe statement blocks all runs violating its condition and prevents those runs from happening.
It differs, e.g., from program annotations like (probabilistic) \emph{assertions}~\cite{DBLP:conf/pldi/SampsonPMMGC14}. 
The interpretation of the program is the expected outcome conditioned on permitted runs.
For the sample program $P_\mathit{obs_1}$ this yields the outcome $1 \cdot 1$---there is one feasible run that happens with probability one with $x$ being one.
Whereas this is rather straightforward, a slight variant like $P_{\mathit{obs_2}}$ 
$$
\PChoice{\Ass x 0;~\Observe~x = 1}{\nicefrac 1 2}{\Ass x 1;~\Observe~x = 1}
$$
is somewhat more involved, as the entire left branch of the probabilistic choice is infeasible.
Is this program equivalent to the sample program $P_\mathit{obs_1}$?

The situation becomes more intricate when considering loopy programs that may diverge.
Consider the programs $P_{\mathit{div}}$ (left) and $P_{\mathit{andiv}}$ (right):
$$
\begin{array}{lcl}
\Ass x 1; & \quad \quad & \Ass x 1;\\
\While~(x = 1)~\{ & \quad \quad & \While~(x = 1)~\{\\
\qquad \Ass x 1\  & \quad \quad & \qquad \PChoice{\Ass x 1}{\nicefrac 1 2}{\Ass x 0};\\
\} & \quad \quad & \qquad \Observe~x = 1\\
 &  \quad \quad & \}
\end{array}
$$
Program $P_{\mathit{div}}$ diverges and therefore yields as expected outcome zero.
Due to the conditioning on $x{=}1$, $P_{\mathit{andiv}}$ admits just a single---diverging---feasible run but this 
run almost surely never happens.
Its conditional expected outcome can thus not be measured.
It should be noted that programs with (probabilistic) assertions must be loop--free to avoid similar problems~\cite{DBLP:conf/pldi/SampsonPMMGC14}.
Other approaches insist on the absence of diverging loops~\cite{DBLP:conf/sas/ChakarovS14}.

Intricacies also occur when conditioning is used in programs that may abort.
Consider the program
\begin{align*}
 & \bigl\{\Abort\bigr\} \, [ \nicefrac{1}{2} ] 
 \bigl\{ \PChoice{\Ass{x}{0}} {\nicefrac{1}{2}} {\Ass{x}{1}}; \\[3pt]
  &  \quad  \PChoice{\Ass{y}{0}} {\nicefrac{1}{2}} {\Ass{y}{1}}; \, \Observe \, x{=}0 \lor y{=}0  \bigr\}
\end{align*}
where \Abort is the faulty aborting program which by definition does nothing else but diverge.
The above program tosses a fair coin and depending on the outcome either diverges or tosses a fair coin twice.
It finally conditions on at least once heads ($x{=}0$ or $y{=}0$). 
What is the probability that the outcome of the last coin toss was heads?
The main issue here is how to treat the possibility of abortion.

Combining conditioning with non--determinism is complicated, too.\footnote{As 
stated in \cite{DBLP:conf/icse/GordonHNR14},  ``representing and inferring sets of distributions 
is more complicated than dealing with a single distribution, and hence there are several technical 
challenges in adding non--determinism to probabilistic programs''.}
Non--determinism is a powerful means to deal with unknown information, as well as 
to specify abstractions in situations where details are unimportant.
Let program $P_{\mathit{nondet}}$ be:
\begin{align*}
&\PChoice{\NDChoice {\Ass{x}{5}} {\Ass{x}{2}}}{\nicefrac 1 4}{\Ass{x}{2}};\\
&\Observe \, x > 3
\end{align*}
where with probability $\nicefrac 1 4$, $x$ is set to either $5$ or $2$ non--deterministically (denoted 
$\NDChoice {\Ass{x}{5}} {\Ass{x}{2}}$), while $x$ is set to $2$ with likelihood $\nicefrac 3 4$.
Resolving the non--deterministic choice in favour of setting $x$ to five yields an expectation of $5$ for $x$,
obtained as $5 \cdot \nicefrac 1 4$ rescaled over the single feasible run of $P_{\mathit{nondet}}$.
Taking the right branch however induces an infeasible run due to the violation of the condition $x > 3$, 
yielding a non--measurable outcome.

The above issues---loops, divergence, and non--determi\-nism---indicate that conditioning in 
probabilistic programs is far from trivial.
This paper presents a thorough semantic treatment of conditioning in a probabilistic extension of 
Dijk{\-}stra's guarded command language (known as \pGCL~\cite{McIver:2004}), an elementary though 
foundational language that includes (amongst others) parametric probabilistic choice.
We take several semantic viewpoints.
Reward Markov Decision Processes (RMDPs)~\cite{Puterman:1994} are used as the basis for an 
\emph{operational} semantics.
This semantics is rather simple and elegant while covering \emph{all} aforementioned phenomena.
In particular, it discriminates the programs $P_{\mathit{div}}$ and $P_{\mathit{andiv}}$ while it does not discriminate $P_{\mathit{obs_1}}$ and $P_{\mathit{obs_2}}$.

We also provide a \emph{weakest pre--condition} (\wp) semantics \`a la~\cite{McIver:2004}. 
This is typically defined inductively over the structure of the program.
We show that combining both non--determinism and conditioning \emph{cannot} be treated in this manner.
Given this impossibility result we present a \wp--semantics for fully probabilistic programs, i.e., 
programs without non--determinism. 
To treat possibly non--terminating programs, due to e.g., diverging loops or abortion, this is 
complemented by a weakest \emph{liberal} pre--condition (\wlp) semantics.
The \wlp--semantics yields the weakest pre--expectation---the probabilistic pendant of weakest 
pre--condition---under which program $P$ either does not terminate or establishes a post--expectation. 
It thus differs from the \wp--semantics in not guaranteeing termination.
The \emph{conditional} weakest pre--expectation (\cwp) of $P$ with respect to post--expectation $f$ 
is then given by normalizing $\wp[P](f)$ with respect to $\wlp[P]({\bf 1})$.
The latter yields the \wp under which $P$ either does not terminate or terminates while passing all 
\Observe statements.
This is proven to correspond to conditional expected rewards in the RMDP--semantics, extending a 
similar result for \pGCL~\cite{DBLP:journals/pe/GretzKM14}.
Our semantic viewpoints are thus consistent for fully probabilistic programs.
Besides, we show that conditioning is semantically a truly conservative extension.
That is to say, our semantics is backward compatible with the (usual) \pGCL semantics; this does not
apply to alternative approaches such as {\tt R2}~\cite{Nori:2014}.

Finally, we show several practical applications of our results.
We present two program transformations which entirely eliminate conditioning from any
program and prove their correctness using the \textsf{w}(\textsf{l})\textsf{p}--semantics.
In addition, we show how the \textsf{w}(\textsf{l})\textsf{p}--semantics can be used to determine conditional probabilities in a simplified version of the 
parametric anonymity protocol Crowds~\cite{DBLP:journals/tissec/ReiterR98}.

Summarized, we provide the first operational semantics for imperative probabilistic programming languages with conditioning and both probabilistic and non--deterministic choice. Furthermore we give a denotational semantics for the fully probabilistic case, which in contrast to~\cite{Nori:2014,Hur:2014}, where every program is assumed to terminate almost surely, takes the probability of non--termination into account. Finally, our semantics enables to prove the correctness of several program transformations that eliminate \Observe statements.
%


\section{Preliminaries}\label{sec:prelim}
In this section we present the probabilistic programming language used for our
approaches and recall the notions of expectation transformers and (conditional)
expected reward over Markov decision processes used to endow the language with a
formal semantics.

\paragraph{Probabilistic programs and expectation transformers} 
\hspace{-1pt}We adopt the \emph{probabilistic guarded command language} (\pGCL)
\cite{McIver:2004} for describing probabilistic programs. \pGCL is an extension
of Dijkstra's guarded command language (\GCL) \cite{Dijkstra:1976} with a binary
probabilistic choice operator and its syntax is given by clause
\label{pGCL:syntax}
\begin{align*}
\Cmd ~::=~  &\Skip \mid \Abort \mid \Ass x E \mid \Cmd;\Cmd \mid \Cond G {\Cmd } {\Cmd } \\ 
   &{}~\mid  \, \PChoice {\Cmd} {p} {\Cmd} \mid
\NDChoice {\Cmd}{\Cmd} \mid \WhileDo G \Cmd~.
\end{align*}
%
%
Here, $x$ belongs to $\Var$, the set of program variables; $E$ is an 
arithmetical expression over $\Var$, $G$ a Boolean expression over $\Var$ and
$p$ a real--valued parameter with domain $[0,\,1]$. Most of the \pGCL
instructions are self--explanatory; we elaborate only on the following: $\PChoice{P}{p}{Q}$ represents a \emph{probabilistic choice} where programs $P$ is executed with probability $p$ and program $Q$ with
probability $1{-}p$. $\NDChoice{P}{Q}$ represents a \emph{non--deterministic} choice between $P$ and $Q$.

\pGCL programs are given a formal semantics through the notion of
\emph{expectation transformers}. Let $\State$ be the set of \emph{program states}, where
a program state is a variable valuation. Now assume that $P$ is a
\emph{fully probabilistic} program, \ie a program without non--deterministic
choices. We can see $P$ as a mapping from an initial state $\sigma$ to a
distribution over final states $\sem{P}(\sigma)$. Given a random variable $f
\colon \State \To \Reals_{\geq 0}$, transformer $\wp[P]$ maps every initial state
$\sigma$ to the expected value $\EV_{\sem P (\sigma)}(f)$ of $f$ with respect to
the distribution of final states $\sem{P} (\sigma)$. Symbolically,
\begin{equation*}\label{eq:cwp-st}
 \wp[P](f)(\sigma) = \EV_{\sem{P}(\sigma)}(f)~.
\end{equation*}
In particular, if $f = \ToExp{A}$ is the characteristic function of some event
$A$, $\wp[P](f)$ retrieves the probability that the event occurred after the
execution of $P$. (Moreover, if $P$ is a deterministic program in \GCL,
$\EV_{\sem{P}(\sigma)}(\ToExp{A})$ is $\{0,1\}$--valued and we recover the ordinary
notion of weakest pre--condition introduced by Dijkstra~\cite{Dijkstra:1976}.)
%
%

In contrast to the fully probabilistic case, the execution of a
non--deterministic program $P$ may lead to multiple---rather than a
single---distributions of final states. To account for these kind of programs,
the definition of $\wp[P]$ is extended as follows:
\begin{equation*}\label{ec:wp-non-det}
 \wp[P](f)(\sigma) = \inf_{\mu' \in \sem{P}(\sigma)}  \EV_{\mu'}(f)
\end{equation*}
In other words, $\wp[P](f)$ represents the tightest lower bound that can be
guaranteed for the expected value of $f$ (we assume that non-deterministic
choices are resolved \emph{demonically}\footnote{Demonic schedulers induce the most pessimistic expected outcome while in~\cite{McIver:2001b} also \emph{angelic schedulers} are considered which guarantee the most optimistic outcome.}, attempting to minimize the expected value of
$f$).

In the following, we use the term \emph{expectation} to refer to a random
variable mapping program states to real values. The expectation transformer \wp then
transforms a post--expectation $f$ into a pre--expectation $\wp[P](f)$ and
can be defined inductively, following the rules in Figure~\ref{fig:cwp} (second
column), Page~\pageref{fig:cwp}.  
The transformer \wp also admits a liberal variant
$\wlp$, which differs from $\wp$ on the way in which non--termination is
treated.

Formally, the transformer \wp operates on \emph{unbounded expectations} in $\Ex = \State
\To \Reals_{\geq 0}^{\infty}$ and \wlp operates on \emph{bounded expectations} in $\BEx
= \State \To [0,\, 1]$. Here $\Reals_{\geq 0}^{\infty}$ denotes the set of
non--negative real values with the adjoined $\infty$ value. In order to
guarantee the well--definedness of \wp and \wlp we need to provide $\Ex$ and
$\BEx$ the structure of a directed--complete partial order. Expectations are
ordered pointwise, \ie $f\sqsubseteq g$ iff $f(\sigma) \leq g(\sigma)$ for every
state $\sigma\in\State$. The least upper bound of directed subsets is also defined
pointwise. 

In what follows we use bold fonts for constant expectations,
\eg $\CteFun{1}$ denotes the constant expectation $1$. Given an arithmetical
expression $E$ over program variables we simply write $E$ for the expectation
that in state $\sigma$ returns $\sigma(E)$. Given a Boolean expression $G$ over
program variables we use $\ToExp{G}$ to denote the $\{0,1\}$--valued expectation
that returns $1$ if $\sigma \models G$ and $0$ otherwise.

\paragraph{MDPs and conditional expected rewards}
%
Let $\Par$ be a finite set of parameters. A \emph{parametric distribution} over a countable set $S$ is a function $\mu\colon S\to\poly{\Par}$ with $\sum_{s\in S}\mu(s)=1$, where $\poly{\Par}$ denotes the set of all polynomials\footnote{Although parametric distributions are defined as polynomials over the parameters, we only use $p$ and $1-p$ for $p\in\Par$} over $\Par$. $\Distr(S)$ denotes the set of parametric distributions over $S$.
%
\begin{definition}[Parametric Discrete--time Reward Markov Decision Process]
  \label{def:mdp}
  Let $\AP$ be a set of atomic propositions. A \emph{parametric discrete--time reward Markov decision process (RMDP)} is a tuple $\rmdp=(S,\,\sinit,\,\Act,\,\pmdp,\,L,\,r)$ with a countable set of states $S$, a unique initial state $\sinit\in S$, a finite set of actions $\Act$, a transition probability function $\pmdp\colon S\times\Act\to\Distr(S)$ with $\forall (s,\,\act)\in S\times Act\suchthat \sum_{s'\in S}\pmdp(s,\,\act)(s') = 1$, a labeling function $L\colon S\to 2^{\AP}$, and a reward function $r\colon S\to\Reals_{\geq 0}$.
\end{definition}
\noindent A \emph{path} of $\rmdp$ is a finite or infinite sequence $\path = s_0
\act_0 s_1 \act_1\ldots$ such that $s_i\in S$, $\act_i\in\Act$, $s_0=\sinit$,
and $\pmdp(s_i,\,\act_i)(s_{i+1})>0$ for all $i\geq 0$. A finite path is denoted
by $\fpath=s_0 \act_0\ldots s_n$ for $n\in\Naturals$ with $\last \fpath = s_n$
and $|\pi|=n$. The $i$-th state $s_i$ of $\pi$ is denoted $\pi(i)$. The set of
all paths of $\rmdp$ is denoted by $\pathset^{\rmdp}$ and sets of infinite or
finite paths by $\pathsinf^{\rmdp}$ or $\pathsfin^{\rmdp}$,
respectively. $\pathset^{\rmdp}(s)$ is the set of paths starting in $s$ and
$\pathset^{\rmdp}(s,s')$ is the set of all finite paths starting in $s$ and
ending in $s'$. This is also lifted to sets of states.  If clear from the
context we omit the superscript $\rmdp$.

An MDP operates by a non--deterministic choice of an action $\act\in\Act$ that is
\emph{enabled} at state $s$ and a subsequent probabilistic determination of a
successor state according to $\pmdp(s,\act)$.  We denote the set of actions that
are enabled at $s$ by $\Act(s)$ and assume that $\Act(s) \neq \emptyset$ for
each state $s$.  A state $s$ with $|\Act(s)|=1$ is called \emph{fully
  probabilistic}, and in this case we use $\pmdp(s,\, s')$ as a shorthand for
$\pmdp(s,\,\act)(s')$ where $\Act(s)=\{\act\}$.
For resolving the non--deterministic choices, so--called \emph{schedulers} are
used.  In our setting, \emph{deterministic} schedulers suffice, which are
partial functions $\sched\colon\pathsfin^{\rmdp}\to \Act$ with
$\sched(\fpath)\in\Act(\last \fpath)$.  A deterministic scheduler is called
\emph{memoryless} if the choice depends only on the current state, yielding a
function $\sched\colon S\to\Act$.  The class of all (deterministic) schedulers
for $\rmdp$ is denoted by $\Sched^{\rmdp}$.

A \emph{parametric discrete--time reward Markov chain (RMC)} is an RMDP with only fully probabilistic states. For an RMC we use the notation $\rdtmc=(S$, $\sinit$, $\pdtmc$, $L$, $r)$ where $\pdtmc\colon S\to\Distr(S)$ is called a \emph{transition probability matrix}. 
For RMDP $\rmdp$, the fully probabilistic system $^\sched\rmdp$ induced by a
scheduler $\sched\in\Sched^{\rmdp}$ is an \emph{induced RMC}. A
\emph{probability measure} is defined on the induced RMCs.  The measure for RMC
$\rdtmc$ is given by $\Pr^\rdtmc\colon\pathsfin^\rdtmc\to\Ireal$ with
$\Pr^\rdtmc(\fpath)=\prod_{i=0}^{n-1}\pdtmc(s_i,s_{i+1})$, for $\fpath=s_0\ldots
s_n$.  The probability measure can be lifted to sets of (infinite) paths using a
cylinder set construction, see~\cite[Ch.\ 10]{DBLP:books/daglib/0020348}.  The
\emph{cumulated reward} of a finite path $\fpath=s_0\ldots s_n$ is given by
$r(\fpath)=\sum_{i=0}^{n-1}r(s_i)$ as the reward is ``earned'' when leaving the
state.
%

We consider \emph{reachability properties} of the form $\Finally T$ for a set of
target states $T=\{s\in S\mid T\in L(s)\}$ where $T$ is overloaded to be a set
of states and a label in $\AP$.  The set $\Finally T =
\{\pi\in\pathset(s_I,T)\mid \forall 0\leq i<|\pi|\mydot\,\pi(i)\not\in T\}$ shall be
prefix--free and contain all paths of $\rdtmc$ that visit a target state.
Analogously, the set $\neg\Finally T = \{\pi\in\pathset^{\rdtmc}(s_I)\mid
\forall i\geq 0\mydot\, \pi(i)\not\in T\}$ contains all paths that never reach a
state in $T$.  Let us first consider reward objectives for fully probabilistic
models, i.e., RMCs.  The \emph{expected reward} for a finite set of paths
$\Finally T\in\pathsfin^{\rdtmc}$ is
\begin{align*}
\ExpRew \rdtmc {\Finally T} ~\eqdef~ \sum_{\fpath\in\Finally T}\Pr^{\rdtmc}(\fpath) \cdot r(\fpath)~.
\end{align*}
For a reward bounded by one, the notion of the \emph{liberal} expected reward also takes the mere probability of \emph{not} reaching the target states into account:
\begin{align*}
\LExpRew \rdtmc {\Finally T} ~\eqdef~ \ExpRew \rdtmc {\Finally T} + \Pr^{\rdtmc}(\neg\Finally T)
\end{align*}
A liberal expected reward will later represent the probability of either establishing some condition or not terminating.

To explicitly exclude the probability of paths that reach ``undesired''
states, we let $U=\{s\in S\mid \bad\in L(s)\}$ 
%
and define the \emph{conditional expected reward} for the condition $\neg\Finally U$ by\footnote{Note that strictly formal one would have to define the intersection of sets of finite and possibly infinite paths by means of a cylinder set construction considering all infinite extensions of finite paths.}
\begin{align*}
    \CExpRew \rdtmc {\Finally T} {\neg\Finally U} ~\eqdef~ \frac{\ExpRew \rdtmc {\Finally T \cap \neg\Finally U}}{\Pr^{\rdtmc}(\neg\Finally U)}~.
\end{align*}
For details about conditional probabilities and expected rewards, we refer to~\cite{DBLP:conf/tacas/BaierKKM14}. Conditional \emph{liberal} expected rewards are defined by
\begin{align*}
    \CLExpRew \rdtmc {\Finally T} {\neg\Finally U} ~\eqdef~ \frac{\LExpRew \rdtmc {\Finally T \cap \neg\Finally U}}{\Pr^{\rdtmc}(\neg\Finally U)}~.
\end{align*}
Reward objectives for RMDPs are now defined using a \emph{demonic} scheduler $\sched\in\Sched^{\rmdp}$ minimizing probabilities and expected rewards for the induced RMC $\presuper \sched \rdtmc$. For the expected reward this yields
\begin{align*}
\ExpRew \rmdp {\Finally T} ~\eqdef~ \inf_{\sched\in\Sched^{\rmdp}}\ExpRew {\presuper \sched \rdtmc}{\Finally T}~.
\end{align*}
The scheduler for conditional expected reward properties minimizes the value of the quotient:
\begin{align*}
    &\CExpRew \rmdp {\Finally T} {\neg\Finally U}\\
    ~\eqdef~ &\inf_{\sched\in\Sched^{\rmdp}}\CExpRew {\presuper \sched \rdtmc} {\Finally T} {\neg\Finally U}\\
    ~=~ &\inf_{\sched\in\Sched^{\rmdp}} \frac{\ExpRew {\presuper \sched \rdtmc} {\Finally T \cap \neg\Finally U}}{\Pr^{{\presuper \sched \rdtmc}}(\neg\Finally U)}
\end{align*}
The liberal reward notions for RMDPS are analogous.
Regarding the quotient minimization we assume ``$\frac 0 0 < 0$" as we see $\frac 0 0$---being undefined---to be less favorable than $0$.

%

%
\section{Conditional \pGCL}\label{sec:conditional_pgcl}


As mentioned in Section~\ref{sec:prelim}, \pGCL programs can be considered as
distribution transformers.  Inspired by~\cite{DBLP:conf/icse/GordonHNR14}, we
extend \pGCL with \Observe statements to obtain \emph{conditional} \pGCL
(\cpGCL, for short).  This is done by extending the syntax of \pGCL
(p.~\pageref{pGCL:syntax}) with $\Observe~G$ where $G$ is a
Boolean expression over the program variables.  When a program's execution
reaches $\Observe~G$ with a current variable valuation
$\sigma \not\models G$, further execution of the program is blocked as with an
\texttt{assert} statement~\cite{DBLP:journals/toplas/Nelson89}.  In contrast to \texttt{assert}, however, the
\Observe statements do not only block further execution but \emph{condition}
resulting distributions on the program's state to only those executions
satisfying the observations. Consider two small example programs:
\begin{align*}
    & \PChoice{\Ass{x}{0}} {p}{\Ass{x}{1}}; &
    & \qquad \PChoice{\Ass{x}{0}} {p}{\Ass{x}{1}};\\[-1pt]
    & \PChoice{\Ass{y}{0}}{q}{\Ass{y}{-1}} &
    & \qquad \PChoice{\Ass{y}{0}}{q}{\Ass{y}{-1}};\\[-2pt]
    & &  & \qquad \Observe ~ x+y=0
\end{align*}
The left program establishes that the probability of $x{=}0$ is $p$,
whereas for the right program this probability is $\tfrac{pq}{pq+(1-p)(1-q)}$.
The left program admits all (four) runs, two of which satisfy $x{=}0$.  Due to
the \Observe statement requiring $x{+}y{=}0$, the right program, however, admits
only two runs ($x{=}0, y{=}0$ and $x{=}1, y{=}{-}1$), satisfying
$x{=}0$.

In Section~\ref{sec:denotational} we will focus on the subclass of fully
probabilistic programs in \cpGCL, which we denote \cfpGCL. 
\section{Operational Semantics for \cpGCL}\label{sec:ope}
\begin{figure*}[th]
\scriptsize
\abovedisplayskip=0pt
\begin{align*}
&(\textbf{terminal})\,\frac{\vphantom{\langle}}{\langle {\downarrow},\, \sigma \rangle ~\longrightarrow~ \sink}\qquad\qquad(\textbf{skip})\,\frac{\vphantom{\langle}}{\langle \Skip,\, \sigma \rangle ~\longrightarrow~ \langle {\downarrow},\, \sigma \rangle}\qquad\qquad
(\textbf{abort})\,\frac{\vphantom{\langle}}{\langle \Abort,\, \sigma \rangle ~\longrightarrow~ \langle \Abort,\, \sigma \rangle}\qquad\qquad
(\textbf{undesired})\,\frac{\vphantom{\langle}}{\undesired ~\longrightarrow~ \sink}\qquad\quad\\
&(\textbf{assign})\,\frac{\vphantom{\langle}}{\langle \Ass x E,\, \sigma \rangle ~\longrightarrow~ \langle {\downarrow},\, \sigma[x \leftarrow \llbracket E \rrbracket_\sigma] \rangle}\qquad\qquad(\textbf{observe})\,\frac{\sigma \models G}{\langle \Observe\, G,\, \sigma \rangle ~\longrightarrow~ \langle {\downarrow} ,\, \sigma \rangle}\qquad\qquad
\frac{\sigma \not\models G}{\langle \Observe\, G,\, \sigma \rangle ~\longrightarrow~ \undesired}\\
&(\textbf{concatenate})\,\frac{}{\langle\downarrow;{Q},\, \sigma \rangle ~\longrightarrow~ \langle Q,\, \sigma \rangle}\qquad\qquad
\frac{\langle P,\, \sigma \rangle ~\longrightarrow~ \undesired}{\langle {P};{Q},\, \sigma \rangle ~\longrightarrow~ \undesired}\qquad\qquad
\frac{\langle P,\, \sigma \rangle ~\longrightarrow~ \mu}{\langle {P};{Q},\, \sigma \rangle ~\longrightarrow~ \nu},\,\textnormal{where } \forall P'.\,\nu(\langle {P'};{Q}, \sigma'\rangle) := \mu(\langle P',\, \sigma'\rangle)\\
&(\textbf{if})\,\frac{\sigma \models G}{\langle \Cond G P Q,\, \sigma \rangle ~\longrightarrow~ \langle P,\, \sigma \rangle}\qquad\qquad
\frac{\sigma \not\models G}{\langle \Cond G P Q,\, \sigma \rangle ~\longrightarrow~ \langle Q,\, \sigma \rangle}\\
&(\textbf{while})\,\frac{\sigma \models G}{\langle \WhileDo G P,\, \sigma \rangle ~\longrightarrow~ \langle {P};{\WhileDo G P},\, \sigma \rangle}\qquad\qquad
\frac{\sigma \not\models G}{\langle \WhileDo G P,\, \sigma \rangle ~\longrightarrow~ \langle {\downarrow},\, \sigma \rangle}\\
&(\textbf{prob.\ choice})\,\frac{\vphantom{\langle}}{\langle \PChoice P p Q,\, \sigma \rangle ~\longrightarrow~ \nu},\,\textnormal{where } \nu(\langle P,\, \sigma\rangle) := p,\, \nu(\langle Q,\, \sigma\rangle) := 1 - p\\
&(\textbf{non--det.\ choice})\,\frac{\vphantom{\langle}}{\langle \NDChoice P Q ,\, \sigma \rangle ~ \xrightarrow{~\mathit{left}~} ~ \langle P,\, \sigma \rangle}\qquad\qquad
\frac{\vphantom{\langle}}{\langle \NDChoice P Q ,\, \sigma \rangle ~\xrightarrow{~\mathit{right}~}~ \langle Q,\, \sigma \rangle}
\end{align*}\normalsize
%
\caption{Rules for the construction of the operational RMDPs. If not stated
  otherwise, $\langle s \rangle {\longrightarrow} \langle t \rangle$ is a shorthand for $\langle s \rangle {\longrightarrow} \mu\in\Distr(\State)$ with $\mu(\langle t \rangle)=1$. 
A terminal state of the form $\langle {\downarrow},\, \sigma \rangle$ indicates successful termination.
Terminal states and \undesired go to the \sink state. 
\Skip without context terminates successfully.
\Abort self--loops, i.e.\ diverges. 
$\Ass x E$ alters the variable valuation according to the assignment then terminates successfully. 
For the concatenation, $\langle\term;{Q},\, \sigma \rangle$ indicates successful termination of the first program, so the execution continues with $\langle{Q},\, \sigma \rangle$.
If for $P;\,Q$ the execution of $P$ leads to $\undesired$, $P;\,Q$ does so, too.
Otherwise, for $\langle P,\sigma \rangle{\longrightarrow}\mu$, $\mu$ is lifted such that $Q$ is concatenated to the support of $\mu$.
If for the conditional choice $\sigma\models G$ holds, $P$ is executed, otherwise $Q$.
The case for $\While$ is similar.
For the probabilistic choice, a distribution $\nu$ is created according to $p$. For $\NDChoice P Q$, we call $P$ the $\mathit{left}$ choice and $Q$ the $\mathit{right}$ choice for actions $\mathit{left},\mathit{right}\in\Act$.
For the \Observe statement, if $\sigma\models G$ \Observe acts like $\Skip$. 
Otherwise, the execution leads directly to \undesired indicating a violation of the \Observe statement.
}
\label{fig:rmdprules}
\end{figure*}
This section presents an operational semantics for \cpGCL using RMDPs as underlying model inspired by~\cite{DBLP:journals/pe/GretzKM14}.
Schematically, the operational RMDP of a \cpGCL program shall have the following structure:
\begin{center}
\scalebox{0.9}{\begin{tikzpicture}[->,>=stealth',shorten >=1pt,node distance=2.5cm,semithick,minimum size=1cm]
\tikzstyle{every state}=[draw=none]
  \draw[white, use as bounding box] (-1.1,-1.8) rectangle (6.4,1.5);
   \node [state, initial, initial text=] (init) {$\langle \mathpzc{init} \rangle$};  
   \node [cloud, draw=black,cloud puffs=15, cloud puff arc= 150,
        minimum width=1.5cm, minimum height=.75cm, aspect=1] (exit) [right of=init] {$\exit$};
   \node [state] (bad) [above=0.3cm of exit] {$\undesired$};
   \node [state] (sink) [right of=exit] {$\sink$};
   \node [cloud, draw=black,cloud puffs=15, cloud puff arc= 150,
        minimum width=1.5cm, minimum height=.75cm, aspect=1] (diverge) [below=0.5 cm of exit] {$\phantom{\exit}$};

    \node [] (divergetext) [below=-0.825 cm of diverge] {\small$\mathpzc{diverge}$};

   \node [state] (haken1) at (2.2, .1) {\scriptsize $\exit$};
   \node [state] (haken2) at (2.25, -.1) {\tiny $\exit$};
    \node [state] (haken3) at (2.7, -.1) {\scriptsize $\exit$};
   \node [state] (haken4) at (2.75, .1) {\tiny $\exit$};
   \node [state] (haken5) at (2.95, .0) {\tiny $\exit$};

  \path [] 
      (init) edge [decorate,decoration={snake, post length=2mm}] (exit)
      (init) edge [decorate,decoration={snake, post length=2mm}] (bad)
      (init) edge [decorate,decoration={snake, post length=2mm}] (diverge)
      (exit) edge [] (sink)
      (bad) edge [] (sink)
      (sink) edge [loop right] (sink)
      (diverge) edge [loop right,decorate,decoration={snake, post length=2mm}] (diverge)
  ;
\end{tikzpicture}}
\end{center}
Terminating runs eventually end up in the \sink state; other runs are diverging (never reach \sink).
A program terminates either successfully, \ie a run passes a \exit--labelled state, or terminates due to a false observation, \ie a run passes $\undesired$.
Squiggly arrows indicate reaching certain states via possibly multiple paths and states; the clouds indicate that there might be several states of the particular kind.
The \exit--labelled states are the \emph{only ones} with positive reward.
Note that the sets of paths that eventually reach \undesired, eventually reach \exit, or diverge, are pairwise disjoint. 
\begin{definition}[Operational \cpGCL semantics]
The \emph{operational semantics} of $P\in\cpGCL$ for $\sigma\in\State$ and $f\in\Ex$ is the RMDP $\Rmdp \sigma {f} P = (S$, $\langle P,\,\sigma\rangle$, $\Act$, $\pdtmc$, $L$, $r)$, such that $S$ is the smallest set of states with $\undesired \in S$, $\sink \in S$, and $\langle Q,\, \tau \rangle,\langle \term,\, \tau \rangle \in S$ for $Q \in \pGCL$ and $\tau \in \State$. $\langle P,\,\sigma\rangle\in S$ is the initial state. $\Act=\{\mathit{left},\,\mathit{right}\}$ is the set of actions. $\pdtmc$ is formed according to the rules given in Figure~\ref{fig:rmdprules}. The labelling and the reward function are given by:
\begin{align*}
L(s) ~\eqdef~&\begin{cases}
\{\exit\},& \textnormal{if } s = \langle\term,\, \tau\rangle, \textnormal{ for some } \tau\in\State\\
\{\sinklabel\},& \textnormal{if } s = \sink\\
\{\bad\},& \textnormal{if } s = \undesired\\
\emptyset, &\textnormal{otherwise,}
\end{cases}\\
r(s) ~\eqdef~&\begin{cases}
f(\tau),& \textnormal{if } s = \langle\term,\, \tau\rangle, \textnormal{ for some } \tau\in\State\\
0,&\textnormal{otherwise}
\end{cases}
\end{align*}
where a state of the form $\langle \term,\, \tau \rangle$ denotes a terminal state in which no program is left to be executed.
\end{definition}
\noindent To determine the \emph{conditional expected outcome of program $P$}
given that all observations are true, we need to determine the \emph{expected
  reward to reach $\sink$~from the initial state conditioned on not
  reaching $\undesired$} under a demonic scheduler.  For $\Rmdp \sigma {f} P$
this is given by $\CExpRew{\Rmdp \sigma {f} P}{\lozenge
  \sinklabel}{\neg\lozenge\bad}$. Recall for the condition $\neg\Finally\bad$
that all paths not eventually reaching \undesired either diverge (thus collect
reward 0) or pass by a \exit--labelled state and eventually reach \sink. This
gives us:
%
\begin{align*}
&\CExpRew{\Rmdp \sigma {f} P}{\lozenge \sinklabel}{\neg\lozenge\bad}\\
=~ &\inf_{\sched\in\Sched^{\Rmdp \sigma {f} P}}\frac{\ExpRew{\presuper \sched {\Rmdp \sigma {f} P}}{\lozenge \sinklabel  \cap \neg\lozenge\bad}}{\Pr^{\presuper \sched {\Rmdp \sigma {f} P}}({\neg\lozenge\bad})}~\\
=~ &\inf_{\sched\in\Sched^{\Rmdp \sigma {f} P}}\frac{\ExpRew{\presuper \sched {\Rmdp \sigma {f} P}}{\lozenge \sinklabel}}{\Pr^{\presuper \sched {\Rmdp \sigma {f} P}}({\neg\lozenge\bad})}
\end{align*}
This is analogous for $\CLExpRew{\Rdtmc{\sigma}{f}{P}}{\lozenge \sinklabel}{\neg\lozenge\bad}$.
\begin{example}
Consider the program $P\in\cpGCL$:
\begin{align*}
&\PChoice{\NDChoice {\Ass{x}{5}} {\Ass{x}{2}}}{q}{\Ass{x}{2}};\\
&\Observe \, x>3
\end{align*}
where with parametrized probability $q$ a non--deterministic choice between $x$ being assigned $2$ or $5$ is executed, and with probability $1-q$, $x$ is directly assigned $2$.
Let for readability $P_1= \NDChoice {\Ass{x}{5}} {\Ass{x}{2}}$, $P_2= \Ass{x}{2}$, $P_3= \Observe \, x>3$, and $P_4=\Ass x 5$. 
The operational RMDP $\Rmdp {\sigma_I} x {P}$ for an arbitrary initial variable valuation $\sigma_I$ and post--expectation $x$ is depicted below.
\begin{center}
\scalebox{0.8}{\begin{tikzpicture}[->,>=stealth',shorten >=1pt,node distance=2.7cm,semithick,minimum size=1cm]
\tikzstyle{every state}=[draw=none]
  \draw[white, use as bounding box] (-6.6,-6.3) rectangle (3.4,0.3);
   \node [state, initial where=above, initial text=] (init) at (0,0) {$\langle P,\,\sigma_I \rangle$};  
    \node [] (s1) [on grid, below=1.5 cm of init, xshift=-2.2cm] {$\langle P_1;\,P_3,\,\sigma_I\rangle$};   
        \node [] (s11) [on grid, below=1.5 cm of init, xshift=2.2cm] {$\langle P_2;\,P_3,\,\sigma_I\rangle$};   
        \node [] (s12) [on grid, left=2 cm of s1, xshift=-1.2cm] {$\langle P_4;\,P_3,\,\sigma_I\rangle$};   
        \node [] (s111) [on grid, below=1.5 cm of s11] {$\langle\term;\,P_3,\,\sigma_I\subst x 2\rangle$};   
        \node [] (s121) [on grid, below=1.5 cm of s12] {$\langle\term;\,P_3,\,\sigma_I\subst x 5\rangle$};   
         \node [] (s1111) [on grid, below=1.5 cm of s111] {$\langle\,P_3,\,\sigma_I\subst x 2\rangle$};   
         \node [] (s1211) [on grid, right=3.3 cm of s121] {$\langle P_3,\,\sigma_I\subst x 5\rangle$};   
         \node [label={[yshift=0.1cm, gray] 180:$5$}] (s12111) [on grid, below=1.5 cm of s1211] {$\langle\term,\,\sigma_I\subst x 5\rangle$};   
 \node [] (undesired) [on grid, below=1.5 cm of s1111, xshift=-.5cm] {$\undesired$};  
  \node [] (sink) [on grid, below=1.5 cm of s12111, xshift=1.5cm] {$\sink$};  

\path
      (init) edge [] node [left, near start] {\scriptsize{$q$}} (s1)
      (init) edge [] node [right, near start] {\scriptsize{$1-q$}} (s11)
      (s1) edge [] node [below=-0.3cm] {\scriptsize{$\mathit{left}$}} (s12)
      (s1) edge [] node [below=-0.3cm] {\scriptsize{$\mathit{right}$}} (s11)
      (s12) edge [] (s121)
      (s11) edge [] (s111)
      (s111) edge [] (s1111)
      (s121) edge [] (s1211)
      (s1111) edge [] (undesired)
      (s1211) edge [] (s12111)
      (undesired) edge [] (sink)
      (s12111) edge [] (sink)
      (sink) edge [loop above] (sink)
;
\end{tikzpicture}}\\
\end{center}
\noindent The only state with positive reward is $s' \coloneqq
\langle\term,\,\sigma_I\subst x 5\rangle$ and its reward is indicated by number
$5$.  Assume first a scheduler choosing action $\mathit{left}$ in state $\langle
P_1;\,P_3,\,\sigma_I\rangle$.  In the induced RMC the only path accumulating
positive reward is the path $\pi$ going from $\langle P,\, \sigma_I\rangle$ via
$s'$ to $\sink$ with $r(\pi)=5$ and $\Pr(\pi)=q$.  This gives an expected reward
of $5\cdot q$.  The overall probability of not reaching $\undesired$ is also
$q$.  The conditional expected reward of eventually reaching $\sink$ given that
$\undesired$ is not reached is hence $\frac{5\cdot q}{q}=5$.  Assume now the
\emph{minimizing} scheduler choosing $\mathit{right}$ at state $\langle
P_1;\,P_3,\,\sigma_I\rangle$.  In this case there is no path having positive
accumulated reward in the induced RMC, yielding an expected reward of $0$.  The
probability of not reaching $\undesired$ is also $0$.  
The conditional expected reward in this case is undefined ($\nicefrac{0}{0}$) thus the $\mathit{right}$ branch is preferred over the $\mathit{left}$ branch.

In general, the operational RMDP is not finite, even if the program terminates almost--surely (i.e.\ with probability 1).
\end{example}

\section{Denotational Semantics for \cfpGCL}\label{sec:denotational}
This section presents an expectation transformer semantics for the fully
probabilistic fragment \cfpGCL of \cpGCL.  We formally relate this to the
\wp/\wlp--semantics of \pGCL as well as to the operational semantics from the
previous section.

\subsection{Conditional Expectation Transformers}

\label{sec:expec_transform}
An expectation transformer semantics for the fully probabilistic fragment of
\cpGCL is defined using the operators:
\begin{align*}
\cwp[\,\cdot\,]  \colon &\Ex \times \BEx \To \Ex \times \BEx\\
\cwlp[\,\cdot\,]  \colon &\BEx \! \times \BEx \To \BEx \! \times \BEx
\end{align*}
These functions can intuitively be viewed as the counterpart of $\wp$ and $\wlp$
respectively, as shortly shown.  The weakest conditional pre--expectation
$\qcwp[P](f)$ of $P\in\cfpGCL$ with respect to post--expectation $f$ is now
given as
\begin{equation*}\label{eq:cwp-quotient}
\qcwp[P](f) ~\eqdef~ \frac{\cwp_1[P](f, \CteFun{1})}{\cwp_2[P](f, \CteFun{1})}~,
\end{equation*}
where $\cwp_1[P] (f,g)$ (resp.~$\cwp_2[P](f,g)$) denotes the first (resp.~second) 
component of $\cwp[P] (f,g)$ and $\CteFun{1}$ is the constant expectation one.
The weakest liberal conditional pre--expectation $\qcwlp[P](f)$ is defined analogously.
In words, $\qcwp[P](f)(\sigma)$ represents the expected value of $f$ with respect 
to the distribution of final states obtained from executing $P$ in state $\sigma$, 
given that all \Observe statements occurring along the runs of $P$ were satisfied. 
The quotient defining $\qcwp[P](f)$ is interpreted is the same way as the
quotient
\begin{align*}
   \frac{\Pr(A \cap B)}{\Pr(B)}
\end{align*}
encoding the conditional probability $\Pr(A | B)$. However, here we measure the
expected value of random variable $f$\footnote{In fact, $\qcwp[P](f)(\sigma)$
  corresponds to the notion of \emph{conditional expected value} or in simpler
  terms, the expected value over a conditional distribution.}.  The denominator
$\cwp_2[P](f,\CteFun{1})(\sigma)$ measures the probability that $P$ satisfies
all the observations (occurring along valid runs) from the initial state
$\sigma$.  If $\cwp_2[P](f,\CteFun{1})(\sigma)=0$, program $P$ is
\emph{infeasible} from state $\sigma$ and in this case $\qcwp[P](f)(s)$ is not
well--defined (due to the division by zero).  This corresponds to the conditional
probability $\Pr(A | B)$ being not well--defined when $\Pr(B)=0$.

\begin{figure*}[ht]
\begin{center}
\renewcommand{\arraystretch}{1.25}
\begin{tabular}{lll}
\hline
$\boldsymbol{P}$ & $\boldsymbol{\wp[P](f)}$ & $\boldsymbol{\cwp[P](f,\, g)}$\\
\hline
$\Skip$ & $f$ & $ (f,\, g)$\\
$\Abort$ & $\ExpZero$ & $ (\ExpZero,\, \ExpOne)$\\
$\Ass{x}{E}$ & $f\subst{x}{E}$ & $(f\subst{x}{E},\, g\subst{x}{E})$\\
$\Observe \, G$ & $\ToExp{G} \cdot  f$ & $\ToExp{G} \cdot  (f,\, g)$\\
$P_1;~P_2$ & $(\wp[P_1] \circ \wp[P_2]) (f)$ & $(\cwp[P_1] \circ \cwp[P_2]) (f, g)$\\
$\Cond{G}{P_1}{P_2}$ & $\ToExp{G} \cdot \wp[P_1](f) + \ToExp{\lnot G} \cdot \wp[P_2](f)$ & $\ToExp{G} \cdot \cwp[P_1](f,\,g) + \ToExp{\lnot G} \cdot \cwp[P_2](f,\,g)$\\
$\PChoice{P_1}{p}{P_2}$ & $p \cdot \wp[P_1](f)  + (1 - p) \cdot \wp[P_2](f)$ & $p \cdot \cwp[P_1](f,\,g)  + (1 - p) \cdot \cwp[P_2](f,\,g)$\\
$\NDChoice{P_1}{P_2}$ & $\lambda \sigma \mydot \min \{\wp[P_1](f)(\sigma),\, \wp[P_2](f)(\sigma)\}$ & --- not defined ---\\
$\WhileDo{G}{P'}$ & $\lfp \hat{f} \mydot \left( \ToExp{G} \cdot \wp[P'] (\hat{f})  + \ToExp{\lnot G} \cdot f\right)$ & $\lfp_{\scriptscriptstyle{\sqsubseteq,\sqsupseteq}} (\hat{f},\,\hat{g}) \mydot \left( \ToExp{G} \cdot \cwp[P'] (\hat{f},\,\hat{g})  + \ToExp{\lnot G} \cdot (f,\,g)\right)$\\\noalign{\smallskip}\noalign{\smallskip}
\hline
$\boldsymbol{P}$ & $\boldsymbol{\wlp[P](f)}$ & $\boldsymbol{\cwlp[P](f,\, g)}$\\
\hline
$\Abort$ & $\ExpOne$ & $ (\ExpOne,\, \ExpOne)$\\
$\WhileDo{G}{P'}$ & $\gfp \hat{f} \mydot \left( \ToExp{G} \cdot \wp[P'] (\hat{f})  + \ToExp{\lnot G} \cdot f\right)$ & $\gfp_{\scriptscriptstyle{\sqsubseteq,\sqsubseteq}} (\hat{f},\,\hat{g}) \mydot \left( \ToExp{G} \cdot \cwp[P'] (\hat{f},\,\hat{g})  + \ToExp{\lnot G} \cdot (f,\,g)\right)$
\end{tabular}
\renewcommand{\arraystretch}{1}
\end{center}
\caption{Definitions for the \wp/\wlp and \cwp/\cwlp operators. The \wlp (\cwlp) operator differs from \wp (\cwp) only for \Abort and the \While--loop.
A scalar multiplication $a \cdot (f,\, g)$ is meant componentwise yielding $(a\cdot f,\, a\cdot g)$.
Likewise an addition $(f,\, g) + (f',\, g')$ is also meant componentwise yielding $(f+f',\, g+g')$.}
\label{fig:cwp}
\end{figure*}

%
The operators $\cwp$ and $\cwlp$ are defined inductively on the
program structure, see Figure~\ref{fig:cwp} (last column).  Let us briefly explain this.
$\cwp[\Skip]$ behaves as the identity since $\Skip$ has no effect on the program
state. $\cwp[\Abort]$ maps any pair of post--expectations to the pair of
constant pre--expectations $(\CteFun{0}, \CteFun{1})$.  Assignments induce a
substitution on expectations, \ie $\cwp[\Ass{x}{E}]$ maps $(f, g)$ to
pre--expectation $(f\subst{x}{E}, \allowbreak g\subst{x}{E})$, where
$h\subst{x}{E}(\sigma) = h (\sigma\subst{x}{E})$ and $\sigma\subst{x}{E}$
denotes the usual variable update on states.  $\cwp[P_1;P_2]$ is obtained as the
functional composition (denoted $\circ$) of $\cwp[P_1]$ and
$\cwp[P_2]$. $\cwp[\Observe \, G]$ restricts post--expectations to those states
that satisfy $G$; states that do not satisfy $G$ are mapped to $0$.
$\cwp[\Cond{G}{P_1}{P_2}]$ behaves either as $\cwp[P_1]$ or $\cwp[P_2]$ according to the
evaluation of $G$. $\cwp[\PChoice{P_1}{p}{P_2}]$ is obtained as a convex
combination of $\cwp[P_1]$ and $\cwp[P_2]$, weighted according to $p$.
$\cwp[\WhileDo{G}{P'}]$ is defined using standard fixed point
techniques.\footnote{We define $\cwp[\WhileDo{G}{P}]$ by the least fixed
  point \wrt the order $(\sqsubseteq,\sqsupseteq)$ in $\Ex \times \BEx$.  This way we
  encode the greatest fixed point in the second component \wrt the order $\sqsubseteq$ over $\BEx$ as the
  least fixed point \wrt the dual order $\sqsupseteq$.}

The $\cwlp$ transformer follows the same rules as $\cwp$, except for the \Abort 
and \While statements. $\cwlp[\Abort]$ takes any post--expectation to pre--expectation 
$(\CteFun{1}, \CteFun{1})$ and $\cwlp[\WhileDo{G}{P}]$ is defined as a \emph{greatest} 
fixed point rather than a least fixed point.

\begin{example}\label{ex:run-ex-cwp}
\noindent
Consider the program $P'$
$$
\begin{array}{ll}
   _1 & \PChoice{\Ass{x}{0}} {\nicefrac{1}{2}} {\Ass{x}{1}};\\[3pt]
   _2 & \Ite \, (x=1) \: \big\{  \PChoice{\Ass{y}{0}} {\nicefrac{1}{2}}
    {\Ass{y}{2}}  \big\} \\[3pt]
    & \qquad \big\{ \PChoice{\Ass{y}{0}} {\nicefrac{4}{5}} {\Ass{y}{3}} \big\};\\[2pt]
  _3  & \Observe \: y=0
\end{array}
$$
Assume we want to compute the conditional expected value of the expression
$10{+}x$ given that the observation $y{=}0$ is passed. This expected value is
given by $\qcwp[P'](10{+}x)$ and the computation of $\cwp[P'](10{+}x,
\CteFun{1})$ goes as follows:
\begin{align*}
\MoveEqLeft[1]
\cwp[P'](10{+}x, \CteFun{1}) \\
& =
\cwp[P'_{1\text{-}2}] (\cwp[\Observe \: y=0](10{+}x, \CteFun{1})) \\
& =
\cwp[P'_{1\text{-}2}](f,g) ~\text{where}~ (f,g)= \ToExp{y=0} \cdot (10{+}x, \CteFun{1})\\
& =
\cwp[P'_{1\text{-}1}] (\cwp[\Cond{x{=}1}{\ldots}{\ldots}](f,g)) \\
& =
\cwp[P'_{{1\text{-}1}}] (\ToExp{x{=}1} \cdot (h,i) + \ToExp{x{\neq}1} \cdot
(h',i'))~~\text{where} \\
& \: \:
\begin{aligned}
 (h,i) &= \cwp[\PChoice{\Ass{y\!}{\!0}} {\nicefrac{1}{2}} {\Ass{y\!}{\!2}}](f,g) \\  
&= \tfrac{1}{2} \cdot (10 + x,\, \ExpOne)~, \text{ and }\\
 (h',i') &= \cwp[\PChoice{\Ass{y\!}{\!0}} {\nicefrac{4}{5}} {\Ass{y\!}{\!3}}
 ](f,g)\\
&= \tfrac{4}{5} \cdot (10 + x,\, \ExpOne)
\end{aligned}\\
&= \tfrac{1}{2} \cdot \tfrac 4 5 \cdot (\CteFun{10} + 0,\,\ExpOne) + \tfrac{1}{2} \cdot \tfrac 1 2 \cdot (\CteFun{10} + 1,\, \ExpOne) \\
& =
\left(\CteFun{4},\, \CteFun{\tfrac 2 5}\right) + \left( \CteFun{\tfrac{11}{4}},\, \CteFun{\tfrac 1 4}\right)
= 
\left(\CteFun{\tfrac{27}{4}},\CteFun{\tfrac{13}{20}} \right)
\end{align*}
Then $\qcwp[P'](10{+}x) =\CteFun{\tfrac{135}{13}}$ and the conditional expected
value of $10{+}x$ is approximately $10.38$.
\end{example}

In the rest of this section we investigate some properties of the
expectation transformer semantics of \cfpGCL.  As every fully probabilistic
\pGCL program is contained in \cfpGCL, we first study the relation of the
\cwllp-- to the \wllp--semantics of \pGCL. To that end, we extend the weakest
(liberal) pre--expectation operator to \cpGCL as follows:
\[
\wp[\Observe \, G](f) = \ToExp{G} \cdot  f  
\quad
\wlp[\Observe \, G](f) = \ToExp{G} \cdot  f~.
\]

To relate the \cwllp-- and \wllp--semantics we heavily rely on the following
result which says that \cwp (resp.~\cwlp) can be \emph{decoupled} as the product
$\wp \times \wlp$ (resp.~$\wlp \times \wlp$).
\begin{theorem}[Decoupling of \cwllp]
\label{thm:cwp-decuop}
For $P \in \cfpGCL$, $f \in \Ex$, and $f',g \in \BEx$:
\begin{align*}
\cwp [P] (f,\, g) ~&=~ \big(\wp[P](f),\, \wlp[P](g)\big)\\
\cwlp [P] (f',\, g) ~&=~ \big(\wlp[P](f),\, \wlp[P](g)\big)
\end{align*}
\end{theorem}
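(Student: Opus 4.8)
The plan is to prove both identities simultaneously by structural induction on $P \in \cfpGCL$, in each case reading off $\cwp[P]$ and $\cwlp[P]$ from the last column of Figure~\ref{fig:cwp} and comparing against the $\wp$- and $\wlp$-columns. For the atomic cases the claim is immediate from the definitions. For instance, $\cwp[\Skip](f,g)=(f,g)=(\wp[\Skip](f),\,\wlp[\Skip](g))$; for abortion $\cwp[\Abort](f,g)=(\CteFun{0},\CteFun{1})=(\wp[\Abort](f),\,\wlp[\Abort](g))$ while $\cwlp[\Abort](f',g)=(\CteFun{1},\CteFun{1})=(\wlp[\Abort](f'),\,\wlp[\Abort](g))$; for assignment both sides substitute componentwise; and for $\Observe\,G$ the rule $\cwp[\Observe\,G](f,g)=\ToExp{G}\cdot(f,g)=(\ToExp{G}\cdot f,\,\ToExp{G}\cdot g)$ matches the extended $\wp$/$\wlp$ rules $\wp[\Observe\,G](f)=\wlp[\Observe\,G](f)=\ToExp{G}\cdot f$.

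The compound non-loop cases are a routine propagation of the induction hypothesis. For sequencing, $\cwp[P_1;P_2]=\cwp[P_1]\circ\cwp[P_2]$; applying the hypothesis first to $P_2$ and then to $P_1$ turns this into $(\wp[P_1]\circ\wp[P_2])(f)$ in the first coordinate and $(\wlp[P_1]\circ\wlp[P_2])(g)$ in the second, which are exactly $\wp[P_1;P_2](f)$ and $\wlp[P_1;P_2](g)$. Here I would note that the types are respected: the first component $\wp[P_2](f)$ stays in $\Ex$ and the second $\wlp[P_2](g)$ stays in $\BEx$ (since $\wlp$ preserves boundedness), so the outer application $\cwp[P_1]$ is well-typed. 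The conditional $\Cond{G}{P_1}{P_2}$ and the probabilistic choice $\PChoice{P_1}{p}{P_2}$ follow by pushing the hypothesis through the linear combinations, using that scalar multiplication and addition of pairs act coordinatewise.

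The crux is the $\WhileDo{G}{P'}$ case. Writing $\Phi(\hat f,\hat g) \eqdef \ToExp{G}\cdot\cwp[P'](\hat f,\hat g) + \ToExp{\lnot G}\cdot(f,g)$ for the loop functional, the induction hypothesis $\cwp[P'](\hat f,\hat g)=(\wp[P'](\hat f),\,\wlp[P'](\hat g))$ makes $\Phi$ \emph{split} coordinatewise as $\Phi(\hat f,\hat g)=(\Phi_1(\hat f),\,\Phi_2(\hat g))$, where $\Phi_1(\hat f)=\ToExp{G}\cdot\wp[P'](\hat f)+\ToExp{\lnot G}\cdot f$ is the functional defining $\wp[\WhileDo{G}{P'}]$ and $\Phi_2(\hat g)=\ToExp{G}\cdot\wlp[P'](\hat g)+\ToExp{\lnot G}\cdot g$ is the one defining $\wlp[\WhileDo{G}{P'}]$. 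I would then invoke the general fact that a monotone functional on a product of two dcpos that splits as $(\Phi_1,\Phi_2)$ has exactly the products of fixed points as its fixed points, with least fixed point $(\lfp\Phi_1,\lfp\Phi_2)$. Applied with the mixed order $(\sqsubseteq,\sqsupseteq)$ used in the definition of $\cwp[\WhileDo{G}{P'}]$, ``least in the second coordinate w.r.t.\ $\sqsupseteq$'' is by definition ``greatest w.r.t.\ $\sqsubseteq$'', so $\lfp_{\sqsubseteq,\sqsupseteq}\Phi=(\lfp\Phi_1,\,\gfp\Phi_2)=(\wp[\WhileDo{G}{P'}](f),\,\wlp[\WhileDo{G}{P'}](g))$. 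The $\cwlp$ identity is obtained by the same splitting, now taking the greatest fixed point in both coordinates ($\gfp_{\sqsubseteq,\sqsubseteq}$), which yields $(\wlp[\WhileDo{G}{P'}](f'),\,\wlp[\WhileDo{G}{P'}](g))$.

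The hard part will be making this fixed-point decoupling fully rigorous: I must establish monotonicity of $\Phi$ with respect to the mixed order $(\sqsubseteq,\sqsupseteq)$ (which follows from monotonicity of $\wp[P']$ and $\wlp[P']$, together with the observation that monotonicity is invariant under reversing the order on a coordinate), confirm that all fixed points exist on the relevant dcpos $\Ex$ and $\BEx$, and then state and prove cleanly the ``least fixed point of a split functional is the split of the least fixed points'' lemma — taking care throughout that ``least w.r.t.\ $\sqsupseteq$'' means ``greatest w.r.t.\ $\sqsubseteq$''. Everything outside the loop case is bookkeeping driven by the induction hypothesis.
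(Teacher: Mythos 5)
Your proposal is correct and follows essentially the same route as the paper's proof: structural induction with identical base and compound cases, and the loop case resolved by observing that, under the induction hypothesis, the loop functional splits coordinatewise, so its least fixed point w.r.t.\ the mixed order $(\sqsubseteq,\sqsupseteq)$ is the pair $(\lfp\Phi_1,\gfp\Phi_2)$. The only cosmetic difference is that the paper justifies this decomposition by continuity (Lemma~\ref{thm:wp-ext-is-cont}) together with Beki\'{c}'s theorem, whose general formula collapses to your ``split functional'' lemma precisely because each component of the functional ignores the other coordinate.
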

\begin{proof}
By induction on the program structure. See Appendix~\ref{sec:proof-thm-sepfix}
for details.
\end{proof}
\noindent
Let \fpGCL denote the fully probabilistic fragment of \pGCL.
We show that the \qcwp--semantics is a \emph{conservative extension} of the \wp--semantics 
for \fpGCL.  
The same applies to the weakest liberal pre--expectation semantics.  
\begin{theorem}[Compatibility with the \wllp--semantics]
\label{thm:cwp-ext-wp}
For $P \in \fpGCL$, $f \in \Ex$, and $g \in \BEx$:
\begin{align*}
\wp[P](f) = \qcwp[P](f) 
\quad \text{and} \quad
\wlp[P](g) = \qcwlp[P](g)
\end{align*}
\end{theorem}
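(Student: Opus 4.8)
The plan is to reduce both claimed equalities to a single healthiness fact about $\wlp$ and then establish that fact by structural induction. First I would invoke Theorem~\ref{thm:cwp-decuop} (decoupling) with the second argument instantiated to $\CteFun{1}$. This rewrites the two defining quotients as
\[
\qcwp[P](f) ~=~ \frac{\cwp_1[P](f, \CteFun{1})}{\cwp_2[P](f, \CteFun{1})} ~=~ \frac{\wp[P](f)}{\wlp[P](\CteFun{1})}, \qquad \qcwlp[P](g) ~=~ \frac{\cwlp_1[P](g, \CteFun{1})}{\cwlp_2[P](g, \CteFun{1})} ~=~ \frac{\wlp[P](g)}{\wlp[P](\CteFun{1})}.
\]
Thus both theorems follow at once, provided the common denominator satisfies $\wlp[P](\CteFun{1}) = \CteFun{1}$.

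So I would isolate the key lemma: for every $P \in \fpGCL$ one has $\wlp[P](\CteFun{1}) = \CteFun{1}$. Intuitively, $\wlp[P](\CteFun{1})$ is the probability that $P$ either diverges or terminates having passed all \Observe statements; since $\fpGCL$ contains no \Observe (and no non--deterministic choice), every run either terminates or diverges, so this probability is identically one. Note that it is crucial that we work in $\fpGCL$: the \Observe rule of $\cwp$ never fires, so the guard indicator never removes probability mass, and the decoupling of Theorem~\ref{thm:cwp-decuop} is applicable since $\fpGCL \subseteq \cfpGCL$.

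The lemma is proved by induction on the structure of $P$, reading off the $\wp/\wlp$ rules of Figure~\ref{fig:cwp}. The base cases \Skip, \Abort (using the liberal row, where $\wlp[\Abort](\CteFun{1}) = \CteFun{1}$) and $\Ass{x}{E}$ (where $\CteFun{1}\subst{x}{E} = \CteFun{1}$) are immediate, and the compositional cases---sequencing, conditional, and probabilistic choice---preserve $\CteFun{1}$ directly from the induction hypotheses, since $p\cdot\CteFun{1} + (1{-}p)\cdot\CteFun{1} = \CteFun{1}$ and $\ToExp{G}\cdot\CteFun{1} + \ToExp{\lnot G}\cdot\CteFun{1} = \CteFun{1}$.

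The only delicate case, and the one I expect to be the main obstacle, is the \While loop, where $\wlp$ is a \emph{greatest} fixed point. Writing $\Phi(\hat{g}) = \ToExp{G}\cdot\wlp[P'](\hat{g}) + \ToExp{\lnot G}\cdot\CteFun{1}$ for the characteristic functional, the induction hypothesis $\wlp[P'](\CteFun{1}) = \CteFun{1}$ gives $\Phi(\CteFun{1}) = \ToExp{G}\cdot\CteFun{1} + \ToExp{\lnot G}\cdot\CteFun{1} = \CteFun{1}$, so $\CteFun{1}$ is a fixed point of $\Phi$. Because $\CteFun{1}$ is the top element of the cpo $\BEx$ of bounded expectations, every fixed point lies below it, so $\CteFun{1}$ is necessarily the \emph{greatest} fixed point, \ie $\wlp[\WhileDo{G}{P'}](\CteFun{1}) = \CteFun{1}$. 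This is exactly where the liberal (greatest--fixed--point) semantics is essential: the analogous argument fails for $\wp$, whose least fixed point correctly records the loss of mass to divergence, which is precisely why $\wlp[P](\CteFun{1})$ rather than $\wp[P](\CteFun{1})$ must appear in the normalizing denominator.
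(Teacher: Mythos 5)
Your proof is correct and follows essentially the same route as the paper: apply Theorem~\ref{thm:cwp-decuop} to rewrite both quotients and then observe that the common denominator $\wlp[P](\CteFun{1})$ equals $\CteFun{1}$ for observe--free fully probabilistic programs. The only difference is that you establish this denominator fact directly by structural induction (including the greatest--fixed--point argument for loops), whereas the paper merely cites Lemma~\ref{thm:cwp-healthy-cond} (the normalized identity $\qcwlp[P](\CteFun{1}) = \CteFun{1}$) and leaves the underlying \wlp--healthiness of \fpGCL programs implicit, so your version is, if anything, the more self--contained of the two.
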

\begin{proof}
  By Theorem~\ref{thm:cwp-decuop} and the fact that $\qcwlp[P](\CteFun{1}) =
  \CteFun{1}$ (see Lemma~\ref{thm:cwp-healthy-cond}).
\end{proof}
\noindent
We now investigate some elementary properties of \qcwp and \qcwlp such as monotonicity
and linearity.
\begin{lemma}[Elementary properties of $\qcwp$ and $\qcwlp$]
\label{thm:cwp-healthy-cond}
For every $P \in \cfpGCL$ with at least one feasible execution (from every initial state), 
post--expectations $f,g \in \Ex$ and non--negative real constants $\alpha,\beta$:
\begin{enumerate}[label=\roman*), itemsep=1ex]
\item \label{thm:cwp-monot}
$f \sqsubseteq g$ implies $\qcwp[P](f) \sqsubseteq \qcwp[P](g)$ and likewise for \qcwlp.
\item \label{thm:cwp-linear} 
$\qcwp[P](\alpha \cdot f + \beta \cdot g) = 
\alpha \cdot \qcwp[P](f) + \beta \cdot \qcwp[P](g)$.
%
\item \label{thm:cwp-other} 
$\qcwp[P](\CteFun{0}) = \CteFun{0}$ and $\qcwlp[P](\CteFun{1}) = \CteFun{1}$.
\end{enumerate}
\end{lemma}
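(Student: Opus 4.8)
The plan is to reduce all three properties to the already-established \emph{healthiness conditions} of the underlying $\wp$ and $\wlp$ operators, using the decoupling result of Theorem~\ref{thm:cwp-decuop} to strip the conditional quotient down to a single division by a \emph{fixed} denominator. Concretely, Theorem~\ref{thm:cwp-decuop} identifies the first component of $\cwp[P](f,\CteFun{1})$ with $\wp[P](f)$, the first component of $\cwlp[P](g,\CteFun{1})$ with $\wlp[P](g)$, and the second component in both cases with $\wlp[P](\CteFun{1})$. Writing $d \eqdef \wlp[P](\CteFun{1})$, the definitions of $\qcwp$ and $\qcwlp$ therefore collapse to
\[
\qcwp[P](f) ~=~ \frac{\wp[P](f)}{d}
\qquad\text{and}\qquad
\qcwlp[P](g) ~=~ \frac{\wlp[P](g)}{d}~.
\]
The feasibility hypothesis is precisely the statement that $d(\sigma) > 0$ for every $\sigma$, so both quotients are well--defined; the decisive observation is that $d$ does \emph{not} depend on the post--expectation argument.

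First I would record that $\wp[P]$ and $\wlp[P]$ are monotone, that $\wp[P]$ is (positively) linear, and that $\wp[P](\CteFun{0}) = \CteFun{0}$. For pure \pGCL these are the standard healthiness laws of~\cite{McIver:2004}; since we extend the operators to \cpGCL only through $\wp[\Observe \, G] = \wlp[\Observe \, G] = \ToExp{G} \cdot (\cdot)$, which is itself monotone, linear, and maps $\CteFun{0}$ to $\CteFun{0}$, the structural induction establishing these laws carries through the single new base case without change. I would treat this as a short auxiliary observation (or cite it) rather than reprove it.

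With these facts in hand each item follows by dividing everywhere through by the fixed positive $d$. For monotonicity (item~\ref{thm:cwp-monot}), $f \sqsubseteq g$ gives $\wp[P](f) \sqsubseteq \wp[P](g)$, and dividing both sides pointwise by $d(\sigma) > 0$ preserves the order; the $\qcwlp$ statement is identical using monotonicity of $\wlp$ on $\BEx$. For linearity (item~\ref{thm:cwp-linear}), $\qcwp[P](\alpha f + \beta g) = \wp[P](\alpha f + \beta g)/d = (\alpha\,\wp[P](f) + \beta\,\wp[P](g))/d = \alpha\,\qcwp[P](f) + \beta\,\qcwp[P](g)$, using linearity of $\wp$ together with the fact that $d$ is a common, argument--independent factor. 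For item~\ref{thm:cwp-other}, strictness of $\wp$ yields $\qcwp[P](\CteFun{0}) = \wp[P](\CteFun{0})/d = \CteFun{0}/d = \CteFun{0}$, while $\qcwlp[P](\CteFun{1}) = \wlp[P](\CteFun{1})/d = d/d = \CteFun{1}$ since $d(\sigma) > 0$ everywhere.

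The only genuine content is the healthiness step, and even that is routine because the $\Observe$ base case is trivially monotone, linear and strict. The remaining care is purely bookkeeping: keeping the arithmetic of $\Reals_{\geq 0}^{\infty}$ sound under division by a positive finite scalar (dividing $\infty$ by a positive real yields $\infty$, consistent with the order), and respecting the domain restriction to $\BEx$ for the liberal variant, where the feasibility assumption is exactly what makes $\qcwlp[P](\CteFun{1}) = \CteFun{1}$ rather than an undefined $\tfrac{0}{0}$.
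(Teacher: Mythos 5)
Your proposal is correct and follows essentially the same route as the paper: both reduce all three properties to the corresponding healthiness laws of $\wp$/$\wlp$ via the decoupling result of Theorem~\ref{thm:cwp-decuop}, using feasibility to ensure the common denominator $\wlp[P](\CteFun{1})$ is positive. The only difference is one of detail: where you cite the healthiness laws as routine, the paper explicitly reproves linearity of $\wp$ for unbounded expectations (its Lemma~\ref{thm:wp-linear}) because the results of~\cite{McIver:2004} are restricted to bounded expectations.
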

\begin{proof}
Using Theorem~\ref{thm:cwp-decuop} one can show that the transformers \qcwp/\qcwlp 
inherit these properties from the transformers \wp/\wlp.  
For details we refer to Appendix~\ref{proof:cwp-healthy}.
\end{proof}
\noindent
We conclude this section by discussing alternative approaches for providing an expectation 
transformer semantics for $P\in\cfpGCL$. By Theorem~\ref{thm:cwp-decuop}, 
the transformers $\qcwlp[P]$ and $\qcwlp[P]$ can be recast as:
\[
f \mapsto \frac{\wp[P](f)}{\wlp[P](\CteFun{1})}
\quad \text{and} \quad
f \mapsto \frac{\wlp[P](f)}{\wlp[P](\CteFun{1})}~,
\]
respectively.  Recall that $\wlp[P](\CteFun{1})$ yields the weakest pre--expectation under 
which $P$ either does not terminate or does terminate while passing all \Observe--statements.
An alternative is to normalize using \wp in the denominator instead of \wlp, yielding:
\[
f \mapsto \frac{\wp[P](f)}{\wp[P](\CteFun{1})}
\qquad \text{and} \qquad
f \mapsto \frac{\wlp[P](f)}{\wp[P](\CteFun{1})}
\]
The transformer on the right is not meaningful, as the denominator
$\wp[P](\CteFun{1})(\sigma)$ may be smaller than the numerator
$\wlp[P](f)(\sigma)$ for some state $\sigma\in\State$. 
This would lead to probabilities exceeding one. 
The transformer on the left normalizes \wrt the terminating executions.
This interpretation corresponds to the semantics of the probabilistic programming language {\tt R2}~\cite{Nori:2014,Hur:2014} and is only meaningful if programs terminate almost surely (\ie with probability one).

A noteworthy consequence of adopting this semantics is that $\Observe \, G$ is equivalent to $\WhileDo{\lnot G}{\Skip}$~\cite{Hur:2014}\label{par:othersemantics}, see the discussion in Section~\ref{sec:app}.

Let us briefly compare the four alternatives. To that end consider the 
program $P$ below
\begin{align*}
 & \bigl\{\Abort\bigr\} \, [ \nicefrac{1}{2} ] 
 \bigl\{ \PChoice{\Ass{x}{0}} {\nicefrac{1}{2}} {\Ass{x}{1}}; \\[3pt]
  &  \quad  \PChoice{\Ass{y}{0}} {\nicefrac{1}{2}} {\Ass{y}{1}}; \, \Observe \, x=0
  \lor y=0  \bigr\}
\end{align*}
$P$ tosses a fair coin and according to the outcome either diverges or tosses a 
fair coin twice and observes at least once heads ($y{=}0 \vee x{=}0$). We measure the probability 
that the outcome of the last coin toss was heads according to each transformer:
{\small
\begin{align*}
\frac{\wp[P](\ToExp{y{=}0})}{\wlp[P](\CteFun{1})} 
& = \frac{2}{7}
& \frac{\wlp[P](\ToExp{y{=}0})}{\wlp[P](\CteFun{1})} 
& =\frac{6}{7} \\[4pt]
\frac{\wp[P](\ToExp{y{=}0})}{\wp[P](\CteFun{1})} 
& = \frac{2}{3}
&  \frac{\wlp[P](\ToExp{y{=}0})}{\wp[P](\CteFun{1})} 
& = 2
\end{align*}
}
As mentioned before, the transformer $f \mapsto \frac{\wlp[P](f)}{\wp[P](\CteFun{1})}$ is not significant as 
it yields a ``probability'' exceeding one.  Note that our \cwp--semantics yields a probability of $y{=}0$ on termination---while passing all \Observe--statements---of $\frac{2}{7}$.  As shown before, this is a conservative and natural extension of the $\wp$--semantics.  This does not apply to the {\tt R2}--semantics, as this would require an adaptation of rules for \Abort and \While.

\subsection{Correspondence Theorem}\label{sec:equiv}
We now investigate the connection between the operational semantics of Section~\ref{sec:ope} (for fully
probabilistic programs) and the \cwp--semantics.
We start with some auxiliary results.
The first result establishes a relation between (liberal) expected rewards and weakest (liberal) pre--expectations. 

\begin{lemma}
\label{lem:ExpRew_is_wp}
For $P \in \cfpGCL$, $f \in \Ex, g \in \BEx$, and $\sigma \in \State$:
\begin{align*}
\ExpRew{\Rdtmc{\sigma}{f}{P}}{\lozenge \sink} &= \wp[P](f)(\sigma) \tag{i}\\
\LExpRew{\Rdtmc{\sigma}{g}{P}}{\lozenge \sink} &= \wlp[P](g)(\sigma)\tag{ii}
\label{eq:elr}
\end{align*}
\end{lemma}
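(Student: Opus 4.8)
The plan is to prove (i) and (ii) simultaneously by induction on the structure of $P$. The guiding intuition is that, by the definition of the reward and labelling functions, a path of $\Rdtmc{\sigma}{f}{P}$ reaching $\sink$ collects positive reward only while passing through a successfully terminating state $\langle\term,\tau\rangle$, in which case its contribution is exactly $f(\tau)$; paths reaching $\sink$ through $\undesired$ (a blocked observation) contribute $0$, and diverging paths never reach $\sink$. Hence $\ExpRew{\Rdtmc{\sigma}{f}{P}}{\lozenge\sink}$ is literally the expected value of $f$ over the successfully terminating runs of $P$ from $\sigma$, which is what $\wp[P](f)(\sigma)$ computes; in the liberal variant the extra summand $\Pr(\neg\lozenge\sink)$ adds the divergence probability, matching how $\wlp$ treats non-termination. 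The base cases are then read off Figure~\ref{fig:rmdprules}: for $\Abort$ the unique path self-loops and never reaches $\sink$, giving $\ExpRew=0=\wp[\Abort](f)$ and $\LExpRew=0+1=1=\wlp[\Abort](g)$; for $\Observe\,G$ the guard splits into a $\exit$-terminal contributing $\ToExp G\cdot f$ and an $\undesired$-branch contributing $0$ (and $0$ to the liberal reward, since it still reaches $\sink$); $\Skip$ and $\Ass x E$ are immediate. The conditional and choice cases hold because the initial transition merely branches according to $G$ or $p$ into the sub-models, so the reward is the corresponding Boolean/convex combination and the inductive hypothesis closes them.

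The engine behind the two remaining cases is an operational compositionality fact for sequencing. By the concatenation rules, the fragment of $\Rdtmc{\sigma}{f}{P_1;P_2}$ reachable before $P_1$ terminates is a copy of the transition structure of $\Rdtmc{\sigma}{\,\cdot\,}{P_1}$ (each $\langle P_1',\tau\rangle$ becomes $\langle P_1';P_2,\tau\rangle$), a structure independent of the post-expectation, and such a state is $\exit$-labelled only if it equals some $\langle\term,\tau\rangle$ — impossible while $P_2$ is still appended — so no reward is earned until $P_2$ itself terminates. Consequently every $\sink$-reaching path factors uniquely through the state $\tau$ in which $P_1$ terminates, and
\begin{align*}
\ExpRew{\Rdtmc{\sigma}{f}{P_1;P_2}}{\lozenge\sink}
&= \ExpRew{\Rdtmc{\sigma}{h}{P_1}}{\lozenge\sink},\\
h(\tau) &= \ExpRew{\Rdtmc{\tau}{f}{P_2}}{\lozenge\sink}.
\end{align*}
The inductive hypothesis on $P_2$ gives $h=\wp[P_2](f)$ and then on $P_1$ gives $\wp[P_1](\wp[P_2](f))=\wp[P_1;P_2](f)$; the liberal version is identical once one checks that the non-termination probability of $P_1;P_2$ also factors through $\tau$.

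The main obstacle is the loop $\WhileDo{G}{P'}$, where $\wp$ and $\wlp$ are defined via fixed points. Writing $\Psi(X)=\ToExp G\cdot\wp[P'](X)+\ToExp{\lnot G}\cdot f$ for the characteristic functional, I would first show that $R_f\coloneqq\big(\sigma\mapsto\ExpRew{\Rdtmc{\sigma}{f}{\WhileDo{G}{P'}}}{\lozenge\sink}\big)$ is a fixed point of $\Psi$: from $\langle\WhileDo{G}{P'},\sigma\rangle$ the model either exits to $\langle\term,\sigma\rangle$ when $\sigma\not\models G$ (contributing $f(\sigma)$) or steps to $\langle P';\WhileDo{G}{P'},\sigma\rangle$ when $\sigma\models G$, at which point the sequencing fact together with the hypothesis on $P'$ yields $\wp[P'](R_f)(\sigma)$. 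To pin $R_f$ down as the \emph{least} fixed point I would exhibit it as a monotone limit of finite unfoldings: letting $R_f^{(n)}$ be the expected reward gathered by paths exiting the loop after at most $n$ guard evaluations, an induction on $n$ (again using sequencing and the hypothesis on $P'$) identifies $R_f^{(n)}$ with the Kleene iterate $\Psi^n(\CteFun 0)$, and since every $\sink$-reaching path exits after finitely many iterations, monotone convergence gives $R_f=\sup_n R_f^{(n)}=\sup_n\Psi^n(\CteFun 0)=\lfp\Psi=\wp[\WhileDo{G}{P'}](f)$.

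Statement (ii) is dual. Replacing $\CteFun 0$ by $\CteFun 1$ and valuing not-yet-terminated paths optimistically by $1$, the approximants $\LExpRew^{(n)}$ form a \emph{decreasing} chain equal to $\Psi^n(\CteFun 1)$ whose limit is $\gfp\Psi=\wlp[\WhileDo{G}{P'}](g)$; here the boundedness of $g\in\BEx$ and the downward co-continuity of $\Psi$ on the complete lattice $\BEx$ — both from the standard \wp/\wlp fixed-point theory of~\cite{McIver:2004} — are what validate the greatest-fixed-point limit. Establishing these two extremality limits, and in particular the clean identification of the operational finite horizons $R_f^{(n)}$ (resp.\ $\LExpRew^{(n)}$) with the Kleene iterates $\Psi^n(\CteFun 0)$ (resp.\ $\Psi^n(\CteFun 1)$), is the technically delicate part of the argument.
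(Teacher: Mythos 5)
Your proposal is correct and follows essentially the same route as the paper's proof: structural induction with direct path analysis of the operational RMC for the atomic and branching cases, the same factorization of sequencing through the intermediate termination state $\langle\term,\tau\rangle$ carrying the reward $\ExpRew{\Rdtmc{\tau}{f}{P_2}}{\lozenge\sinklabel}$, and a finite-unfolding argument identifying operational loop approximants with the Kleene iterates of the characteristic functional (with the liberal case handled dually via the greatest fixed point and the co-continuity of \wlp over $(\BEx,\sqsupseteq)$). The only presentational difference is that the paper realizes the approximants syntactically as bounded loops $\WhileDok{k}{G}{P'}$ through a separate auxiliary lemma, whereas you define the $n$-th horizon directly on paths; both arguments leave the final interchange of suprema (``every positive-reward path is finite'') at the same informal level.
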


\begin{proof} By induction on $P$, see Appendix~\ref{proofof:ExpRew_is_wp} and \ref{proofof:LExpRew_is_wlp}.  \end{proof}
\noindent
The next result establishes that the probability to never reach \undesired in the RMC of program 
$P$ coincides with the weakest liberal pre--expectation of $P$ w.r.t. post--expectation \ExpOne:

\begin{lemma}
\label{lem:Pr_not_bad_is_wlp_1}
For $P \in \cfpGCL$, $g \in \BEx$, and $\sigma \in \State$:
\begin{align*}
\Prr{\Rdtmc{\sigma}{g}{P}}({\neg \lozenge \bad}) ~=~ \wlp[P](\ExpOne)(\sigma)
\end{align*}
\end{lemma}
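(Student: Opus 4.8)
The plan is to avoid a fresh induction on the structure of $P$ and instead reduce the claim to part~(ii) of Lemma~\ref{lem:ExpRew_is_wp}, instantiated at the post--expectation $\ExpOne$. The first thing to notice is that the quantity $\Prr{\Rdtmc{\sigma}{g}{P}}(\neg\lozenge\bad)$ is in fact independent of $g$: the post--expectation $g$ enters the construction of $\Rdtmc{\sigma}{g}{P}$ only through the reward function $r$, which never influences the probability of a set of paths. Hence I may compute everything in the chain $\Rdtmc{\sigma}{\ExpOne}{P}$, and throughout I write $\Pr$ for its probability measure. By part~(ii) of Lemma~\ref{lem:ExpRew_is_wp} we already have
\begin{align*}
\wlp[P](\ExpOne)(\sigma) ~&=~ \LExpRew{\Rdtmc{\sigma}{\ExpOne}{P}}{\lozenge \sink}\\
~&=~ \ExpRew{\Rdtmc{\sigma}{\ExpOne}{P}}{\lozenge \sink} + \Pr(\neg\lozenge \sink)~,
\end{align*}
so the whole proof reduces to identifying these two summands with the probability of $\neg\lozenge\bad$.

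Next I would exploit the operational structure of Section~\ref{sec:ope}. Since rule $(\textbf{undesired})$ forces $\undesired \longrightarrow \sink$, every run visiting $\undesired$ also visits $\sink$, whence $\neg\lozenge\sink \subseteq \neg\lozenge\bad$. Moreover the only predecessors of $\sink$ are $\undesired$ and the $\exit$--labelled terminal states $\langle\term,\tau\rangle$, so a run that avoids $\undesired$ yet reaches $\sink$ must enter $\sink$ from some $\langle\term,\tau\rangle$. This yields the disjoint decomposition
\begin{align*}
\neg\lozenge\bad ~=~ \bigl(\neg\lozenge\bad \,\cap\, \lozenge\sink\bigr) \,\sqcup\, \neg\lozenge\sink~,
\end{align*}
and correspondingly $\Pr(\neg\lozenge\bad) = \Pr(\neg\lozenge\bad \cap \lozenge\sink) + \Pr(\neg\lozenge\sink)$.

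The crucial step, which I expect to be the main obstacle, is to show $\ExpRew{\Rdtmc{\sigma}{\ExpOne}{P}}{\lozenge\sink} = \Pr(\neg\lozenge\bad \cap \lozenge\sink)$, i.e.\ that the expected reward under $\ExpOne$ captures exactly the probability of terminating successfully. For the reward induced by $\ExpOne$ the only positive--reward states are the $\langle\term,\tau\rangle$, each with reward $\ExpOne(\tau)=1$, and each of these transitions deterministically to $\sink$. Consequently any finite path in the prefix--free set $\lozenge\sink$ of first hits of $\sink$ passes through at most one $\exit$--labelled state, necessarily located immediately before $\sink$; its cumulated reward is therefore $1$ when it terminates successfully and $0$ when it enters $\sink$ from $\undesired$. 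Summing $\Pr(\fpath)\cdot r(\fpath)$ over $\lozenge\sink$ thus retains precisely the successfully terminating runs, giving the claimed identity. The delicate points here are purely the reward bookkeeping (that a successful run accumulates reward exactly one, never more) and making the path--set decomposition measure--theoretically precise, both of which rest only on the local shape of the transition rules.

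Finally I would assemble the pieces. Substituting the identity of the previous paragraph into the decomposition, and reconnecting to arbitrary $g$ by the $g$--independence observed at the outset, gives
\begin{align*}
\wlp[P](\ExpOne)(\sigma) ~&=~ \Pr(\neg\lozenge\bad \cap \lozenge\sink) + \Pr(\neg\lozenge\sink)\\
~&=~ \Pr(\neg\lozenge\bad) ~=~ \Prr{\Rdtmc{\sigma}{g}{P}}(\neg\lozenge\bad)~,
\end{align*}
which is exactly the statement. In summary, all the real content lies in Lemma~\ref{lem:ExpRew_is_wp} together with the single structural observation that, under the reward $\ExpOne$, a successful run contributes reward one and a diverging run contributes the ``liberal'' term $\Pr(\neg\lozenge\sink)$, so that the liberal expected reward to $\sink$ coincides with the probability of never failing an observation.
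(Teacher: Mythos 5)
Your proposal is correct and follows essentially the same route as the paper's proof: the same decomposition of $\neg\lozenge\bad$ into successfully terminating runs plus diverging runs, the same identification of the success probability with $\ExpRew{\Rdtmc{\sigma}{\ExpOne}{P}}{\lozenge\sinklabel}$ via the observation that under post--expectation $\ExpOne$ each successful run collects reward exactly one, and the same final appeal to Lemma~\ref{lem:ExpRew_is_wp}(ii) --- merely read in the opposite direction (starting from $\wlp[P](\ExpOne)$ rather than from $\Pr(\neg\lozenge\bad)$). Your explicit remark that the probability measure is independent of $g$ is a point the paper uses only tacitly when it silently switches between the superscripts $f$ and $g$.
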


\begin{proof}
See Appendix \ref{proofof:lem:Pr_not_bad_is_wlp_1}
\end{proof}
\noindent
We now have all prerequisites in order to present the main result of this section: the correspondence between
the operational and expectation transformer semantics of \cfpGCL programs.
It turns out that the weakest (liberal) pre--ex{\-}pec{\-}ta{\-}tion $\qcwp[P](f)(\sigma)$ (respectively $\qcwlp[P](f)(\sigma)$) coincides with the conditional (liberal) expected 
reward in the RMC $\Rdtmc{\sigma}{f}{P}$ of terminating while never violating an \Observe-statement, i.e., avoiding the $\undesired$ states.

\begin{theorem}[Correspondence theorem]
\label{thm:correspondence:cwp}
For $P \in \cfpGCL$, $f \in \Ex$, $g \in \BEx$ and $\sigma \in \State$,
\begin{align*}
\CExpRew{\Rdtmc{\sigma}{f}{P}}{\lozenge \sinklabel}{\neg\lozenge\bad} ~&=~ \qcwp[P](f)(\sigma) \\
\CLExpRew{\Rdtmc{\sigma}{g}{P}}{\lozenge \sinklabel}{\neg\lozenge\bad} ~&=~ \qcwlp[P](g) (\sigma)~.
\end{align*}
\end{theorem}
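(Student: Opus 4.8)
The plan is to reduce both identities to the two auxiliary lemmas together with the decoupling result of Theorem~\ref{thm:cwp-decuop}, exploiting the special reward and labelling structure of the operational RMC. Since $\Rdtmc{\sigma}{f}{P}$ has no non--deterministic states, the conditional (liberal) expected reward is simply the unconditional quotient
\[
\CExpRew{\Rdtmc{\sigma}{f}{P}}{\lozenge \sinklabel}{\neg\lozenge\bad} = \frac{\ExpRew{\Rdtmc{\sigma}{f}{P}}{\lozenge \sinklabel \cap \neg\lozenge\bad}}{\Prr{\Rdtmc{\sigma}{f}{P}}(\neg\lozenge\bad)}~,
\]
and analogously with \LExpRew in the liberal case. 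First I would dispatch the denominator: by Lemma~\ref{lem:Pr_not_bad_is_wlp_1} it equals $\wlp[P](\ExpOne)(\sigma)$ in both cases.

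For the \wp--identity I would next simplify the numerator. The only states carrying positive reward are the \exit--labelled terminal states, whereas \undesired has reward $0$; moreover every path reaching \sink does so either through an \exit--labelled state (thereby satisfying $\neg\lozenge\bad$) or through \undesired (thereby violating it). Hence every path reaching \sink through \undesired collects reward $0$, so that $\ExpRew{\Rdtmc{\sigma}{f}{P}}{\lozenge \sinklabel \cap \neg\lozenge\bad} = \ExpRew{\Rdtmc{\sigma}{f}{P}}{\lozenge \sinklabel}$, which by Lemma~\ref{lem:ExpRew_is_wp}(i) equals $\wp[P](f)(\sigma)$. Combining numerator and denominator and invoking Theorem~\ref{thm:cwp-decuop}, which yields $\qcwp[P](f) = \nicefrac{\wp[P](f)}{\wlp[P](\CteFun{1})}$, gives the first claim.

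The liberal case is the delicate one and is where I expect the real work to lie. Here the numerator is the \emph{liberal} reward $\LExpRew{\Rdtmc{\sigma}{g}{P}}{\lozenge \sinklabel \cap \neg\lozenge\bad}$, which augments the ordinary reward by the probability of those paths that respect the condition $\neg\lozenge\bad$ yet never reach the target $\sinklabel$. The key observation is that any diverging path automatically satisfies $\neg\lozenge\bad$: since \undesired transitions directly to \sink, a path reaching \undesired would also reach \sink, so a path that never reaches \sink never reaches \undesired either. Consequently the set of condition--respecting non--target paths is precisely the set of diverging paths, and the liberal add--on equals the divergence probability $\Prr{\Rdtmc{\sigma}{g}{P}}(\neg\lozenge\sink)$. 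Combined with the reward--zero argument above this gives
\[
\LExpRew{\Rdtmc{\sigma}{g}{P}}{\lozenge \sinklabel \cap \neg\lozenge\bad} = \ExpRew{\Rdtmc{\sigma}{g}{P}}{\lozenge \sinklabel} + \Prr{\Rdtmc{\sigma}{g}{P}}(\neg\lozenge\sink) = \LExpRew{\Rdtmc{\sigma}{g}{P}}{\lozenge \sinklabel}~,
\]
which is $\wlp[P](g)(\sigma)$ by Lemma~\ref{lem:ExpRew_is_wp}(ii). Dividing by the denominator and using the decoupling of \qcwlp from Theorem~\ref{thm:cwp-decuop} finishes the argument. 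Throughout I would keep track of the degenerate case $\wlp[P](\ExpOne)(\sigma)=0$, in which $P$ is infeasible from $\sigma$ and both sides collapse to the same undefined $\nicefrac{0}{0}$ quotient, so the stated equality also holds there.
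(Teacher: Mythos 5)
Your proof is correct and follows essentially the same route as the paper's: reduce the numerator to $\ExpRew{\Rdtmc{\sigma}{f}{P}}{\lozenge\sinklabel}$ via the reward--zero argument for paths through $\undesired$, apply Lemma~\ref{lem:ExpRew_is_wp} and Lemma~\ref{lem:Pr_not_bad_is_wlp_1} to numerator and denominator, and conclude with the decoupling of Theorem~\ref{thm:cwp-decuop}. The only difference is that you spell out the liberal case (correctly identifying the liberal add--on with the divergence probability, since diverging paths never visit $\undesired$), which the paper dismisses as ``analogous''.
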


\begin{proof}
The proof makes use of Lemmas \ref{lem:ExpRew_is_wp}, \ref{lem:Pr_not_bad_is_wlp_1}, and Theorem~\ref{thm:cwp-decuop}. For details see Appendix \ref{proofof:thm:correspondence:cwp}.
\end{proof}
\noindent
Theorem~\ref{thm:correspondence:cwp} extends a previous result~\cite{DBLP:journals/pe/GretzKM14} that established a connection between an operational and the \wp/\wlp semantics for \pGCL programs to the fully probabilistic fragment of \cpGCL.

\section{Applications}\label{sec:app}
In this section we study approaches that make use of our semantics in order to analyze fully probabilistic programs with observations.
We first present a program transformation based on \emph{hoisting} \Observe statements in a way that probabilities of conditions are extracted, allowing for a subsequent analysis on an observation--free program.
Furthermore, we discuss how observations can be replaced by loops and vice versa. 
Finally, we use a well--known case study to demonstrate the direct applicability of our \cwp--semantics.

\subsection{Observation Hoisting}\label{sec:obs-hoisiting}

In what follows we give a semantics--preserving transformation for
removing observations from $\cfpGCL$ programs. Intuitively, the program
transformation ``hoists'' the \Observe statements while updating
the probabilities in case of probabilistic choices. Given $P\in\cfpGCL$, the
transformation delivers a semantically equivalent \Observe--free program
$\hat{P}\in\fpGCL$ and---as a side product---an expectation $\hat{h} \in\BEx$ that
captures the probability of the original program to establish all \Observe
statements. For intuition, reconsider the program from
Example~\ref{ex:run-ex-cwp}. The transformation yields the program
$$
 \begin{array}{ll}
    & \PChoice{\Ass{x}{0}} {\nicefrac{8}{13}} {\Ass{x}{1}};\\[3pt]
    & \Ite \: (x=1) \:  \{  \PChoice{\Ass{y}{0}} {1}
    {\Ass{y}{2}}  \} \\[3pt]
    & \qquad  \{ \PChoice{\Ass{y}{0}} {1} {\Ass{y}{3}} \}
  \end{array}
$$
and expectation $\hat{h} = \CteFun{\tfrac{13}{20}}$. By eliminating dead code in both probabilistic choices and coalescing the branches in the conditional, we can simplify the program to:
\[
\PChoice{\Ass{x}{0}} {\nicefrac{8}{13}} {\Ass{x}{1}}; \, \Ass{y}{0}
\]
As a sanity check note that the expected value of $10+x$ in this program is equal to $10 \cdot \tfrac{8}{13}+11 \cdot \frac{5}{13}=\tfrac{135}{13}$, which agrees with the result obtained in Example~\ref{ex:run-ex-cwp} by analyzing the original program.
Formally, the program transformation is given by a function
\[
\Tr : \cfpGCL \times \BEx \To \cfpGCL \times \BEx~.
\]
To apply the transformation to a program $P$ we need to determine $\Tr
(P,\CteFun{1})$, which gives the semantically equivalent program $\hat{P}$ and
the expectation $\hat{h}$.

The transformation is defined in Figure~\ref{fig:prog-transf} and works by
inductively computing the weakest pre--expectation that guarantees the
establishment of all \Observe statements and updating the probability parameter
of probabilistic choices so that the pre--expectations of their branches are
established in accordance with the original probability parameter. The
computation of these pre--expectations is performed following the same rules as
the \wlp operator. The correctness of the transformation is established
by the following Theorem, which states that a program and its transformed version
share the same terminating and non--terminating behavior.
\begin{theorem}[Program Transformation Correctness]\label{thm:prog-trans-sound}
  Let $P \in \cfpGCL$ admit at least one feasible run for every initial state and $\Tr (P,\CteFun{1}) = (\hat{P},\,\hat{h})$. Then for any $f 
  \in \Ex$ and $g \in \BEx$,
\begin{align*}
\wp[\hat{P}](f) = \qcwp[P](f) 
\quad \text{and} \quad
\wlp[\hat{P}](g) = \qcwlp[P](g).
\end{align*}
\end{theorem}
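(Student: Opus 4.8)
The plan is to first eliminate the conditional operators from the right--hand side. Since $\hat{P}\in\fpGCL$ is observation--free, $\wp[\hat P]$ and $\wlp[\hat P]$ are the ordinary \pGCL transformers, and by the Decoupling Theorem~\ref{thm:cwp-decuop} together with the quotient characterisations of $\qcwp$ and $\qcwlp$ we have $\qcwp[P](f)=\wp[P](f)/\wlp[P](\CteFun 1)$ and $\qcwlp[P](g)=\wlp[P](g)/\wlp[P](\CteFun 1)$. So it suffices to relate $\wp[\hat P]$ and $\wlp[\hat P]$ to the unnormalised \pGCL transformers of $P$, divided by $\wlp[P](\CteFun 1)$. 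As the second argument of $\Tr$ is instantiated with $\CteFun 1$ only at the top level but is threaded with arbitrary expectations through the recursion, I would prove a stronger statement by induction on the structure of $P$, generalising that second argument to an arbitrary $g\in\BEx$.

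Concretely, I would show that whenever $\Tr(P,g)=(\hat P,\hat h)$ the following \emph{product--form} invariant holds for all $f\in\Ex$ and $g'\in\BEx$:
\begin{align*}
\hat h ~=~ \wlp[P](g),\qquad \hat h\cdot\wp[\hat P](f) ~=~ \wp[P](g\cdot f),\qquad \hat h\cdot\wlp[\hat P](g') ~=~ \wlp[P](g\cdot g').
\end{align*}
Taking $g=\CteFun 1$ gives $\hat h=\wlp[P](\CteFun 1)$, and the feasibility hypothesis (via Lemma~\ref{lem:Pr_not_bad_is_wlp_1}) guarantees $\hat h(\sigma)>0$ for every $\sigma$, so dividing the last two equations by $\hat h$ yields $\wp[\hat P](f)=\wp[P](f)/\wlp[P](\CteFun 1)=\qcwp[P](f)$ and the analogue for \wlp. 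Phrasing the invariant as a product rather than a quotient is deliberate: it remains meaningful at infeasible states where $\hat h$ vanishes, and thus avoids a case split on division inside the induction.

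The base cases ($\Skip$, $\Ass x E$, $\Observe\,G$, $\Abort$) are direct computations from Figure~\ref{fig:cwp}; \eg for $\Observe\,G$ the transformation returns $(\Skip,\ToExp{G}\cdot g)$ and the second identity reduces to $\ToExp{G}\cdot g\cdot f=\wp[\Observe\,G](g\cdot f)$. The compositional cases are routine uses of the hypotheses: for $P_1;P_2$ one chains the hypothesis for $P_2$ under $g$ with that for $P_1$ under the intermediate $\wlp[P_2](g)$; for $\Cond G{P_1}{P_2}$ the idempotency $\ToExp G\cdot\ToExp G=\ToExp G$ and orthogonality $\ToExp G\cdot\ToExp{\lnot G}=\CteFun 0$ split the branches cleanly. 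The only delicate finite case is the probabilistic choice, where $\Tr$ reweights the parameter to $\hat p=p\,\hat h_1/\hat h$; here the factor $\hat h$ in the product form is exactly what cancels the denominator of $\hat p$, so the convex combination re--assembles into $p\cdot\wp[P_1](g\cdot f)+(1-p)\cdot\wp[P_2](g\cdot f)$. At states with $\hat h=0$ the parameter is ill--defined, but both sides of the invariant are then $\CteFun 0$, which is precisely why the product form is convenient.

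The main obstacle is the loop $\WhileDo G{P'}$, whose transform is a loop $\WhileDo G{\hat{P'}}$ with body transformed under the loop's own pre--expectation $\hat h$, so that $\wp[\hat P](f)$, $\wp[P](g\cdot f)$ and $\hat h$ are each given by (co)inductive fixed points. Using the defining equation $\hat h=\ToExp G\cdot\wlp[P'](\hat h)+\ToExp{\lnot G}\cdot g$ and setting $T(X)=\hat h\cdot X$, I would prove the commutation $T\circ F_{\hat P}=F_P\circ T$ of the two loop functionals: on $\lnot G$--states $\hat h=g$ handles the exit term, while on $G$--states $\hat h$ coincides with the body output $\wlp[P'](\hat h)$, so the body hypothesis $\wlp[P'](\hat h)\cdot\wp[\hat{P'}](X)=\wp[P'](\hat h\cdot X)$ converts the recursive term. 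Since $T$ is Scott--continuous and $T(\CteFun 0)=\CteFun 0$, fixed--point fusion yields $T(\lfp F_{\hat P})=\lfp F_P$, settling the \wp part (modulo the convention $0\cdot\infty=0$ needed to keep $T$ continuous on $\Ex$). For \wlp all expectations are bounded, but the fixed point is now the greatest one: the inclusion $T(\gfp F_{\hat P})\sqsubseteq\gfp F_P$ is immediate because $T(\gfp F_{\hat P})$ is a fixed point of $F_P$, whereas the reverse cannot use the naive dual fusion rule, as $T(\CteFun 1)=\hat h\neq\CteFun 1$. Here I would invoke feasibility to divide by the strictly positive $\hat h$ and show that $\gfp F_P/\hat h$ is a post--fixed point of $F_{\hat P}$, hence below $\gfp F_{\hat P}$. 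Propagating the positivity of $\hat h$ through the induction so that this division is always licit is the technically most demanding step of the argument.
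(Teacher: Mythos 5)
Your proposal is correct and follows essentially the same route as the paper: the paper reduces the theorem via the Decoupling Theorem to exactly your product--form auxiliary lemma ($\hat h = \wlp[P](h)$, $\hat h\cdot\wp[\hat P](f)=\wp[P](h\cdot f)$, $\hat h\cdot\wlp[\hat P](g)=\wlp[P](h\cdot g)$ for $(\hat P,\hat h)=\Tr(P,h)$), proved by structural induction with the same case split on $\hat h(s)=0$ for probabilistic choice and the same Kleene--iterate fusion $\hat h\cdot H^n(\CteFun 0)=I^n(\CteFun 0)$ for the loop's least fixed point. The only divergence is that you are more explicit than the paper about the greatest--fixed--point direction of the \wlp{} loop case (where the paper merely asserts the reasoning is ``similar''), which is a point in your favour rather than a gap.
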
 
\begin{proof} See Appendix~\ref{sec:proof-hoist-trans}.
\end{proof}

\noindent A similar program transformation has been given in~\cite{Nori:2014}. Whereas 
they use random assignments to introduce randomization in their programming 
model, we use probabilistic choices.  Consequently, they can hoist \Observe statements 
only until the occurrence of a random assignment, while we are able to hoist \Observe
statements through probabilistic choices and completely remove them from
programs. Another difference is that their semantics only accounts for terminating 
program behaviors and thus can guarantee the correctness of the program 
transformation for terminating behaviors only. Our semantics is more expressive and
enables establishing the correctness of the program transformation for non--terminating
program behavior, too.
%
%
\begin{figure*}[ht]
\hrule
$$
\begin{array}{l@{\;\;}c@{\;\;}l}
\\[-5pt]
\Tr(\Skip, f)  &=&  (\Skip, f) \\[3pt]
\Tr(\Abort, f)  &=& (\Abort, \CteFun{1}) \\[3pt]
\Tr(\Ass{x}{E}, f)  &=&  (\Ass{x}{E}, f [E/x]) \\[3pt]
\Tr(\Observe \, G, f)  &=&  (\Skip, \ToExp{G} \cdot f) \\[3pt]
\Tr(\Cond{G}{P}{Q}, f)  &=& (\Cond{G}{P'}{Q'}, 
\ToExp{G} \cdot f_P + \ToExp{\lnot G} \cdot f_Q) \\
&& \text{where }  (P',f_P) = \Tr(P,f), \; 
(Q',f_Q) = \Tr(Q,f)  \\[3pt]
\Tr(\PChoice{P}{p}{Q}, f)  &=&  (\PChoice{P'}{p'}{Q'}, p \cdot f_P +
(\CteFun{1}{-}p) \cdot f_Q) \\
&& \text{where }  (P',f_P) = \Tr(P,f), \; (Q',f_Q) =
\Tr(Q,f), \text{and } p' = \frac{p \cdot f_P }{p \cdot f_P + (\CteFun{1}{-}p) \cdot f_Q} \\[3pt]
\Tr(\WhileDo{G}{P}, f) & = & (\WhileDo{G}{P'}, f') \\
&& \text{where }  f' = \gfp X \mydot \left( \ToExp{G} \cdot (\pi_2 \circ\Tr) (P,X) + \ToExp{\lnot
  G} \cdot f \right), \text{and }  (P',\ourunderscore) = \Tr(P,f') \\[3pt]
%
\Tr(P;Q, f)  &=& (P';Q', f'') ~ \text{where} ~ (Q',f') =
\Tr(Q,f), \; (P',f'') = \Tr(P,f') 
\end{array}
$$
\caption{Program transformation for eliminating \Observe statements in \cfpGCL.}
\label{fig:prog-transf}
\end{figure*}

\subsection{Replacing Observations by Loops}\label{sec:obs-to-loop}
For semantics that normalize with respect to the terminating behavior of programs, \Observe statements can  readily be replaced by a loop~\cite{DBLP:conf/sigsoft/ClaretRNGB13,Hur:2014}.
%
%
%
%
In our setting a more intricate transformation is required to eliminate
observations from programs. Briefly stated, the idea is to restart a violating
run from the initial state until it satisfies all encountered observations.
To achieve this we consider a fresh variable \continue and
transform a given program $P\in\cfpGCL$ into a new program $P'$ as described below: 
\needspace{2\baselineskip}
\begin{align*}
\Observe\, G \quad\to&\quad \Cond{\neg G}{\Ass{\continue}{\true}}{\Skip}\\
\Abort \quad\to&\quad \Cond{\neg \continue}{\Abort}{\Skip} \\
\WhileDo{G}{\ldots} \quad\to&\quad \WhileDo{G \wedge \neg \continue}{\ldots}
\end{align*}
For conditional and probabilistic choice, we apply the above rules recursively
to the subprograms.

The aim of the transformation is twofold. First, the program $P'$ flags the
violation of an \Observe statement through the variable \continue. If a violation
occurs, $\continue$ is set to \true while in contrast to the original program we
continue the program execution. As a side effect, we may introduce some
subsequent diverging behavior which would not be present in the original
program (since the execution would have already been blocked). The second aim of the
transformation is to avoid this possible diverging--behavior. This is achieved
by blocking \While--loops and \Abort statements once $\continue$ is set to
$\true$.


Now we can get rid of the observations in $P$ by repeatedly executing $P'$ from
the same initial state till \continue is set to \false (which would intuitively
correspond to $P$ passing all its observations).

This is implemented by program $P''$ below:
\begin{align*}
{}&s_1,\ldots,s_n \coloneqq x_1,\ldots,x_n;~\continue\coloneqq\true;\\
{}&\While(\continue)\,\{\,x_1,\ldots,x_n \coloneqq s_1,\ldots,s_n;~P'\,\}
\end{align*}
Here, $s_1,\ldots,s_n$ are fresh variables and $x_1,\ldots,x_n$ are all program
variables of $P$.  The first assignment stores the initial state in the
variables $s_i$ and the first line of the loop body, ensures that the loop
always starts with the same (initial) values.

\begin{theorem}[]
\label{thm:sugar}
Let programs $P$ and $P''$ be as above. Then 
\[\qcwp[P](f)= \wp[P''](f)\enspace.\]
\end{theorem}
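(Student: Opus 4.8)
The plan is to read $P''$ as a rejection-sampling loop. I read the loop body $B$ as restoring $x_1,\dots,x_n$ from $s_1,\dots,s_n$ \emph{and} resetting $\continue$ to $\false$ before each execution of $P'$; this reset is exactly what makes the guarded $\Abort$ and $\While$ inside $P'$ reproduce the behaviour of $P$. The operational picture I would first record is: started in a state with $\continue=\false$, a run of $P'$ ends with $\continue=\false$ precisely on the runs corresponding to \emph{terminating} runs of $P$ that pass all observations (and then the final values of $x_1,\dots,x_n$ coincide with those of $P$); it ends with $\continue=\true$ precisely on the runs where $P$ would have been blocked by a false \Observe; and it diverges precisely on the divergent runs of $P$ (the $\neg\continue$-guards prevent any new divergence once $\continue$ has been set).

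The technical core is a single invariant, proved by induction on the structure of $P\in\cfpGCL$. Writing $c$ for any expectation depending only on the frozen copies $s_1,\dots,s_n$ (hence read-only for $P'$) and $h_0\in\Ex$ for any expectation not mentioning $\continue$, I would establish
\begin{align*}
\wp[P'](\ToExp{\neg\continue}\cdot h_0 + \ToExp{\continue}\cdot c)\big|_{\continue=\false} &= \wp[P](h_0) + \big(\CteFun{1} - \wlp[P](\CteFun{1})\big)\cdot c,\\
\wp[P'](\ToExp{\neg\continue}\cdot h_0 + \ToExp{\continue}\cdot c)\big|_{\continue=\true} &= c.
\end{align*}
The second identity is immediate: with $\continue=\true$ every transformed \Observe, \Abort and \While degenerates to \Skip or an instant loop exit, so $P'$ terminates without touching $\continue$ or $c$. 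The base cases $\Skip$, $\Ass{x}{E}$, $\Observe\,G$ (using $\wlp[\Observe\,G](\CteFun{1})=\ToExp{G}$) and $\Abort$ (using $\wlp[\Abort](\CteFun{1})=\CteFun{1}$) are direct, while sequential composition, conditional and probabilistic choice follow from the induction hypotheses together with linearity of \wp and the fact that $c$ is read-only for $P'$ (so \wp carries it through as a scalar).

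The hard part is the loop case $P=\WhileDo{G}{P_0}$, transformed to $P'=\WhileDo{G\wedge\neg\continue}{P_0'}$: here a single least fixed point of $\wp[P']$ must be shown to encode \emph{simultaneously} the least fixed point defining $\wp[\WhileDo{G}{P_0}](h_0)$ (the numerator) and the greatest fixed point defining $\wlp[\WhileDo{G}{P_0}](\CteFun{1})$ (the denominator). I expect this to demand the same intertwined lfp/gfp reasoning used for Theorem~\ref{thm:cwp-decuop}, proving the invariant for the loop by a nested fixed-point argument from the induction hypothesis on $P_0$; this is where the main effort lies.

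Finally I would assemble $\wp[P''](f)$. Since $\Init$ stores the initial valuation $v$ into $s$ and sets $\continue=\true$, it suffices to evaluate the loop functional $\Psi(X)=\ToExp{\continue}\cdot\wp[B](X)+\ToExp{\neg\continue}\cdot f$ at a state with $\continue=\true$ and $s=v$. Writing $R=\wp[P](f)(v)$ and $q=\big(\CteFun{1}-\wlp[P](\CteFun{1})\big)(v)$, Kleene iteration $Y_{n+1}=\Psi(Y_n)$ from $Y_0=\CteFun{0}$ keeps each $Y_n$ of the invariant's form $\ToExp{\neg\continue}\cdot f+\ToExp{\continue}\cdot y_n$ with $y_n$ a function of $s$; applying the invariant with $h_0=f$ and $c=y_n$ gives the recurrence $y_{n+1}=R+q\,y_n$. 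Under the feasibility hypothesis $\wlp[P](\CteFun{1})(v)>0$ (\ie $q<1$), continuity of \wp yields $\wp[P''](f)(v)=\sup_n y_n=R/(1-q)=\wp[P](f)(v)/\wlp[P](\CteFun{1})(v)$, which equals $\qcwp[P](f)(v)$ by Theorem~\ref{thm:cwp-decuop}. The principal obstacle throughout remains the loop case of the invariant, precisely because of the coupled least/greatest fixed points.
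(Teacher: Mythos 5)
Your proposal is correct in outline but takes a genuinely different route from the paper. The paper argues operationally: it converts $\qcwp[P](f)(\sigma)$ into the conditional expected reward $\CExpRew{\Rdtmc{\sigma}{f}{P}}{\Finally \sinklabel}{\neg\Finally\bad}$ via Theorem~\ref{thm:correspondence:cwp}, asserts ``by construction'' that this equals the conditional expected reward of $P'$ with condition $\neg\Finally\continue$, applies the identity $\frac{r}{1-a}=\sum_{i\geq 0}a^{i}r$ to the path measure, and reads the resulting double sum as the expected reward of the terminating paths of $P''$, converting back with Lemma~\ref{lem:ExpRew_is_wp}. You stay entirely on the denotational side: your invariant for $P'$ is a precise, inductively provable version of the step the paper dispatches with ``by construction'', and the $\nicefrac{1}{1-q}$ normalization emerges from Kleene iteration of the outer loop rather than from path counting. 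What your route buys is rigour exactly where the paper is informal; what it costs is that the hardest case --- the loop case of your invariant, where a single $\lfp$ for $\wp[P']$ must simultaneously encode the $\lfp$ defining $\wp[P](h_0)$ and the $\gfp$ defining $\wlp[P](\CteFun{1})$ --- is only gestured at. Your candidate fixed point is indeed a fixed point of the relevant functional, but showing it is the \emph{least} one is precisely the intertwined lfp/gfp argument you defer to, and it must actually be carried out.

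Two further points to make explicit. First, the sequencing case (and the verification in the loop case) silently uses the identity $\wlp[Q](g)-\wp[Q](g)=\wlp[Q](\CteFun{1})-\wp[Q](\CteFun{1})$ for all $g\in\BEx$, i.e.\ that the ``divergence mass'' of $Q$ does not depend on the post-expectation; this holds for \cfpGCL but does not follow from linearity alone and needs its own small induction. Second, your reading of the loop body of $P''$ as resetting $\continue$ to $\false$ is the right one (it matches Example~\ref{ex:transform} and without it the theorem is false), and your feasibility assumption $\wlp[P](\CteFun{1})(\sigma)>0$ is equally needed by the paper's own proof, whose geometric series otherwise diverges; flagging both is a strength of your write-up, not a defect.
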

\begin{proof}
See Appendix \ref{sec:proof-sugar}.
\end{proof}
\begin{example}
\label{ex:transform}
Consider the following \cpGCL program:
\begin{align*}
&{}\PChoice{x\coloneqq 0}{p}{x\coloneqq 1};~\PChoice{y\coloneqq 0}{p}{y\coloneqq 1}\\
&{}\Observe\, x \neq y;
\end{align*}
We apply the program transformation to it and obtain:
\begin{align*}
{}&s_1,s_2 \coloneqq x,y;~\continue\coloneqq\true;\\
{}&\While(\continue)\{\\
{}&\quad x,y \coloneqq s_1,s_2;~\continue \coloneqq \false;\\
&{}\quad\PChoice{\Ass x  0}{p}{\Ass x 1};\\
&{}\quad\PChoice{\Ass y 0}{p}{\Ass y 1};\\
&{}\quad\If(x = y)\{\, \continue \coloneqq \true\}\\
{}&\}
\end{align*}
This program is simplified by a data flow analysis: The variables $s_1$ and $s_2$ are irrelevant because $x$ and $y$ are overwritten in every iteration.
Furthermore, there is only one observation so that its predicate can be pushed directly into the loop's guard.
Then the initial values of $x$ and $y$ may be arbitrary but they must be equal to make sure the loop is entered.
This gives the final result
\begin{align*}
{}&x,y \coloneqq 0,0;\\
{}&\While(x = y)\{\\
&{}\quad\PChoice{x \coloneqq 0}{p}{x \coloneqq 1};~\PChoice{y \coloneqq 0}{p}{y \coloneqq 1}\\
{}&\}
\end{align*}
This program is a simple algorithm that repeatedly uses a biased coin to simulate an unbiased coin flip.
A proof that $x$ is indeed distributed uniformly over $\{0,1\}$ has been previously shown \eg in~\cite{DBLP:conf/qest/GretzKM13}.
\end{example}
Theorem~\ref{thm:sugar} shows how to define and effectively calculate the conditional expectation using a straightforward program transformation and the well established notion of \wp.
However in practice it will often be infeasible to calculate the fixed point of the outer loop or to find a suitable loop invariant -- even though it exists. 

\subsection{Replacing Loops by Observations}\label{sec:iid}
In this section we provide an overview on how the aforementioned result can be ``applied backwards'' in order to replace a loop by an \Observe statement. 
This is useful as it is easier to analyze a loop--free program with observations than a program with loops for which fixed points need to be determined.

The transformation presented in Section~\ref{sec:obs-to-loop} yields programs of a certain form: In every loop iteration the variable values are initialized independently from their values after the previous iteration. Hence the loop iterations generate a sequence of program variable valuations that are \emph{independent and identically distributed} (iid loop), cf. Example~\ref{ex:transform} where no ``data flow'' between iterations of the loop occurs.

In general, if $\Loop = \While(G)\{P\}$ is an iid loop we can obtain a program $Q = P; \Observe\, \neg G$ with
\[\wp[\Loop](f) = \qcwp[Q](f)\enspace\]
for any expectation $f\in\Ex$.
To see this, apply Theorem~\ref{thm:sugar} to program $Q$.
Let the resulting program be \Loop'.
As in Example~\ref{ex:transform}, note that there is only one \Observe statement at the end of \Loop' and furthermore there is no data flow between iterations of \Loop'.
Hence by the same simplification steps we arrive at the desired program \Loop.

\subsection{The \texttt{Crowds} Protocol}\label{sec:crowds}
To demonstrate the applicability of the \cwp-semantics to a practical example, consider the \texttt{Crowds}-protocol~\cite{DBLP:journals/tissec/ReiterR98}.
A set of nodes forms a fully connected network called the \emph{crowd}.
Crowd members would like to exchange messages with a server without revealing their identity to the server.
To achieve this, a node \emph{initiates communication} by sending its message to a randomly chosen crowd member, possibly itself.
Upon receiving a message a node probabilistically decides to either \emph{forward} the message once again to a randomly chosen node in the network or to relay it to the server directly.
A commonly studied attack scenario is that some malicious nodes called \emph{collaborators} join the crowd and participate in the protocol with the aim to reveal the identity of the sender.
The following \cpGCL-program $P$ models this protocol where $p$ is the forward probability and $c$ is the fraction of collaborating nodes in the crowd. The initialization corresponds to the communication initiation.
\begin{align*}
\Init:\quad{}&\PChoice{\Ass {\mathit{intercepted}} {1}}{c}{\Ass {\mathit{intercepted}} {0}};\\
{}&\Ass {\mathit{delivered}} {0};~\Ass {\mathit{counter}} {1}\\
\Loop:\quad{}&\While(\mathit{delivered}=0)\,\{\\
{}&\qquad \bigl\{\Ass {\mathit{counter}} {\mathit{counter}+1};\\ 
{}&\qquad \PChoice{\Ass {\mathit{intercepted}} {1}}{c}{\Skip}\bigr\}\\
{}&\qquad    [p]\\
{}&\qquad    \{\Ass {\mathit{delivered}} {1}\}\\
{}&    \};\\
{}&\Observe ({\mathit{counter}\leq k})
\end{align*}
Our goal is to determine the probability of a message not being intercepted by a collaborator.
We condition this by the observation that a message is forwarded at most $k$ times.

Note that the operational semantics of $P$ produce an \emph{infinite parametric RMC} since the value of $k$ is fixed but arbitrary.
Using Theorem~\ref{thm:cwp-decuop} we express the probability that a message is not intercepted given that it was rerouted no more than $k$ times by
\begin{equation}
\qcwp[P]([\neg\textit{intercepted}]) = \frac{\wp[P]([\neg\textit{intercepted}])}{\wlp[P](\CteFun{1})}
\label{eqn:goal}
\end{equation}
The computation of this quantity requires to find fixed points, \cf Appendix~\ref{sec:calculation} for details.
As a result we obtain a closed form solution parametrized in $p$, $c$, and $k$:
\begin{align*}
(1-c)(1-p)\frac{1-(p(1-c))^k}{1-p(1-c)}\cdot\frac{1}{1-p^k}
\end{align*}

The automation of such analyses remains a challenge and is part of ongoing and future work.

%
\section{Denotational Semantics for Full \cpGCL}\label{sec:nd}

In this section we argue why (under mild assumptions) it is not possible to come up with a denotational semantics in the style of conditional pre--expectation transformers (CPETs for short) for full \cpGCL.
To show this, it suffices to consider a simple fragment of \cpGCL containing only assignments, observations, probabilistic and non--deterministic choices. 
Let $x$ be the only program variable that can be written or read in this fragment.
We denote this fragment by $\cpGCL^{-}$. 
Assume $D$ is some appropriate domain for \emph{representing} conditional expectations of the program variable $x$ with respect to some \emph{fixed} initial state $\sigma_0$ and let $\llbracket \:\cdot\: \rrbracket \colon D \rightarrow \mathbb R \cup \{\bot\}$ be an interpretation function such that for any $d \in D$ we have that $\llbracket d \rrbracket$ \emph{is equal to} the (possibly undefined) conditional expected value of $x$.

\begin{definition}[Inductive CPETs]
\label{def:inductiveCPETs}
A \emph{CPET} is a function $\cwp^*\colon \cpGCL^{-} \rightarrow D$ such that for any $P \in \cpGCL^{-}$, $\llbracket \cwp[P] \rrbracket = \CExpRew{\Rmdp {\sigma_0} x P}{\Finally \sinklabel}{\neg\Finally\bad}$.
$\cwp^*$ is called \emph{inductive}, if there exists some function $\mathcal K \colon \cpGCL^{-} \times [0,\, 1] \times \cpGCL^{-} \rightarrow D$ such that for any $P_1, P_2\in\cpGCL^{-}$,
\begin{align*}
\cwp^*[\PChoice {P_1} p {P_2}] ~=~ \mathcal K(\cwp^*[P_1],\, p,\, \cwp^*[P_2])~,
\end{align*}
and some function $\mathcal N\colon \cpGCL^{-} \times \cpGCL^{-} \rightarrow D$ with
\begin{align*}
\cwp^*[\NDChoice{P_1}{P_2}] ~=~ \mathcal N(\cwp^*[P_1],\, \cwp^*[P_2])~,
\end{align*}
where $\forall d_1,d_2 \in D\colon \mathcal N(d_1,\, d_2)  ~\in~ \{d_1,\, d_2\}$.
\end{definition}
\noindent
This definition suggests that the conditional pre--ex{\-}pec{\-}ta{\-}tion of $\PChoice{P_1}{p}{P_2}$ is determined only by the conditional pre--expectation of $P_1$, the conditional pre--expectation of $P_2$, and the probability $p$.
Furthermore the above definition suggests that the conditional pre--expectation of $\NDChoice{P_1}{P_2}$ is also determined by the conditional pre--expectation of $P_1$ and the conditional pre--expectation of $P_2$ only. 
Consequently, the non--deterministic choice can be resolved by replacing it either by $P_1$ or $P_2$. 
While this might seem like a strong limitation, the above definition is compatible with the interpretation of non--deterministic choice as demonic choice: 
The choice is deterministically driven towards the worst option.
The requirement $N(d_1,\, d_2)  ~\in~ \{d_1,\, d_2\}$ is also necessary for interpreting non--deterministic choice as an abstraction where implementational details are not important.

As we assume a fixed initial state and a fixed post--expectation, the non--deterministic choice turns out to be deterministic once the pre--expectations of $P_1$ and $P_2$ are known.
Under the above assumptions (which do apply to the \wp and \wlp transformers) we claim:

\begin{theorem}
\label{thm:nocpet}
There exists no inductive CPET.
\end{theorem}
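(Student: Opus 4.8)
The plan is to argue by contradiction: assume an inductive CPET $\cwp^*$ exists, together with the combinators $\mathcal K$ and $\mathcal N$ of Definition~\ref{def:inductiveCPETs}. The leverage is the requirement $\mathcal N(d_1,d_2)\in\{d_1,d_2\}$: it forces the resolution of a non--deterministic choice $\NDChoice{P_1}{P_2}$ to depend \emph{only} on the two values $\cwp^*[P_1],\cwp^*[P_2]$, and hence to be the \emph{same} regardless of the surrounding program. The genuine demonic resolution is instead context--dependent: whether the minimizing scheduler should take the left or the right branch depends on the normalization (the feasibility probabilities) contributed by the rest of the program, not merely on the conditional expectations of the two branches. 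I would exhibit two programs in which these two facts collide.

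Concretely, I would fix the two branches
\[
P_1 ~=~ \Ass{x}{1}, \qquad P_2 ~=~ \PChoice{\Ass{x}{0}}{\nicefrac12}{\Observe~\false},
\]
and two \emph{purely probabilistic} contexts $C[\bullet]=\PChoice{\bullet}{\nicefrac12}{S}$ and $C'[\bullet]=\PChoice{\bullet}{\nicefrac12}{S'}$ with $S=\Ass{x}{0}$ and $S'=\PChoice{\Ass{x}{10}}{\nicefrac12}{\Observe~\false}$. Using the decoupling result (Theorem~\ref{thm:cwp-decuop}) together with the correspondence theorem (Theorem~\ref{thm:correspondence:cwp}), the conditional expected reward of a fully probabilistic program of this shape is simply $\wp[\cdot](x)/\wlp[\cdot](\CteFun{1})$, which is routine to evaluate. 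The point of the construction is that the demonically preferred branch \emph{flips}: in context $C$ one obtains conditional expected rewards $\nicefrac12$ for the left resolution and $0$ for the right resolution (so the demon prefers $P_2$), whereas in context $C'$ one obtains $4$ for the left resolution and $5$ for the right resolution (so the demon prefers $P_1$). This is a Simpson--paradox--style effect: because $P_1$ and $P_2$ differ in their feasibility weight $\wlp[\cdot](\CteFun{1})$, changing the ``anchor'' branch $S$ reorders the two mixtures.

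With the flip in hand the contradiction is immediate. Since the only non--determinism in $C[\NDChoice{P_1}{P_2}]$ is this single inner choice, the true demonic value is $\min$ (with the $\nicefrac00<0$ convention) of the two fully probabilistic resolutions $C[P_1]$ and $C[P_2]$, and likewise for $C'$. Both contexts are built solely from the probabilistic--choice operator, so they are governed by $\mathcal K$; hence replacing the inner $\NDChoice{P_1}{P_2}$ by whichever branch $\mathcal N$ selects preserves the computed value, e.g.\ if $\mathcal N(\cwp^*[P_1],\cwp^*[P_2])=\cwp^*[P_1]$ then $\cwp^*[C[\NDChoice{P_1}{P_2}]]=\mathcal K(\cwp^*[P_1],\nicefrac12,\cwp^*[S])=\cwp^*[C[P_1]]$, so these two programs have equal interpretation. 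Correctness of the CPET then forces $\CExpRew{\Rmdp{\sigma_0}{x}{C[\NDChoice{P_1}{P_2}]}}{\Finally\sinklabel}{\neg\Finally\bad}=\nicefrac12$, whereas its genuine value is $\min\{\nicefrac12,0\}=0$ --- a contradiction. Symmetrically, if $\mathcal N$ selects $\cwp^*[P_2]$, the same reasoning applied in context $C'$ forces the value $5$ while the true demonic value is $\min\{4,5\}=4$. Either admissible value of $\mathcal N(\cwp^*[P_1],\cwp^*[P_2])$ is therefore wrong in one of the two contexts, so no inductive CPET can exist.

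The main obstacle is constructing the flip of the second paragraph: one must find branches and contexts expressible using \emph{only} assignments, observations and probabilistic choice --- so that the contexts fall under $\mathcal K$ and no unjustified compositionality of sequencing or of \Observe is invoked --- for which the demonically preferred branch genuinely reverses. Achieving this reversal hinges on the branches differing in their feasibility probability rather than merely in their conditional expectation, which is precisely the phenomenon that renders conditioning non--compositional in the presence of demonic non--determinism.
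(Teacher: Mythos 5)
Your proof is correct and rests on exactly the same idea as the paper's: because $\mathcal N$ sees only $\cwp^*[P_1]$ and $\cwp^*[P_2]$, it must resolve the non--deterministic choice context--independently, whereas the genuinely demonic resolution depends on the feasibility mass contributed by the surrounding program, and you exhibit a flip across two probabilistic--choice contexts to force the contradiction (your numbers $\nicefrac12$ vs.\ $0$ and $4$ vs.\ $5$ all check out). The paper's witness is a single program with an $\varepsilon$--perturbed branch, using the standalone value of the sub--program as one of the two ``contexts,'' but this is only a cosmetic difference from your symmetric two--context construction.
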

%
\begin{proof}
The proof goes by contradiction.
Consider the program $P = \PChoice{P_1}{\nicefrac 1 2}{P_5}$ with 
\begin{align*}
P_1 ~&=~ \Ass x 1\\
P_5 ~&=~ \NDChoice{P_2}{P_4}\\
P_2 ~&=~ \Ass x 2\\
P_4 ~&=~ \PChoice{\Observe~\texttt{false}}{\nicefrac 1 2}{P_{2+\varepsilon}}\\
P_{2+\varepsilon} ~&=~ \Ass x {2 + \varepsilon}~,
\end{align*}
where $\varepsilon > 0$.
A schematic depiction of the RMDP $\Rmdp{\sigma_0}{x}{P}$ is given in Figure~\ref{fig:rmdpschem_inpaper}.
\begin{figure}[b]
	\centering
    \scalebox{.75}{\begin{tikzpicture}[->,>=stealth',shorten >=1pt,node distance=2.5cm,semithick]
\tikzstyle{every state}=[draw=none]
  \draw[white, use as bounding box] (-2.1,1.05) rectangle (5.8,-6.2);
   \node [state, initial, initial text=, initial where=above] (P) {$\langle  P \rangle$};  
   \node [state] (P1) [below left of=P, label={[yshift=0.25cm, gray] 270:$1$}] {$\langle P_1 \rangle$};
   \node [state] (P5) [below right of=P] {$\langle P_5 \rangle$};
   \node [state] (P2) [below left of=P5, label={[yshift=0.25cm, gray] 270:$2$}] {$\langle P_2 \rangle$};
   \node [state] (P4) [below right of=P5] {$\langle P_4 \rangle$};
   \node [state] (bad) [below left of=P4] {$\bad$};
   \node [state] (P2e) [below right of=P4, label={[yshift=0.25cm, gray] 270:$2 + \varepsilon$}] {$\langle P_{2 + \varepsilon} \rangle$};

\path [] (P) edge [] node [above left] {$\frac{1}{2}$} (P1);
\path [] (P) edge [] node [above right] {$\frac{1}{2}$} (P5);
\path [] (P5) edge [dotted] (P2);
\path [] (P5) edge [dotted] (P4);
\path [] (P4) edge [] node [above left] {$\frac{1}{2}$} (bad);
\path [] (P4) edge [] node [above right] {$\frac{1}{2}$} (P2e);
\end{tikzpicture}}
    \caption{Schematic depiction of the RMDP $\Rmdp{\sigma_0}{x}{P}$}
    \label{fig:rmdpschem_inpaper}
\end{figure}
Assume there exists an inductive CPET $\cwp^*$ over some appropriate domain $D$. Then,
\begin{align*}
\cwp^*[P_1] ~&=~ d_1, \text{ with } \llbracket d_1 \rrbracket = 1\\
\cwp^*[P_2] ~&=~ d_2, \text{ with } \llbracket d_2 \rrbracket = 2\\
\cwp^*[P_{2 + \varepsilon}] ~&=~ d_{2 + \varepsilon}, \text{ with } \llbracket d_{2 + \varepsilon} \rrbracket = {2 + \varepsilon}\\
\cwp^*[\texttt{observe false}] ~&=~ \mathfrak{of}, \text{ with } \llbracket \mathfrak{of} \rrbracket = \bot
\end{align*}
for some appropriate $d_1,\, d_2 ,\, d_{2 + \varepsilon},\, \mathfrak{of} \in D$.
By Definition \ref{def:inductiveCPETs}, $\cwp^*$ being inductive requires the existence of a function $\mathcal K$, such that
\begin{align*}
\cwp^*[P_4] ~=~ &\mathcal{K}\left(\cwp^*[\Observe~\texttt{false}],\, \nicefrac{1}{2},\, \cwp^*[P_{2+\varepsilon}]\right)\\
~=~ &\mathcal{K}(\mathfrak{of},\, \nicefrac 1 2,\, d_{2 + \varepsilon})~.
\end{align*}
In addition, there must be an $\mathcal N$ with:
\begin{align*}
\cwp^*[P_5] ~=~ &\mathcal{N}\left(\cwp^*[P_2],\, \cwp^*[P_4]\right)\\
~=~ &\mathcal N (d_2,\, \mathcal{K}(\mathfrak{of},\, \nicefrac 1 2,\, d_{2 + \varepsilon}))~.
\end{align*}
Since $P_4$ is a probabilistic choice between an infeasible branch and $P_{2 + \varepsilon}$, the expected value for $x$ has to be rescaled to the feasible branch. 
Hence $P_4$ yields $\llbracket \cwp^*[P_4] \rrbracket = 2 + \varepsilon$, whereas $\llbracket \cwp^*[P_2] \rrbracket = 2$. 
Thus:
\begin{align}
\llbracket d_2 \rrbracket ~\lneq~ \llbracket \mathcal{K}(\mathfrak{of},\, \nicefrac 1 2,\, d_{2 + \varepsilon})\rrbracket \label{x2neqprob_inpaper}
\end{align}
As non--deterministic choice is demonic, we have:
\begin{align}
\cwp^*[P_5] ~=~ \mathcal N (d_2,\, \mathcal{K}(\mathfrak{of},\, \nicefrac 1 2,\, d_{2 + \varepsilon})) ~=~ d_{2}\label{P4eqx2_inpaper}
\end{align}
As $\mathcal{N}\left(\cwp^*[P_2],\, \cwp^*[P_4]\right) \in \{\cwp^*[P_2],\, \cwp^*[P_4]\}$ we can resolve non--determinism in $P$ by either rewriting $P$ to $\PChoice{P_1}{\nicefrac 1 2}{P_2}$ which gives 
\begin{align*}
\llbracket \cwp^*\PChoice{P_1}{\nicefrac 1 2}{P_2} \rrbracket ~=~ &\frac{3}{2}~,
\intertext{or we rewrite $P$ to $\PChoice{P_1}{\nicefrac 1 2}{P_4}$, which gives}
\llbracket \cwp^*\PChoice{P_1}{\nicefrac 1 2}{P_4} \rrbracket ~=~ &\frac{4 + \varepsilon}{3}~.
\end{align*} 
For a sufficiently small $\varepsilon$ the second option should be preferred by a demonic scheduler. 
This, however, suggests:
\begin{align*}
\cwp^*[P_5] ~=~ &\mathcal N (d_2,\, \mathcal{K}(\mathfrak{of},\, \nicefrac 1 2,\, d_{2 + \varepsilon}))\\
=~ &\mathcal{K}(\mathfrak{of},\, \nicefrac 1 2,\, d_{2 + \varepsilon})
\end{align*}
Together with Equality (\ref{P4eqx2_inpaper}) we get $d_2 = \mathcal{K}(\mathfrak{of},\, \nicefrac 1 2,\, d_{2 + \varepsilon})$, which implies $\llbracket d_2 \rrbracket = \llbracket \mathcal{K}(\mathfrak{of},\, \nicefrac 1 2,\, d_{2 + \varepsilon})\rrbracket$.
This is a contradiction to Inequality (\ref{x2neqprob_inpaper}).
\end{proof}
As an immediate corollary of Theorem \ref{thm:nocpet} we obtain the following statement:
\begin{corollary}
\label{cor:cwpnondet}
We cannot extend the \cwp rules in Figure \ref{fig:cwp} for non--deterministic programs such that Theorem \ref{thm:correspondence:cwp} extends to full \cpGCL.
\end{corollary}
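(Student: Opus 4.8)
The plan is to argue by contradiction and reduce the statement directly to Theorem~\ref{thm:nocpet}. Suppose the rules of Figure~\ref{fig:cwp} could be augmented with a clause for the non--deterministic choice $\NDChoice{\cdot}{\cdot}$ (all other clauses left untouched) so that the resulting operator $\qcwp$ satisfies the correspondence theorem (Theorem~\ref{thm:correspondence:cwp}) on all of \cpGCL, and in particular on the fragment $\cpGCL^{-}$ from Definition~\ref{def:inductiveCPETs}. I would then manufacture an inductive CPET out of this operator, which Theorem~\ref{thm:nocpet} forbids.

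To build the CPET I fix the initial state $\sigma_0$ and the post--expectation $x$, let $D \eqdef \Reals_{\geq 0}^{\infty} \times [0,\,1]$, set $\cwp^*[P] \eqdef \big(\cwp_1[P](x,\CteFun{1})(\sigma_0),\, \cwp_2[P](x,\CteFun{1})(\sigma_0)\big)$, and define $\llbracket (a,b)\rrbracket \eqdef a/b$ with the convention $0/0 = \bot$ (matching ``$\tfrac 0 0$ undefined''). By the assumed correspondence, $\llbracket \cwp^*[P]\rrbracket = \qcwp[P](x)(\sigma_0) = \CExpRew{\Rmdp{\sigma_0}{x}{P}}{\Finally\sinklabel}{\neg\Finally\bad}$, so $\cwp^*$ is a CPET. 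Inductivity for probabilistic choice is immediate: the unchanged convex--combination clause acts pointwise on the two branch values, giving $\cwp^*[\PChoice{P_1}{p}{P_2}] = \mathcal K(\cwp^*[P_1], p, \cwp^*[P_2])$ with $\mathcal K\big((a_1,b_1), p, (a_2,b_2)\big) = \big(p\,a_1 + (1{-}p)\,a_2,\; p\,b_1 + (1{-}p)\,b_2\big)$. A clause for $\NDChoice{\cdot}{\cdot}$ written in the same pointwise style as the rest of Figure~\ref{fig:cwp} likewise depends only on the two branch values, and thus supplies a function $\mathcal N$ with $\cwp^*[\NDChoice{P_1}{P_2}] = \mathcal N(\cwp^*[P_1], \cwp^*[P_2])$.

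The crux, and what I expect to be the main obstacle, is the selection requirement $\mathcal N(d_1,d_2) \in \{d_1,d_2\}$ demanded by Definition~\ref{def:inductiveCPETs}. To establish it I would exploit that the operational RMDP resolves a top--level non--deterministic choice by committing to the $\mathit{left}$ or the $\mathit{right}$ action: a scheduler committing to $\mathit{left}$ induces exactly the RMC of $\Rmdp{\sigma_0}{x}{P_1}$, so the accumulated reward and the probability of avoiding $\bad$ --- hence the \emph{entire} pair --- are inherited from $P_1$, and symmetrically for $\mathit{right}$. Since the demonic scheduler minimizes the quotient over just these two commitments (with $\tfrac 0 0 < 0$ in force), the conditional expected reward of $\NDChoice{P_1}{P_2}$ is realized by one of the two branches. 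For branches that are fully probabilistic --- which is all that is needed, as the counterexample in the proof of Theorem~\ref{thm:nocpet} uses only such branches --- the decoupling of Theorem~\ref{thm:cwp-decuop} together with Lemmas~\ref{lem:ExpRew_is_wp} and~\ref{lem:Pr_not_bad_is_wlp_1} identifies $\cwp^*[P_i]$ precisely with the pair (expected reward, probability of $\neg\Finally\bad$) of $P_i$. Hence the minimizing commitment forces $\cwp^*[\NDChoice{P_1}{P_2}] \in \{\cwp^*[P_1], \cwp^*[P_2]\}$, so $\mathcal N$ may be taken to be this selection.

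With $\mathcal K$ and $\mathcal N$ in hand, $\cwp^*$ is an inductive CPET, contradicting Theorem~\ref{thm:nocpet}; therefore no such extension of the cwp rules exists, which is the claim. The delicate point throughout is the selection step: correspondence \emph{a priori} pins down only the quotient $\llbracket\cwp^*[\NDChoice{P_1}{P_2}]\rrbracket$ and not the representing pair, so one must genuinely invoke the demonic commitment to a single action (and the $\tfrac 0 0 < 0$ convention governing the infeasible $\Observe~\texttt{false}$ branch of the counterexample) to conclude that the representation, not merely its value, is inherited from one branch. As a safeguard I would keep in reserve the direct alternative of feeding the counterexample program $P = \PChoice{P_1}{\nicefrac 1 2}{\NDChoice{P_2}{P_4}}$ from the proof of Theorem~\ref{thm:nocpet} into the hypothetical rules and re--deriving the contradiction between the standalone value of the inner choice and its value inside $P$.
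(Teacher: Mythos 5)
Your overall strategy --- assume the hypothetical extension exists, package it (at the fixed initial state $\sigma_0$ and post--expectation $x$) into an inductive CPET, and contradict Theorem~\ref{thm:nocpet} --- is exactly the reduction the paper has in mind when it calls the corollary ``immediate'', and your choices of $D$, $\llbracket\cdot\rrbracket$ and $\mathcal K$ are fine. The gap sits precisely where you locate the ``crux'': the selection requirement $\mathcal N(d_1,\,d_2)\in\{d_1,\,d_2\}$ cannot be \emph{derived} from the assumed correspondence. Correspondence constrains only the quotient $\llbracket\cwp^*[\NDChoice{P_1}{P_2}]\rrbracket$, and $\llbracket\cdot\rrbracket$ is far from injective on $D=\Reals_{\geq 0}^{\infty}\times[0,\,1]$; the demonic commitment to a single action tells you which \emph{value} the choice must have, not which \emph{pair} the hypothetical rule must output. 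Your fallback of feeding the counterexample of Theorem~\ref{thm:nocpet} into the rules fails for the same reason: assigning to $P_5=\NDChoice{P_2}{P_4}$ the pair $(2b,\,b)$ with $b=\tfrac{1+\varepsilon}{2-\varepsilon}$ yields quotient $2$ (the correct standalone demonic value) and, after the unchanged convex combination with $\cwp^*[P_1]=(1,\,1)$, yields quotient $\tfrac{1+2b}{1+b}=\tfrac{4+\varepsilon}{3}$ for $P$ (the correct demonic value in context), so no contradiction arises from these two contexts for a non--selecting rule.

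The repair is not to prove the selection property but to assume it, as the paper does: Definition~\ref{def:inductiveCPETs} builds $\mathcal N(d_1,\,d_2)\in\{d_1,\,d_2\}$ into what counts as an admissible treatment of demonic non--determinism (``the choice is deterministically driven towards the worst option''), and the corollary is to be read as quantifying over extensions of the \cwp rules that respect this reading, i.e.\ whose clause for $\NDChoice{P_1}{P_2}$ returns, pointwise, one of the two branch pairs. Under that explicitly stated hypothesis your reduction closes at once: $\cwp^*$ is an inductive CPET satisfying the correspondence, contradicting Theorem~\ref{thm:nocpet}. You should therefore either fold the selection property into your statement of what an ``extension of the rules'' means, or accept that the corollary read without it is not established by the counterexample as given.
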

%
%
\noindent
This result is related to the fact that for minimizing conditional (reachability) probabilities in RMDPs positional, \ie history--independent, schedulers are insufficient~\cite{DBLP:conf/tacas/AndresR08}.
Intuitively speaking, if a \emph{history--dependent} scheduler is required, this necessitates the inductive definition of $\cwp^*$ to take the context of a statement (if any) into account.
This conflicts with the principle of an inductive definition.
Investigating the precise relationship with the result of~\cite{DBLP:conf/tacas/AndresR08} requires further study.
%
\section{Conclusion and Future Work}

This paper presented an extensive treatment of semantic issues in probabilistic programs with conditioning.
Major contributions are the treatment of non--terminating programs (both operationally and for weakest
liberal pre--expectations), our results on combining non--determinism with conditioning, as well as the 
presented program transformations.
We firmly believe that a thorough understanding of these semantic issues provides a main cornerstone for enabling 
automated analysis techniques such as loop invariant synthesis~\cite{DBLP:conf/sas/ChakarovS14,DBLP:conf/sas/KatoenMMM10}, 
program analysis~\cite{DBLP:conf/esop/CousotM12} and model checking~\cite{DBLP:conf/tacas/BaierKKM14} 
to the class of probabilistic programs with conditioning.
Future work consists of investigating conditional invariants and a further investigation of non--determinism 
in combination with conditioning.

\section*{Acknowledgment}
This work was supported by the Excellence Initiative of the German federal and state government. Moreover, we would like to thank Pedro d'Argenio and Tahiry Rabehaja for the valuable discussions preceding this paper.

\bibliographystyle{IEEEtran}
\bibliography{IEEEabrv,bibliography}
\clearpage
\appendix
\section{Appendix to ``Conditioning in Probabilistic Programming"} \label{sec:appendix}


\subsection{Continuity of \wp and \wlp}

\begin{lemma}[Continuity of \wp/\wlp]
\label{thm:wp-ext-is-cont}
Consider the extension of \wp and \wlp to \cpGCL given by
\begin{align*}
\wp[\Observe \, G](f) ~&=~ \ToExp{G} \cdot  f\\
 \wlp[\Observe \, G](g) ~&=~ \ToExp{G} \cdot  g~.
\end{align*}
Then for every $P \in \cpGCL$ the expectation transformers $\wp[P]\colon\Ex \To \Ex$ and $\wlp[P]\colon\BEx \To \BEx$ are continuous mappings over $(\Ex,\, \sqsubseteq)$ and $(\BEx,\, \sqsupseteq)$, respectively.
\end{lemma}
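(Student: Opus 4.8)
The plan is to prove both claims simultaneously by induction on the structure of $P$, where continuity of $\wp[P]$ means preservation of least upper bounds of $\sqsubseteq$--directed subsets of $\Ex$, and continuity of $\wlp[P]$ means preservation of least upper bounds of $\sqsupseteq$--directed subsets of $\BEx$ (equivalently, preservation of greatest lower bounds of families that are directed downward in the usual order $\sqsubseteq$). Since both $(\Ex,\sqsubseteq)$ and $(\BEx,\sqsupseteq)$ are directed--complete partial orders with suprema computed pointwise, continuity may be checked pointwise and then lifted to whole expectations. The induction hypothesis asserts continuity of $\wp[P']$ and $\wlp[P']$ for every proper subprogram $P'$ of $P$.

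For the non--recursive cases I would simply observe that each transformer is built from continuity--preserving operations. $\wp[\Skip]$ and $\wlp[\Skip]$ are the identity; $\wp[\Abort]$ and $\wlp[\Abort]$ are the constant maps $\CteFun 0$ and $\CteFun 1$; and $\wp[\Ass x E]$ acts by precomposition with the fixed state update $\sigma \mapsto \sigma\subst x E$, all of which commute with pointwise suprema. The new case $\Observe\, G$ multiplies the post--expectation by the fixed $\{0,1\}$--valued factor $\ToExp G$, which is continuous because multiplication by a fixed expectation distributes over suprema. The remaining composite cases reduce to the facts that a composition of continuous maps is continuous ($P_1;P_2$), and that finite pointwise sums of continuous maps scaled by the fixed factors $\ToExp G$, $\ToExp{\lnot G}$, $p$, $1{-}p$ are again continuous ($\Cond G {P_1}{P_2}$ and $\PChoice{P_1}{p}{P_2}$); each of these invokes only the structural induction hypothesis.

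The crux is the loop $\WhileDo G {P'}$, whose transformer is a fixed point rather than a finite combination. Writing $\Phi_f(X) = \ToExp G \cdot \wp[P'](X) + \ToExp{\lnot G}\cdot f$, the Kleene characterisation gives $\wp[\WhileDo G {P'}](f) = \lfp \Phi_f = \bigsqcup_{n} \Phi_f^{\,n}(\CteFun 0)$. I would first show, by an inner induction on $n$, that the $n$--th approximant $f \mapsto \Phi_f^{\,n}(\CteFun 0)$ is continuous in $f$: the base case is the constant $\CteFun 0$, and the step combines the outer hypothesis (continuity of $\wp[P']$), the inner hypothesis, and the continuity of the finite operations exactly as above. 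Continuity of $f \mapsto \lfp\Phi_f$ then follows by interchanging two suprema: for a $\sqsubseteq$--directed $D \subseteq \Ex$ with supremum $\bigsqcup D$, one has $\lfp\Phi_{\bigsqcup D} = \bigsqcup_n \Phi_{\bigsqcup D}^{\,n}(\CteFun 0) = \bigsqcup_n \bigsqcup_{f\in D} \Phi_f^{\,n}(\CteFun 0) = \bigsqcup_{f\in D}\bigsqcup_n \Phi_f^{\,n}(\CteFun 0) = \bigsqcup_{f\in D}\lfp\Phi_f$, the middle equality being continuity of the approximants and the swap being legitimate because iterated suprema commute in a dcpo. For $\wlp$ the argument is dualised inside $(\BEx,\sqsupseteq)$: the loop transformer is the greatest fixed point in $\sqsubseteq$, hence the least fixed point in the dual order, so Kleene applied in $(\BEx,\sqsupseteq)$ gives $\wlp[\WhileDo G{P'}](g) = \bigsqcup_n^{\sqsupseteq} \Phi_g^{\,n}(\CteFun 1)$ and the identical swap of suprema goes through.

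I expect the loop case to be the main obstacle: one must verify that the approximants are genuinely continuous (not merely monotone) and justify interchanging the ``Kleene'' supremum over $n$ with the supremum over the directed set $D$. The delicate point for $\wlp$ is keeping the order bookkeeping straight, since there ``continuity'' refers to suprema in $\sqsupseteq$ (infima in the usual order) and the fixed point is taken in that dual order; once the dcpo $(\BEx,\sqsupseteq)$ and its pointwise suprema are fixed, the proof is literally the dual of the $\wp$ case.
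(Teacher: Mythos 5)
Your proof is correct and follows the same structural--induction strategy as the paper: the paper's proof of this lemma writes out only the new case $P = \Observe\, G$ (multiplication by the fixed $\{0,1\}$--valued factor $\ToExp{G}$ commutes with directed suprema) and defers every other case, including the loop, to the previously established continuity of \wp/\wlp for \pGCL in Gretz \emph{et al.} Your version is simply self--contained, carrying out all cases explicitly --- in particular the Kleene--iteration and supremum--interchange argument for \While, and its dual in $(\BEx,\sqsupseteq)$ for \wlp --- which is exactly the argument the cited reference supplies.
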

\begin{proof}
For proving the continuity of $\wp$ we have to show that for any directed subset $D\subseteq \Ex$ we have 
\begin{align}
\label{eq:goal_cont_lem}
\sup_{f \in D} \wp[P](f) ~=~ \wp[P]\left(\sup_{f \in D} f\right)~.
\end{align}
This can be shown by structural induction on $P$.
All cases except for the \Observe statement have been covered in \cite{DBLP:journals/pe/GretzKM14}.
It remains to show that Equality (\ref{eq:goal_cont_lem}) holds for $P=\Observe ~G$:
\begin{align*}
\sup_{f \in D} \wp[\Observe ~G](f) 
&~=~ \sup_{f \in D} \ToExp{G} \cdot  f  \\
&~=~ \ToExp{G} \cdot \sup_{f \in D}  f  \\
&~=~ \wp[\Observe \,G](\sup_{f \in D} f) 
\end{align*}
The proof for the liberal transformer \wlp is analogous.
\end{proof}

\subsection{Proof of Theorem~\ref{thm:cwp-decuop}}
\label{sec:proof-thm-sepfix}

\begingroup
\def\thelemma{\ref{thm:cwp-decuop}}
\begin{theorem}[Decoupling of $\cwp / \cwlp$]
For $P \in \cfpGCL$, $f \in \Ex$, and $f',g \in \BEx$:
\begin{align*}
\cwp [P] (f,\, g) ~&=~ \big(\wp[P](f),\, \wlp[P](g)\big)\\
\cwlp [P] (f',\, g) ~&=~ \big(\wlp[P](f'),\, \wlp[P](g)\big)
\end{align*}
\end{theorem}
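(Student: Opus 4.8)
The plan is to prove both identities simultaneously by structural induction on $P$, exploiting that the componentwise operations in Figure~\ref{fig:cwp} let the pair-valued transformers $\cwp$ and $\cwlp$ split along their two coordinates. For the atomic statements the two sides are compared directly: for $\Skip$, $\Ass{x}{E}$, and $\Observe~G$ one has $\wp[P] = \wlp[P]$, so a single base computation settles both the $\cwp$ and the $\cwlp$ claim at once, while for $\Abort$ the pairs $(\CteFun{0}, \CteFun{1})$ and $(\CteFun{1}, \CteFun{1})$ match $\wp[\Abort] = \CteFun{0}$ and $\wlp[\Abort] = \CteFun{1}$ exactly.

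For the non-recursive compound constructs I would apply the induction hypothesis to the immediate subprograms and observe that the assembling operation respects the product structure. For sequential composition I use $\cwp[P_1;P_2] = \cwp[P_1] \circ \cwp[P_2]$: by the hypothesis for $P_2$ the inner application produces $(\wp[P_2](f),\, \wlp[P_2](g))$, and since this pair again lies in $\Ex \times \BEx$ the hypothesis for $P_1$ turns the outer application into $((\wp[P_1]\circ\wp[P_2])(f),\, (\wlp[P_1]\circ\wlp[P_2])(g))$, which is the required $(\wp[P_1;P_2](f),\, \wlp[P_1;P_2](g))$. For $\Cond{G}{P_1}{P_2}$ and $\PChoice{P_1}{p}{P_2}$ the defining right-hand sides are componentwise-linear combinations of the subprogram values, so they commute with the two projections and the claim drops out of the hypothesis.

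The delicate case, which I expect to be the main obstacle, is the loop $\WhileDo{G}{P'}$, defined through fixed points. The key step is to show that the characteristic functional on the product domain factors as a product of the two single-coordinate loop functionals. Applying the hypothesis to the body $P'$ rewrites $\Phi(\hat f, \hat g) = \ToExp{G} \cdot \cwp[P'](\hat f, \hat g) + \ToExp{\lnot G} \cdot (f,g)$ as $(\phi(\hat f),\, \psi(\hat g))$, where $\phi$ is exactly the characteristic functional of the $\wp$-loop for $f$ and $\psi$ that of the $\wlp$-loop for $g$, and, crucially, each factor depends only on its own coordinate. I would then invoke a product fixed-point lemma: for a monotone functional of the form $(\hat f, \hat g) \mapsto (\phi(\hat f), \psi(\hat g))$ the least fixed point with respect to the product order is the pair of the least fixed points of the factors. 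This is immediate — $(\lfp\phi, \lfp\psi)$ is plainly a fixed point, and any fixed point $(a,b)$ forces $a$ to be a fixed point of $\phi$ and $b$ one of $\psi$, hence $\lfp\phi \sqsubseteq a$ and $\lfp\psi \sqsubseteq b$ in the respective orders — while existence of all fixed points is guaranteed by the dcpo structure together with the continuity of $\wp/\wlp$ from Lemma~\ref{thm:wp-ext-is-cont}, so that Kleene iteration applies and $\Phi^n(\bot) = (\phi^n(\bot), \psi^n(\bot))$ with coordinatewise suprema.

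The remaining work is the order bookkeeping that aligns these fixed points with $\wp$ and $\wlp$. For $\cwp[\WhileDo{G}{P'}]$ the product order is $(\sqsubseteq,\sqsupseteq)$, so the first coordinate yields $\lfp_{\sqsubseteq}\phi = \wp[\WhileDo{G}{P'}](f)$, while the second yields the $\sqsupseteq$-least fixed point of $\psi$; since a $\sqsupseteq$-least fixed point is a $\sqsubseteq$-greatest fixed point, this equals $\gfp_{\sqsubseteq}\psi = \wlp[\WhileDo{G}{P'}](g)$, exactly as anticipated by the footnote to Figure~\ref{fig:cwp}. For $\cwlp[\WhileDo{G}{P'}]$ the order is $(\sqsubseteq,\sqsubseteq)$ and a greatest fixed point is taken, so both coordinates return $\gfp_{\sqsubseteq}$ and hence the respective $\wlp$-loops, which matches the symmetric shape of the $\cwlp$ identity. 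Collecting the atomic, compound, and loop cases then closes the induction.
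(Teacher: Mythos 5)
Your proof is correct, and its overall skeleton (structural induction, with the non-loop cases handled by the componentwise nature of the operations in Figure~\ref{fig:cwp}) matches the paper's. Where you genuinely diverge is in the loop case. The paper, after using the induction hypothesis to rewrite the characteristic functional $H$ on the product domain and observing that its first projection $H_1$ depends only on $X_1$ and its second projection $H_2$ only on $X_2$, invokes Beki\'{c}'s Theorem to express the least fixed point of $H$ as a pair of nested fixed points, which then collapse because of the independence of the coordinates. You instead prove the decomposition directly: for a monotone functional of the shape $(\hat f,\hat g)\mapsto(\phi(\hat f),\psi(\hat g))$ the least fixed point in the product order is $(\lfp\phi,\lfp\psi)$, either by the two-line minimality argument or by noting $\Phi^n(\bot)=(\phi^n(\bot),\psi^n(\bot))$ and taking coordinatewise suprema via Kleene iteration (continuity coming from the same Lemma~\ref{thm:wp-ext-is-cont} the paper uses). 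This is a more elementary route: Beki\'{c}'s theorem is built for genuinely mutually recursive definitions, and since the mutual dependence is absent here it is stronger machinery than needed, whereas your product lemma is exactly the special case at hand and keeps the proof self-contained. Your order bookkeeping --- identifying the $\sqsupseteq$-least fixed point in the second coordinate with $\gfp_{\sqsubseteq}$ and hence with $\wlp$ of the loop, and treating the $(\sqsubseteq,\sqsubseteq)$-greatest fixed point symmetrically for $\cwlp$ --- agrees with the paper's. The only cost of your approach is that you must state and justify the product fixed-point lemma yourself rather than citing a known theorem, but as you note the justification is immediate.
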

\addtocounter{lemma}{-1}
\endgroup

\begin{proof}
The proof of Theorem \ref{thm:cwp-decuop} goes by induction over all \cfpGCL programs.
For the induction base we have:

\paragraph{The Effectless Program \Skip.} 

For \cwp we have:
\begin{align*}
\cwp[\Skip](f,\, g) ~=~ &(f,\, g)\\
~=~&\big(\wp[\Skip](f),\, \wlp[\Skip](g)\big)
\end{align*}
The argument for \cwlp is completely analogous.

\paragraph{The Faulty Program \Abort.} 

For \cwp we have:
\begin{align*}
\cwp[\Abort](f,\, g) ~=~ &(\ExpZero,\, \ExpOne)\\
~=~&\big(\wp[\Abort](f),\, \wlp[\Abort](g)\big)
\end{align*}
Analogously for \cwlp we have:
\begin{align*}
\cwlp[\Abort](f',\, g) ~=~ &(\ExpOne,\, \ExpOne)\\
~=~&\big(\wlp[\Abort](f'),\, \wlp[\Abort](g)\big)
\end{align*}

\paragraph{The Assignment $\Ass{x}{E}$.} 

For \cwp we have:
\begin{align*}
\cwp[\Ass{x}{E}](f,\, g) ~=~ &(f\subst x E,\, g\subst x E)\\
~=~&\big(\wp[\Ass{x}{E}](f),\, \wlp[\Ass{x}{E}](g)\big)
\end{align*}
The argument for \cwlp is completely analogous.

\paragraph{The Observation $\Observe~G$.} 

For \cwp we have:
\begin{align*}
&\cwp[\Observe~G](f,\, g)\\
=~ &(f \cdot \ToExp{G},\, g \cdot \ToExp{G})\\
=~&\big(\wp[\Observe~G](f),\, \wlp[\Observe~G](g)\big)
\end{align*}
The argument for \cwlp is completely analogous.

\paragraph{The Induction Hypothesis:} Assume in the following that for two arbitrary but fixed programs $P, Q \in \cfpGCL$ it holds that both
\begin{align*}
\cwp [P] (f,\, g) ~=~ &\big(\wp[P](f), \wlp[P](g)\big), \textnormal{ and }\\
\cwlp [P] (f',\, g) ~=~ &\big(\wlp[P](f'), \wlp[P](g)\big)~.
\end{align*}
Then for the induction step we have:

\paragraph{The Concatenation $P;\, Q$.} 

For \cwp we have:
\begin{align*}
&\cwp[P;\, Q](f,\, g)\\
=~ &\cwp[P] (\cwp[Q](f,\, g)\\
=~ &\cwp[P] \big(\wp[Q](f),\, \wlp[Q](g)\big)\tag{I.H. on $Q$}\\
=~ &\big(\wp[P](\wp[Q](f)),\, \wlp[P](\wlp[Q](g))\big)\tag{I.H. on $P$}\\
=~&\big(\wp[P;\,Q](f),\, \wlp[P;\,Q](g)\big)
\end{align*}
The argument for \cwlp is completely analogous.

\paragraph{The Conditional Choice $\Cond G P Q$.} 

For \cwp we have:
\begin{align*}
&\cwp[\Cond G P Q](f,\, g)\\
=~ &\ToExp{G} \cdot \cwp[P](f,\, g) + \ToExp{\neg G} \cdot \cwp[Q](f,\, g)\\
=~ &\ToExp{G} \cdot \big(\wp[P](f),\, \wlp[P](g)\big)\tag{I.H.}\\
 &~~~~{} + \ToExp{\neg G} \cdot \big(\wp[Q](f),\, \wlp[Q](g)\big)\\
 =~ &\big(\ToExp{G} \cdot \wp[P](f) + \ToExp{\neg G} \cdot \wp[Q](f),\\
 &~~~~\ToExp{G} \cdot \wlp[P](g) + \ToExp{\neg G} \cdot \wlp[Q](g)\big)\\
=~&\big(\wp[\Cond G P Q](f),\\
 &~~~~\wlp[\Cond G P Q](g)\big)
\end{align*}
The argument for \cwlp is completely analogous.

\paragraph{The Probabilistic Choice $\PChoice P p Q$.} 

For \cwp we have:
\begin{align*}
&\cwp[\PChoice P p Q](f,\, g)\\
=~ &p \cdot \cwp[P](f,\, g) + (1-p) \cdot \cwp[Q](f,\, g)\\
=~ &p\cdot \big(\wp[P](f),\, \wlp[P](g)\big)\tag{I.H.}\\
 &~~~~{} + (1 - p) \cdot \big(\wp[Q](f),\, \wlp[Q](g)\big)\\
 =~ &\big(p \cdot \wp[P](f) + (1-p) \cdot \wp[Q](f),\\
 &~~~~p \cdot \wlp[P](g) + (1 - p) \cdot \wlp[Q](g)\big)\\
=~&\big(\wp[\PChoice P p Q](f),\,\wlp[\PChoice P p Q](g)\big)
\end{align*}
The argument for \cwlp is completely analogous.

\paragraph{The Loop $\WhileDo G P$.} 

For \cwp we have:
\begin{align*}
&\cwp[\WhileDo G P](f,\, g)\\
=~ &\textstyle\lfp_{{\sqsubseteq},{\sqsupseteq}} (X_1,\, X_2)\mydot \ToExp{G} \cdot \cwp[P](X_1,\, X_2) + \ToExp{\neg G} \cdot (f,\, g)\\
=~ &\textstyle\lfp_{{\sqsubseteq},{\sqsupseteq}} (X_1,\, X_2)\mydot \ToExp{G} \cdot \big(\wp[P](X_1),\, \wlp[P](X_2)\big)\\
&~~~~{} + \ToExp{\neg G} \cdot (f,\, g)\tag{I.H.}\\
=~ &\textstyle\lfp_{{\sqsubseteq},{\sqsupseteq}} (X_1,\, X_2)\mydot \big(\ToExp{G} \cdot \wp[P](X_1) + \ToExp{\neg G} \cdot f,\\
&~~~~{} \ToExp{G} \cdot \wlp[P](X_2) + \ToExp{\neg G} \cdot g\big)
\end{align*}
Now let $H(X_1,\, X_2) = \big(\ToExp{G} \cdot \wp[P](X_1) + \ToExp{\neg G} \cdot f,\, \ToExp{G} \cdot \wlp[P](X_2) + \ToExp{\neg G} \cdot g\big)$ and let $H_1(X_1,\, X_2)$ be the projection of $H(X_1$, $X_2)$ to the first component and let $H_2(X_1,\, X_2)$ be the projection of $H(X_1,\, X_2)$ to the second component. 

Notice that the value of $H_1(X_1,\, X_2)$ does not depend on $X_2$ and that it is given by
\begin{align*}
H_1(X_1,\, \ourunderscore) ~=~ \ToExp{G} \cdot \wp[P](X_1) + \ToExp{\neg G} \cdot f~.
\end{align*}
By the continuity of $\wp$ (Lemma \ref{thm:wp-ext-is-cont}) we can establish that $H_1$ is continuous. 
Analogously the value of $H_2(X_1,\, X_2)$ does not depend on $X_1$ and it is given by
\begin{align*}
H_2(\ourunderscore,\, X_2) ~=~ \ToExp{G} \cdot \wlp[P](X_2) + \ToExp{\neg G} \cdot g~.
\end{align*}
By the continuity of $\wlp$ (Lemma \ref{thm:wp-ext-is-cont}) we can establish that $H_2$ is continuous.  

As both $H_1$ and $H_2$ are continuous, we can apply Beki\'{c}'s Theorem \cite{Bekic:1984} which tells us that the least fixed point of $H$ is given as $\left(\widehat{X_1},\, \widehat{X_2}\right)$ with
\begin{align*}
\widehat{X_1} ~=~ &\textstyle\lfp_{\sqsubseteq}X_1\mydot H_1\big(X_1,\, \lfp_{\sqsupseteq}X_2\mydot H_2(X_1,\, X_2)\big)\\
=~ &\textstyle\lfp_{\sqsubseteq}X_1\mydot H_1\big(X_1,\, \ourunderscore\big)\\
=~ &\textstyle\lfp_{\sqsubseteq}X_1\mydot \ToExp{G} \cdot \wp[P](X_1) + \ToExp{\neg G} \cdot f\\
=~ &\wp[\WhileDo G P](f)
\intertext{and}
\widehat{X_2} ~=~ &\textstyle\lfp_{\sqsupseteq}X_2\mydot H_2\big(\lfp_{\sqsupseteq}X_1\mydot H_1(X_1,\, X_2),\, X_2\big)\\
=~ &\textstyle\lfp_{\sqsupseteq}X_2\mydot H_2\big(\ourunderscore,\,  X_2\big)\\
=~ &\textstyle\lfp_{\sqsupseteq}X_2\mydot \ToExp{G} \cdot \wlp[P](X_2) + \ToExp{\neg G} \cdot g\\
=~ &\textstyle\gfp_{\sqsubseteq}X_2\mydot \ToExp{G} \cdot \wlp[P](X_2) + \ToExp{\neg G} \cdot g\\
=~ &\wlp[\WhileDo G P](g)~,
\end{align*}
which gives us in total
\begin{align*}
&\cwp[\WhileDo G P](f,\, g) ~=~ \left(\widehat{X_1},\, \widehat{X_2}\right)\\
 =~ &\big( \wp[\WhileDo G P](f),\, \wlp[\WhileDo G P](f)\big)~.
\end{align*}
The argument for \cwlp is completely analogous.
\end{proof}


\subsection{Linearity of \wp}

\begin{lemma}[Linearity of \wp]
\label{thm:wp-linear}%
For any $P \in \cfpGCL$, any post--expectations
$f, g \in \Ex$ and any non--negative real constants $\alpha,\beta$,
\[
\wp[P](\alpha \cdot f + \beta \cdot g) = \alpha \cdot \wp[P](f) + \beta
  \cdot \wp[P](g)\enspace.
\]
%
\end{lemma}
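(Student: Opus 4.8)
The plan is to proceed by structural induction on $P$, following exactly the inductive definition of $\wp$ in Figure~\ref{fig:cwp}. For the non-recursive constructs the claim follows by direct calculation, since every clause defining $\wp$ is assembled from operations that respect non-negative linear combinations. For $\Skip$ and $\Abort$ the transformers are the identity and the constant $\ExpZero$, both trivially linear. For $\Ass{x}{E}$ linearity holds because substitution acts pointwise, $(\alpha \cdot f + \beta \cdot g)\subst{x}{E}(\sigma) = \alpha \cdot f(\sigma\subst{x}{E}) + \beta \cdot g(\sigma\subst{x}{E})$. For $\Observe\,G$ the multiplication by $\ToExp{G}$ distributes over the sum, and for $\Cond{G}{P_1}{P_2}$ and $\PChoice{P_1}{p}{P_2}$ the result is a guarded resp.\ convex combination of the inductive hypotheses for the two branches, where distributivity and commutativity of scalar multiplication let me collect the $\alpha$- and $\beta$-terms. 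For the sequential composition $P_1;P_2$ I would first apply the induction hypothesis to $P_2$ to push the linear combination inside $\wp[P_2]$, and then the induction hypothesis to $P_1$.

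The only genuinely interesting case is the loop $\WhileDo{G}{P'}$. Here $\wp[\WhileDo{G}{P'}](h)$ is the least fixed point of the characteristic functional $\Phi_h(X) = \ToExp{G} \cdot \wp[P'](X) + \ToExp{\lnot G} \cdot h$, which by the continuity of $\wp[P']$ (Lemma~\ref{thm:wp-ext-is-cont}) equals $\sup_{n} \Phi_h^{n}(\ExpZero)$. The strategy is to show by an inner induction on $n$ that the iterates themselves are linear in the post-expectation, i.e.
\[
\Phi_{\alpha \cdot f + \beta \cdot g}^{n}(\ExpZero) ~=~ \alpha \cdot \Phi_{f}^{n}(\ExpZero) + \beta \cdot \Phi_{g}^{n}(\ExpZero)~.
\]
The base case $n=0$ is immediate since $\ExpZero = \alpha \cdot \ExpZero + \beta \cdot \ExpZero$, and the step uses the induction hypothesis on $P'$ to pull the scalar combination through $\wp[P']$ and then distributes $\ToExp{G}$ and $\ToExp{\lnot G}$ to regroup the $\alpha$- and $\beta$-contributions into $\alpha \cdot \Phi_f^{n+1}(\ExpZero) + \beta \cdot \Phi_g^{n+1}(\ExpZero)$.

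It then remains to take the supremum over $n$ on both sides. Since $\Phi_h$ is monotone (continuity implies monotonicity), the iterates $\Phi_f^n(\ExpZero)$ and $\Phi_g^n(\ExpZero)$ each form an ascending chain, and for ascending chains $(a_n)$, $(b_n)$ with $\alpha,\beta \geq 0$ one has $\sup_n (\alpha \cdot a_n + \beta \cdot b_n) = \alpha \cdot \sup_n a_n + \beta \cdot \sup_n b_n$, as witnessed by choosing a common index $n = \max(m,k)$ to bound $\alpha \cdot a_m + \beta \cdot b_k$ from above. Applying this pointwise over $\State$ yields
\[
\wp[\WhileDo{G}{P'}](\alpha \cdot f + \beta \cdot g) ~=~ \alpha \cdot \wp[\WhileDo{G}{P'}](f) + \beta \cdot \wp[\WhileDo{G}{P'}](g)~,
\]
which closes the loop case. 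I expect the main obstacle to be precisely this last interchange of the supremum with the non-negative linear combination: it is where the continuity of $\wp[P']$ (to express the fixed point as a chain supremum) and the monotonicity of the iterates must both be invoked, and some care is needed because the expectations range over $\Reals_{\geq 0}^{\infty}$, so the arithmetic of suprema and of the scalar factors has to be read in the extended non-negative reals (in particular fixing a convention for $0 \cdot \infty$).
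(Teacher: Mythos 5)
Your proposal matches the paper's proof essentially step for step: the same structural induction with direct calculations for the non-loop constructs, and for the loop the same inner induction on $n$ showing $\Phi^n_{\alpha f+\beta g}(\ExpZero)=\alpha\cdot\Phi^n_f(\ExpZero)+\beta\cdot\Phi^n_g(\ExpZero)$ (the paper's $H$, $I$, $J$ functionals), followed by Kleene iteration via the continuity of $\wp$ and the interchange of the supremum with the non-negative linear combination. Your added justification of that final interchange via ascending chains is slightly more explicit than the paper's, but the argument is the same.
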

\begin{proof}
The proof proceeds by induction on the structure of $P$.

\paragraph{The Effectless Program \Skip.} 
\begin{align*}
\MoveEqLeft[2]
\wp[\Skip](\alpha \cdot f + \beta \cdot g) \\
& = 
\alpha \cdot f + \beta \cdot g \\
& = 
\alpha \cdot \wp[\Skip](f) + \beta \cdot \wp[\Skip](g)
\end{align*}

\paragraph{The Faulty Program \Abort.} 
\begin{align*}
\MoveEqLeft[2]
\wp[\Abort](\alpha \cdot f + \beta \cdot g)  \\
& = 
\CteFun{0}\\
&=
\alpha \cdot \wp[\Abort](f) + \beta \cdot \wp[\Abort](g)
\end{align*}

\paragraph{The Assignment $\Ass{x}{E}$.} 
\begin{align*}
\MoveEqLeft[2]
\wp[\Ass{x}{E}](\alpha \cdot f + \beta \cdot g)\\
& = 
(\alpha \cdot f + \beta \cdot g)\subst{x}{E}\\
& = 
\alpha \cdot f \subst{x}{E} + \beta \cdot g \subst{x}{E}\\
& =
\alpha \cdot \wp[\Ass{x}{E}](f) + \beta \cdot \wp[\Ass{x}{E}](g)
\end{align*}

\paragraph{The Observation $\Observe~G$.} 
\begin{align*}
\MoveEqLeft[2]
\wp[\Observe~G] (\alpha \cdot f + \beta \cdot g)\\
& =
\ToExp{G} \cdot (\alpha \cdot f + \beta \cdot g)\\
& = 
\alpha \cdot \ToExp{G} \cdot f + \beta \cdot \ToExp{G} \cdot  g\\
& = 
\alpha \cdot \wp[\Observe~G](f) + \beta \cdot \wp[\Observe~G](g)
\end{align*}

\paragraph{The Concatenation $P;\, Q$.} 
\begin{align*}
\MoveEqLeft[2]
\wp[P; Q](\alpha \cdot f + \beta \cdot g)\\
& =
\wp[P](\wp[Q](\alpha \cdot f + \beta \cdot g))\\
& = 
\wp[P](\alpha \cdot \wp[Q](f) + \beta \cdot \wp[Q](g)) \tag{I.H.\ on $Q$}\\
& = 
\alpha \cdot \wp[P](\wp[Q](f)) \\
& \quad + \beta \cdot \wp[P](\wp[Q](g)) \tag{I.H.\ on $P$}\\
& =
\alpha \cdot \wp[P; Q](f) + \beta \cdot \wp[P; Q](g)
\end{align*}

\paragraph{The Conditional Choice $\Cond G P Q$.} 
\begin{align*}
\MoveEqLeft[2]
\wp[\Cond G P Q](\alpha \cdot f + \beta \cdot g) \\
& = 
\ToExp{G} \cdot \wp[P](\alpha \cdot f + \beta \cdot g) \\
& \quad + \ToExp{\neg G} \cdot \wp[Q](\alpha \cdot f + \beta \cdot g)\\
& = 
\ToExp{G} \cdot (\alpha \cdot \wp[P](f) + \beta \cdot \wp[P](g)) \\
& \quad + \ToExp{\neg G} \cdot (\alpha \cdot \wp[Q](f) + \beta \cdot \wp[Q](g))
\tag{I.H.}\\
& =
\alpha \cdot (\ToExp{G} \cdot \wp[P](f) + \ToExp{\neg G} \cdot \wp[Q](f)) \\
& \quad + \beta \cdot (\ToExp{G} \cdot \wp[P](g) + \ToExp{\neg G} \cdot \wp[Q](g)) \\
& =
\alpha \cdot \wp[\Cond G P Q](f) \\
& \quad + \beta \cdot \wp[\Cond G P Q](g)
\end{align*}

\paragraph{The Probabilistic Choice $\PChoice P p Q$.} 
\begin{align*}
\MoveEqLeft[2]
\wp[\PChoice P p Q](\alpha \cdot f + \beta \cdot g) \\
& = 
p \cdot \wp[P](\alpha \cdot f + \beta \cdot g) \\
& \quad + (\CteFun{1} - p) \cdot \wp[Q](\alpha \cdot f + \beta \cdot g)\\
& = 
p \cdot (\alpha \cdot \wp[P](f) + \beta \cdot \wp[P](g)) \\
& \quad + (\CteFun{1} - p) \cdot (\alpha \cdot \wp[Q](f) + \beta \cdot \wp[Q](g))
\tag{I.H.}\\
& =
\alpha \cdot (p \cdot \wp[P](f) + (\CteFun{1} - p) \cdot \wp[Q](f)) \\
& \quad + \beta \cdot (p \cdot \wp[P](g) + (\CteFun{1} - p) \cdot \wp[Q](g)) \\
& =
\alpha \cdot \wp[\PChoice P p Q](f) \\
& \quad  + \beta \cdot \wp[\PChoice P p Q](g)
\end{align*}

\paragraph{The Loop $\WhileDo G P$.} 
The main idea of the proof is to show that linearity holds for the $n$-th
unrolling of the loop and then use a continuity argument to show that the
property carries over to the loop.

The fact that linearity holds for the $n$--unrolling of the loop is
formalized by formula $H^n(\CteFun{0}) = \alpha \cdot I^n(\CteFun{0}) + \beta
\cdot J^n(\CteFun{0})$, where
\begin{align*}
H(X) &= \ToExp{G} \cdot \wp[P](X) + \ToExp{\neg G} \cdot (\alpha \cdot f + \beta
\cdot g)\\
I(X) &= \ToExp{G} \cdot \wp[P](X) + \ToExp{\neg G} \cdot f \\
J(X) &= \ToExp{G} \cdot \wp[P](X) + \ToExp{\neg G} \cdot g
\end{align*}
We prove this formula by induction on $n$. The base case $n=0$ is immediate. For
the inductive case we reason as follows
\begin{align*}
\MoveEqLeft[1]
H^{n+1}(\CteFun{0}) \\
& = 
H (H^n(\CteFun{0})) \\
& =  
H (\alpha \cdot I^n(\CteFun{0}) + \beta \cdot
J^n(\CteFun{0})) \tag{I.H.\ on $n$}\\
& =  
\ToExp{G} \cdot  \wp[P](\alpha \cdot I^n(\CteFun{0}) + \beta \cdot
J^n(\CteFun{0})) \\
& \quad + \ToExp{\neg G} \cdot (\alpha \cdot f + \beta \cdot g) \\
& =  
\ToExp{G} \cdot  (\alpha \cdot \wp[P](I^n(\CteFun{0})) + \beta \cdot
\wp[P](J^n(\CteFun{0}))) \\
& \quad + \ToExp{\neg G} \cdot (\alpha \cdot f + \beta \cdot g) \tag{I.H.\ on
  $P$}\\
& =
\alpha \cdot  (\ToExp{G} \cdot \wp[P](I^n(\CteFun{0})) + \ToExp{\neg G} \cdot f)
\\
& \quad + \beta \cdot (\ToExp{G} \cdot \wp[P](J^n(\CteFun{0})) + \ToExp{\neg G} \cdot g) \\
& =
\alpha \cdot I(I^n(\CteFun{0})) + \beta \cdot J(J^n(\CteFun{0})) \\
& =
\alpha \cdot I^{n+1}(\CteFun{0}) + \beta \cdot J^{n+1}(\CteFun{0})
\end{align*}

Now we turn to the proof of the main claim. We apply the Kleene Fixed Point Theorem
to deduce that the least fixed points of $H$, $I$ and $J$ can be built by
iteration from expectation $\CteFun{0}$ since the three transformers are
continuous (due to the continuity of $\wp$ established in
Lemma~\ref{thm:wp-ext-is-cont}). Then we have
\belowdisplayskip=-1\baselineskip
\begin{align*}
\MoveEqLeft[2]
\wp[\WhileDo{G}{P}](\alpha \cdot f + \beta \cdot g) \\
& =
\bigsqcup_n H^n(\CteFun{0}) \\
& = 
\bigsqcup_n \alpha \cdot I^n(\CteFun{0}) + \beta \cdot J^n(\CteFun{0}) \\
& = 
\alpha \cdot \bigsqcup_n  I^n(\CteFun{0}) + \beta \cdot \bigsqcup_n
J^n(\CteFun{0}) \\
& = 
\alpha \cdot \wp[\WhileDo{G}{P}](f) \\ 
& \quad + \beta \cdot \wp[\WhileDo{G}{P}](g)
\end{align*}
\end{proof}

\subsection{Proof of  Lemma~\ref{thm:cwp-healthy-cond}}
\label{proof:cwp-healthy}

\begingroup
\def\thelemma{\ref{thm:cwp-healthy-cond}}
\begin{lemma}[Elementary properties of $\qcwp$ and $\qcwlp$]
For every $P \in \cfpGCL$ with at least one feasible execution (from every initial state), 
post--expectations $f,g \in \Ex$ and non--negative real constants $\alpha,\beta$:
\begin{enumerate}[label=\roman*), itemsep=1ex]
\item \label{thm:cwp-monot}
$f \sqsubseteq g$ implies $\qcwp[P](f) \sqsubseteq \qcwp[P](g)$ and likewise for \qcwlp (monotonicity).
\item \label{thm:cwp-linear} 
$\qcwp[P](\alpha \cdot f + \beta \cdot g) = 
\alpha \cdot \qcwp[P](f) + \beta \cdot \qcwp[P](g)$. 
%
\item \label{thm:cwp-other} 
$\qcwp[P](\CteFun{0}) = \CteFun{0}$ and $\qcwlp[P](\CteFun{1}) = \CteFun{1}$.
\end{enumerate}
\end{lemma}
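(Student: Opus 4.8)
The plan is to reduce all three statements to already-established properties of \wp and \wlp by way of the decoupling of Theorem~\ref{thm:cwp-decuop}. Instantiating that theorem with second argument $\CteFun{1}$ and reading off the quotient definitions of \qcwp and \qcwlp yields the identities
\begin{align*}
\qcwp[P](f) ~=~ \frac{\wp[P](f)}{\wlp[P](\CteFun{1})}
\qquad\text{and}\qquad
\qcwlp[P](f) ~=~ \frac{\wlp[P](f)}{\wlp[P](\CteFun{1})}~,
\end{align*}
whose crucial feature is that the denominator $\wlp[P](\CteFun{1})$ does \emph{not} depend on the post--expectation $f$. Once these are in hand, each claim is a one--line manipulation of a state--wise quotient with a fixed denominator.

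First I would settle well--definedness. The denominator $\wlp[P](\CteFun{1})$ is a bounded expectation, hence lies in $[0,1]$, and the feasibility hypothesis---at least one feasible execution from every initial state---is exactly the requirement that $\wlp[P](\CteFun{1})(\sigma) > 0$ for every $\sigma \in \State$ (this is precisely the ``feasibility'' notion from Section~\ref{sec:expec_transform}). Thus at every state we divide the $\Reals_{\geq 0}^{\infty}$--valued numerator by a strictly positive number in $(0,1]$, which is unproblematic and preserves both the pointwise order and $\Reals_{\geq 0}$--linear combinations. I expect this to be the only delicate point, and it is a mild one: the substance of the lemma is carried entirely by Theorem~\ref{thm:cwp-decuop} and Lemma~\ref{thm:wp-linear}, leaving only routine quotient arithmetic.

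With positivity secured, I would dispatch the three parts. For monotonicity \ref{thm:cwp-monot}, \wp is monotone, so $f \sqsubseteq g$ gives $\wp[P](f) \sqsubseteq \wp[P](g)$ pointwise; dividing both sides by the common positive expectation $\wlp[P](\CteFun{1})$ preserves the order, and the \qcwlp half follows identically using monotonicity of \wlp. For linearity \ref{thm:cwp-linear}, Lemma~\ref{thm:wp-linear} lets me pull the constants out of the numerator,
\begin{align*}
\qcwp[P](\alpha \cdot f + \beta \cdot g)
~&=~ \frac{\alpha \cdot \wp[P](f) + \beta \cdot \wp[P](g)}{\wlp[P](\CteFun{1})}\\
~&=~ \alpha \cdot \qcwp[P](f) + \beta \cdot \qcwp[P](g)~,
\end{align*}
the last step merely distributing the fixed denominator over the sum; note linearity is claimed only for \qcwp, so no linearity of \wlp is needed.

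For \ref{thm:cwp-other}, the basic healthiness identity $\wp[P](\CteFun{0}) = \CteFun{0}$ gives $\qcwp[P](\CteFun{0}) = \CteFun{0}/\wlp[P](\CteFun{1}) = \CteFun{0}$, while cancelling numerator against denominator (licensed by positivity) gives $\qcwlp[P](\CteFun{1}) = \wlp[P](\CteFun{1})/\wlp[P](\CteFun{1}) = \CteFun{1}$. In summary, every part is an immediate consequence of the $f$--independent--denominator form of the transformers together with monotonicity and linearity of \wp and monotonicity of \wlp; the proof writes itself once Theorem~\ref{thm:cwp-decuop} has been applied and the denominator shown to be strictly positive.
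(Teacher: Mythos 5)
Your proof is correct and follows essentially the same route as the paper's: both reduce the three claims, via the decoupling of Theorem~\ref{thm:cwp-decuop} and the resulting quotient form with the $f$--independent denominator $\wlp[P](\CteFun{1})$, to monotonicity and linearity of \wp (Lemma~\ref{thm:wp-linear}) and the basic identities $\wp[P](\CteFun{0})=\CteFun{0}$ and $\wlp[P](\CteFun{1})/\wlp[P](\CteFun{1})=\CteFun{1}$. If anything, you are slightly more explicit than the paper in spelling out that the feasibility hypothesis is what guarantees strict positivity of the denominator, which is a welcome addition rather than a deviation.
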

\endgroup

  \paragraph{\textnormal{\textit{Proof of \ref{thm:cwp-monot}}}} We do the proof for transformer
  $\qcwp$; the proof for $\qcwp$ is analogous. On view of
  Theorem~\ref{thm:cwp-decuop}, the monotonicity of $\qcwp$ reduces to the
  monotonicity of \wp which follows immediately from its continuity (see 
  Lemma~\ref{thm:wp-ext-is-cont}).

  \paragraph{\textnormal{\textit{Proof of \ref{thm:cwp-linear}}}} Once again, on view of
  Theorem~\ref{thm:cwp-decuop}, the linearity of $\qcwp$ follows from the linearity
  of \wp, which we prove in Lemma~\ref{thm:wp-linear}.\footnote{We cannot adopt
    the results from the original work~\cite{McIver:2004} because their analyses
    is restricted to bounded expectations.}

\paragraph{\textnormal{\textit{Proof of \ref{thm:cwp-other}}}} Let us begin by proving that
$\qcwp[P](\CteFun{0}) = \CteFun{0}$. On account of
Theorem~\ref{thm:cwp-decuop} this assertion reduces to $\wp[P](\CteFun{0}) \allowbreak=\allowbreak
\CteFun{0}$, which has already been proved for \pGCL
programs (see \eg~\cite{McIver:2004}). Therefore we only have to deal with the case of
$\Observe$ statements and the claim holds since $\wp[\Observe \, G](\CteFun{0})
= \ToExp{G} \cdot \CteFun{0} = \CteFun{0}$.  Finally formula
$\qcwlp[P](\CteFun{1}) \allowbreak=\allowbreak \CteFun{1}$ follows immediately from
Theorem~\ref{thm:cwp-decuop}. \qed

\subsection{Proof of Lemma \ref{lem:ExpRew_is_wp} (i)}
\label{proofof:ExpRew_is_wp}

For proving Lemma \ref{lem:ExpRew_is_wp} (i) we rely on the fact that allowing a bounded while--loop to be executed for an increasing number of times approximates the behavior of an unbounded while--loop.
We first define bounded while--loops formally:

\begin{definition}[Bounded \texttt{while}--Loops]
Let $P \in \pGCL$. Then we define:
\begin{align*}
\WhileDok{0}{G}{P} ~&\eqdef~  \Abort\\
\WhileDok{k+1}{G}{P} ~&\eqdef~  \Cond{G}{P^k}{\Skip}\\
P^k ~&\eqdef~ P;\: \WhileDok{k}{G}{P}
\end{align*}
\end{definition}

We can now establish that by taking the supremum on the bound $k$ we obtain the full behavior of the unbounded while--loop:

\begin{lemma}
\label{lem:whilekwhile}
Let $G$ be a predicate, $P \in \pGCL$, and $f \in \Ex$.
Then it holds that
\begin{align*}
\sup_{k \in \Naturals} \wp[\WhileDok{k}{G}{P}](f) = \wp[\WhileDo{G}{P}](f)~.
\end{align*}
\end{lemma}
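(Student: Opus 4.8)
The plan is to identify the $\wp$ of the $k$-fold bounded loop with the $k$-th Kleene approximant of the least fixed point that defines $\wp[\WhileDo{G}{P}]$, and then to let $k \to \infty$. To this end I would first introduce the characteristic functional
\begin{align*}
\Phi_f(X) ~\eqdef~ \ToExp{G} \cdot \wp[P](X) + \ToExp{\lnot G} \cdot f~,
\end{align*}
so that by the defining rule in Figure~\ref{fig:cwp} we have $\wp[\WhileDo{G}{P}](f) = \lfp X \mydot \Phi_f(X)$. Since $\wp[P]$ is continuous (Lemma~\ref{thm:wp-ext-is-cont}) and both multiplication by $\ToExp{G}$ and addition of the fixed expectation $\ToExp{\lnot G} \cdot f$ preserve continuity, $\Phi_f$ is continuous on $(\Ex,\, \sqsubseteq)$. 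Moreover $\CteFun{0} \sqsubseteq \Phi_f(\CteFun{0})$ because $\CteFun{0}$ is the least element, so monotonicity makes $(\Phi_f^k(\CteFun{0}))_{k \in \Naturals}$ an ascending chain. The Kleene Fixed Point Theorem then yields $\lfp X \mydot \Phi_f(X) = \bigsqcup_{k \in \Naturals} \Phi_f^k(\CteFun{0})$.

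The core of the argument is the claim that $\wp[\WhileDok{k}{G}{P}](f) = \Phi_f^k(\CteFun{0})$ for every $k \in \Naturals$, which I would prove by induction on $k$. The base case is immediate, since $\wp[\WhileDok{0}{G}{P}](f) = \wp[\Abort](f) = \CteFun{0} = \Phi_f^0(\CteFun{0})$. For the inductive step I would unfold the definition $\WhileDok{k+1}{G}{P} = \Cond{G}{P;\,\WhileDok{k}{G}{P}}{\Skip}$ and apply the $\wp$-rules for conditional choice, concatenation and $\Skip$:
\begin{align*}
\wp[\WhileDok{k+1}{G}{P}](f)
~&=~ \ToExp{G} \cdot \wp[P]\bigl(\wp[\WhileDok{k}{G}{P}](f)\bigr) + \ToExp{\lnot G} \cdot f\\
~&=~ \ToExp{G} \cdot \wp[P]\bigl(\Phi_f^k(\CteFun{0})\bigr) + \ToExp{\lnot G} \cdot f
~=~ \Phi_f^{k+1}(\CteFun{0})~,
\end{align*}
where the second equality uses the induction hypothesis and the last the definition of $\Phi_f$.

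Finally, taking the supremum over $k$ and recalling that the least upper bound of a directed set in $(\Ex,\, \sqsubseteq)$ is computed pointwise, the identity $\wp[\WhileDok{k}{G}{P}](f) = \Phi_f^k(\CteFun{0})$ gives
\begin{align*}
\sup_{k \in \Naturals} \wp[\WhileDok{k}{G}{P}](f) ~=~ \bigsqcup_{k \in \Naturals} \Phi_f^k(\CteFun{0}) ~=~ \lfp X \mydot \Phi_f(X) ~=~ \wp[\WhileDo{G}{P}](f)~.
\end{align*}
I expect the only delicate point to be the justification of the continuity premise needed for the Kleene iteration (and hence the reliance on Lemma~\ref{thm:wp-ext-is-cont}); the inductive identification of the bounded-loop $\wp$ with the approximants $\Phi_f^k(\CteFun{0})$ is a routine unfolding of the $\wp$-rules once they are applied in the right order.
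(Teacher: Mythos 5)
Your proposal is correct and follows essentially the same route as the paper's own proof: both identify $\wp[\WhileDok{k}{G}{P}](f)$ with the $k$-th Kleene approximant $\Phi_f^k(\CteFun{0})$ of the loop's characteristic functional by induction on $k$, and then invoke continuity of $\wp$ (Lemma~\ref{thm:wp-ext-is-cont}) together with the Kleene Fixed Point Theorem to pass to the supremum. The only cosmetic difference is in how the unrolled loop $\WhileDok{k+1}{G}{P}$ is written out before applying the $\wp$-rules, which does not affect the argument.
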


\begin{proof}
For any predicate $G$, any program $P \in \pGCL$, and any expectation $f \in \Ex$ let
\begin{align*}
F(X) ~=~  \ToExp{G}\cdot\wp[P](X) + \ToExp{\neg G}\cdot f~.
\end{align*}
We first show by induction on $k \in \Naturals$ that
\begin{align*}
\wp[\WhileDok{k}{G}{P}](f) ~=~ F^k(\ExpZero)~.
\end{align*}
For the induction base we have $k = 0$.
In that case we have
\begin{align*}
&\wp[\WhileDok{0}{G}{P}](f)\\
=~ &\wp[\Abort](f)\\
=~ &\ExpZero\\
=~ &F^0(\ExpZero)~.
\end{align*}
As the induction hypothesis assume now that 
\begin{align*}
\wp[\WhileDok{k}{G}{P}](f) ~=~ F^k(\ExpZero)(f)
\end{align*}
holds for some arbitrary but fixed $k$.
Then for the induction step we have
\begin{align*}
&\wp[\WhileDok{k+1}{G}{P}](f)\\
=~ &\wp[P;~\Cond{G}{\WhileDok{k}{G}{P}}{\Skip}](f)\\
=~ &(\ToExp{G} \cdot \wp[P]\circ\wp[\WhileDok{k}{G}{P}] \\
&\qquad+ \ToExp{\neg G} \cdot \wp[\Skip])(f)\\
=~ &\ToExp{G} \cdot \wp[P](\wp[\WhileDok{k}{G}{P}](f)) \\
& \qquad + \ToExp{\neg G} \cdot \wp[\Skip](f)\\
=~ &\ToExp{G} \cdot \wp[P](F^k(\ExpZero)) + \ToExp{\neg G} \cdot f\tag{I.H.}\\
=~ &F^{k+1}(\ExpZero)(f)~.
\end{align*}
We have by now established that
\begin{align*}
\wp[\WhileDok{k}{G}{P}](f) ~=~ F^k(\ExpZero)
\end{align*}
holds for every $k \in \Naturals$. Ergo, we can also establish that
\belowdisplayskip=-1\baselineskip
\begin{align*}
&\sup_{k \in \Naturals}\wp[\WhileDok{k}{G}{P}](f)\\
=~ &\sup_{k \in \Naturals} F^k(\ExpZero)\\
=~ &\lfp X.~ F(X)\\
=~&\wp[\WhileDo{G}{P}](f)~.
\end{align*}
\end{proof}

With Lemma \ref{lem:whilekwhile} in mind, we can now restate and prove Lemma \ref{lem:ExpRew_is_wp} (i):

\begingroup
\def\thelemma{\ref{lem:ExpRew_is_wp} (i)}
\begin{lemma}
For $P \in \cfpGCL$, $f \in \Ex, g \in \BEx$, and $\sigma \in \State$:
\begin{align*}
\ExpRew{\Rdtmc{\sigma}{f}{P}}{\lozenge \sink} &= \wp[P](f)(\sigma)
\end{align*}
\end{lemma}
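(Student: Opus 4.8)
The plan is to prove the identity by structural induction on $P \in \cfpGCL$, matching the terminating runs of the operational RMC $\Rdtmc{\sigma}{f}{P}$ against the inductive clauses defining $\wp$ in Figure~\ref{fig:cwp}. The guiding observation is that reward is collected only on the $\exit$-labelled states $\langle \term, \tau \rangle$, where it equals $f(\tau)$, and such a state is visited exactly once, immediately before $\sink$, by a successfully terminating run, whereas runs blocked at $\undesired$ contribute zero. Hence $\ExpRew{\Rdtmc{\sigma}{f}{P}}{\lozenge \sink}$ is precisely $\sum_{\tau}\Pr[\,P \text{ terminates in } \tau \text{ from } \sigma\,]\cdot f(\tau)$, the expected value of $f$ over the final states of $P$; I would keep this reformulation in mind throughout.

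First I would dispatch the base cases directly from the rules of Figure~\ref{fig:rmdprules}. For $\Skip$ and $\Ass{x}{E}$ there is a single terminating run collecting $f(\sigma)$ respectively $f(\sigma\subst{x}{E})$; for $\Abort$ the only run self-loops and never reaches $\sink$, giving $\ExpZero(\sigma) = \wp[\Abort](f)(\sigma)$; for $\Observe\,G$ the run reaches a $\exit$-state with reward $f(\sigma)$ when $\sigma \models G$ and otherwise passes through $\undesired$ with reward $0$, matching $\ToExp{G}\cdot f$. The choice cases are equally direct: for $\PChoice{P_1}{p}{P_2}$ the initial transition splits with probabilities $p$ and $1-p$ into $\langle P_1,\sigma\rangle$ and $\langle P_2,\sigma\rangle$, so the expected reward is the convex combination of the sub-rewards, which by the induction hypothesis equals $p\cdot\wp[P_1](f)(\sigma) + (1-p)\cdot\wp[P_2](f)(\sigma)$; the conditional case follows by a case split on $\sigma \models G$.

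The concatenation case $P_1; P_2$ is the first substantive step. Here I would exploit the lifting built into the concatenation rule: every run of $\Rdtmc{\sigma}{f}{P_1;P_2}$ first simulates $P_1$ with $;P_2$ appended to each configuration (collecting zero reward, since no such configuration is an $\exit$-state while $P_2$ is still pending), and upon $P_1$ terminating in some state $\tau$ it proceeds exactly as a run of $\Rdtmc{\tau}{f}{P_2}$. This yields the factorization
\begin{align*}
\ExpRew{\Rdtmc{\sigma}{f}{P_1;P_2}}{\lozenge \sink} ~=~ \ExpRew{\Rdtmc{\sigma}{h}{P_1}}{\lozenge \sink}, \quad h = \wp[P_2](f),
\end{align*}
because summing $\wp[P_2](f)(\tau) = \ExpRew{\Rdtmc{\tau}{f}{P_2}}{\lozenge \sink}$ (induction hypothesis on $P_2$) against the termination probabilities of $P_1$ is, by the reformulation above, exactly the expected reward of $P_1$ under post-expectation $h$. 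Applying the induction hypothesis to $P_1$ then delivers $\wp[P_1](\wp[P_2](f))(\sigma) = \wp[P_1;P_2](f)(\sigma)$.

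The loop $\WhileDo{G}{P'}$ is the main obstacle, and I would reduce it to the bounded loops via Lemma~\ref{lem:whilekwhile}. Two ingredients are needed: (a) an inner induction on $k$ showing $\ExpRew{\Rdtmc{\sigma}{f}{\WhileDok{k}{G}{P'}}}{\lozenge \sink} = \wp[\WhileDok{k}{G}{P'}](f)(\sigma)$, which invokes only the already-established $\Abort$, $\Skip$, conditional and concatenation steps together with the outer induction hypothesis on the body $P'$; and (b) the operational convergence $\ExpRew{\Rdtmc{\sigma}{f}{\WhileDo{G}{P'}}}{\lozenge \sink} = \sup_{k}\ExpRew{\Rdtmc{\sigma}{f}{\WhileDok{k}{G}{P'}}}{\lozenge \sink}$. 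For (b) I would argue that $\WhileDok{k}{G}{P'}$ agrees with the unbounded loop on all runs making at most $k$ guard-passes, and replaces deeper recursion by a diverging $\Abort$ that collects no reward and never reaches $\sink$; since every terminating run of the unbounded loop is finite and is therefore captured once $k$ is large enough, and since the bounded expected rewards increase with $k$, their supremum collects exactly the reward of all terminating runs (monotone convergence in $[0,\infty]$). Combining (a) and (b) with Lemma~\ref{lem:whilekwhile} gives $\sup_k \wp[\WhileDok{k}{G}{P'}](f)(\sigma) = \wp[\WhileDo{G}{P'}](f)(\sigma)$, closing the induction. I expect step (b)---the rigorous identification of terminating runs of the unbounded loop with those of sufficiently deep bounded unrollings, including the bookkeeping on probabilities and cumulated rewards across two distinct infinite RMCs---to be the most delicate part of the argument.
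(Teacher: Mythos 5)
Your proposal is correct and follows essentially the same route as the paper: structural induction with direct computation of the base and choice cases, the same factorization of the concatenation case through the post-expectation $\wp[P_2](f)$, and the same treatment of loops via bounded unrollings (Lemma~\ref{lem:whilekwhile}) combined with the operational supremum identity. You even single out the same step the paper leaves at the level of a sketch---the identification of $\ExpRew{\Rdtmc{\sigma}{f}{\WhileDo{G}{P'}}}{\lozenge \sinklabel}$ with the supremum over bounded unrollings---as the delicate part.
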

\addtocounter{lemma}{-1}
\endgroup


\begin{proof}

The proof goes by induction over all \cfpGCL programs. 
For the induction base we have:

\textbf{The Effectless Program \Skip.} The RMC for this program is of the following form:\footnote{If transitions have probability $1$, we omit this in our figures. Moreover, all states---with the exception of $\sink$---are left out if they are not reachable from the initial state.}
\begin{center}
    \begin{tikzpicture}[->,>=stealth',shorten >=1pt,node distance=2.5cm,semithick,minimum size=1cm]
\tikzstyle{every state}=[draw=none]
  \draw[white, use as bounding box] (-1.6,-0.85) rectangle (6.5,.25);
   \node [state, initial, initial text=,label={[yshift=0.76cm, gray] 270:$0$}] (init) {$\langle \Skip,\, \sigma\rangle$};  
   \node [state,label={[yshift=0.5cm, gray] 270:$f(\sigma)$}] (exit) [right of=init] {$\langle\term,\,\sigma\rangle$};
   \node [state,label={[yshift=0.5cm, gray] 270:$0$}] (sink) [right of=exit] {$\sink$};

  \path [] 
      (init) edge [] (exit)
      (exit) edge [] (sink)
      (sink) edge [loop right] (sink)
  ;
\end{tikzpicture}
\end{center}
In the above RMC we have $\Pi := \paths{\langle \Skip,\, \sigma \rangle}{\sink} \allowbreak=\allowbreak \{\hat\pi_1\}$ with $\hat\pi_1 = \langle \Skip,\, \sigma \rangle \rightarrow \langle \term,\, \sigma\rangle \rightarrow \sink$.
Then we have for the expected reward:
\begin{align*}
&\ExpRew{\Rdtmc{\sigma}{f}{\Skip}}{\lozenge \sinklabel}\\
 ~=~&\sum_{\hat\pi \in \Pi}\Pr(\hat\pi) \cdot r(\hat\pi)\\
 ~=~&\Pr(\hat\pi_1) \cdot r(\hat\pi_1)\\
 ~=~&1 \cdot f(\sigma)\\
 ~=~ &f(\sigma)\\
 ~=~ &\wp[\Skip](f)(\sigma)
\end{align*}

\textbf{The Faulty Program \Abort.} The RMC for this program is of the following form:
\begin{center}
    \begin{tikzpicture}[->,>=stealth',shorten >=1pt,auto,node distance=3.75cm,semithick,minimum size=1cm]
\tikzstyle{every state}=[draw=none]
  \draw[white, use as bounding box] (-1.5,-0.75) rectangle (5.15,.39);
   \node [state, initial, initial text=,label={[yshift=0.9cm, gray] 270:$0$}] (init) {$\langle \Abort,\, \sigma\rangle$};  
      \node [state,label={[yshift=0.55cm, gray] 270:$0$}] (sink) [right of=init] {$\sink$};

  \path []
       (init) edge [loop right] (init)
       (sink) edge [loop right] (sink);
       ;
\end{tikzpicture}
\end{center}
In this RMC we have $\Pi := \paths{\langle \Abort,\, \sigma \rangle}{\sink} = \emptyset$.
Then we have for the expected reward:
\begin{align*}
&\ExpRew{\Rdtmc{\sigma}{f}{\Abort}}{\lozenge \sinklabel}\\
 ~=~&\sum_{\hat\pi \in \Pi}\Pr(\hat\pi) \cdot r(\hat\pi)\\
 ~=~&\sum_{\hat\pi \in \emptyset}\Pr(\hat\pi) \cdot r(\hat\pi)\\
 ~=~&0\\
 ~=~&\ExpZero(\sigma)\\
 ~=~ &\wp[\Abort](f)(\sigma)
\end{align*}

\textbf{The Assignment $\Ass x E$.} The RMC for this program is of the following form:
\begin{center}
    \begin{tikzpicture}[->,>=stealth',shorten >=1pt,node distance=2.5cm,semithick,minimum size=1cm]
\tikzstyle{every state}=[draw=none]
  \draw[white, use as bounding box] (-1.6,-0.85) rectangle (6.4,.25);
   \node [state, initial, initial text=,label={[yshift=0.9cm, gray] 270:$0$}] (init) {$\langle \Ass x E,\, \sigma\rangle$};  
   \node [state,label={[yshift=0.89cm, gray] 270:$f(\sigma[E/x])$}] (exit) [right of=init] {$\langle\term,\,\sigma[E/x]\rangle$};
   \node [state,label={[yshift=0.5cm, gray] 270:$0$}] (sink) [right of=exit] {$\sink$};

  \path [] 
      (init) edge [] (exit)
      (exit) edge [] (sink)
      (sink) edge [loop right] (sink)
  ;
\end{tikzpicture}
\end{center}
In this RMC we have $\Pi := \paths{\langle \Ass x E,\, \sigma \rangle}{\sink} = \{\hat\pi_1\}$ with $\hat\pi_1 = \langle \Ass x E,\, \sigma \rangle \rightarrow \langle \term,\, \sigma[E / x]\rangle \rightarrow \sink$.
Then we have for the expected reward:
\begin{align*}
&\ExpRew{\Rdtmc{\sigma}{f}{\Ass x E}}{\lozenge \sinklabel}\\
 ~=~&\sum_{\hat\pi \in \Pi}\Pr(\hat\pi) \cdot r(\hat\pi)\\
 ~=~&\Pr(\hat\pi_1) \cdot r(\hat\pi_1)\\
 ~=~&1 \cdot f(\sigma[E / x])\\
 ~=~&f(\sigma[E / x])\\
 ~=~&f[E / x](\sigma)\\
 ~=~&\wp[\Ass x E](f)(\sigma)
\end{align*}

\textbf{The Observation $\Observe~ G$.} For this program there are two cases: In Case 1 we have $\sigma \models G$, so we have $\ToExp{G}(\sigma) = 1$. 
The RMC in this case is of the following form:
\begin{center}
    \begin{tikzpicture}[->,>=stealth',shorten >=1pt,node distance=2.7cm,semithick,minimum size=1cm]
\tikzstyle{every state}=[draw=none]
  \draw[white, use as bounding box] (-1.85,-0.85) rectangle (6.2,.25);
   \node [state, initial, initial text=,label={[yshift=1.2cm, gray] 270:$0$}] (init) {$\langle \Observe~ G,\, \sigma\rangle$};  
   \node [state,label={[yshift=0.5cm, gray] 270:$f(\sigma)$}] (exit) [right=.8 cm of init] {$\langle\term,\,\sigma\rangle$};
   \node [state,label={[yshift=0.5cm, gray] 270:$0$}] (sink) [right=.8 of exit] {$\sink$};

  \path [] 
      (init) edge [] (exit)
      (exit) edge [] (sink)
      (sink) edge [loop right] (sink)
  ;
\end{tikzpicture} 
\end{center}
In this RMC we have $\Pi := \paths{\langle \Observe~G,\, \sigma \rangle}{\sink} \allowbreak=\allowbreak \{\hat\pi_1\}$ with $\hat\pi_1 = \langle \Observe~G,\, \sigma \rangle \rightarrow \langle \term,\, \sigma\rangle \rightarrow \sink$.
Then we have for the expected reward:
\begin{align*}
&\ExpRew{\Rdtmc{\sigma}{f}{\Observe~G}}{\lozenge \sinklabel}\\
 ~=~&\sum_{\hat\pi \in \Pi}\Pr(\hat\pi) \cdot r(\hat\pi)\\
 ~=~&\Pr(\hat\pi_1) \cdot r(\hat\pi_1)\\
 ~=~&1 \cdot f(\sigma)\\
 ~=~&\ToExp{G}(\sigma) \cdot f(\sigma)\\
 ~=~&(\ToExp{G}\cdot f)(\sigma)\\
 ~=~&\wp[\Observe~G](f)(\sigma)
\end{align*}
In Case 2 we have $\sigma \not\models G$, so we have $\ToExp{G}(\sigma) = 0$. 
The RMC in this case is of the following form:
\begin{center}
    \begin{tikzpicture}[->,>=stealth',shorten >=1pt,node distance=2.7cm,semithick,minimum size=1cm]
\tikzstyle{every state}=[draw=none]
  \draw[white, use as bounding box] (-1.9,-0.8) rectangle (6.2,.25);
   \node [state, initial, initial text=,label={[yshift=1.2cm, gray] 270:$0$}] (init) {$\langle \Observe~ G,\, \sigma\rangle$};  
   \node [state,label={[yshift=0.35cm, gray] 270:$0$}] (undesired) [right=.95 cm of init] {$\undesired$};
   \node [state,label={[yshift=0.5cm, gray] 270:$0$}] (sink) [right=.8 of exit] {$\sink$};

  \path [] 
      (init) edge [] (exit)
      (exit) edge [] (sink)
      (sink) edge [loop right] (sink)
  ;
\end{tikzpicture}
\end{center}
In this RMC we have $\Pi := \paths{\langle \Observe~G,\, \sigma \rangle}{\sink} \allowbreak=\allowbreak \{\hat\pi_1\}$ with $\hat\pi_1 = \langle \Observe~G,\, \sigma \rangle \rightarrow \undesired \rightarrow \sink$.
Then for the expected reward we also have:
\begin{align*}
&\ExpRew{\Rdtmc{\sigma}{f}{\Observe~G}}{\lozenge \sinklabel}\\
 ~=~&\sum_{\hat\pi \in \Pi}\Pr(\hat\pi) \cdot r(\hat\pi)\\
 ~=~&\Pr(\hat\pi_1) \cdot r(\hat\pi_1)\\
 ~=~&1 \cdot 0\\
 ~=~& 0\\
 ~=~& 0 \cdot f(\sigma)\\
 ~=~&\ToExp{G}(\sigma) \cdot f(\sigma)\\
 ~=~&(\ToExp{G}\cdot f)(\sigma)\\
 ~=~&\wp[\Observe~G](f)(\sigma)
\end{align*}

\textbf{The Concatenation $P;\, Q$.} For this program the RMC is of the following form:
\begin{center}
    \begin{tikzpicture}[->,>=stealth',shorten >=1pt,node distance=2.7cm,semithick,minimum size=1cm]
\tikzstyle{every state}=[draw=none]
  \draw[white, use as bounding box] (-1.5,-2.5) rectangle (6.8,.2);
   \node [state, initial, initial text=,label={[yshift=.9cm, gray] 270:$0$}] (init) {$\langle P;\, Q,\, \sigma\rangle$};  
   \node [state,label={[yshift=.9cm, gray] 270:$0$}] (termp) [on grid, right=2.5 cm of init] {$\langle \term;\,Q,\,\sigma'\rangle$};
      \node [state,label={[yshift=.7cm, gray] 270:$0$}] (q) [on grid, right=2 cm of termp] {$\langle Q,\,\sigma'\rangle$};
      \node [state,label={[yshift=.9cm, gray] 270:$0$}] (termp') [on grid, below=1.8 cm of termp] {$\langle \term;\,Q,\,\sigma''\rangle$};
       \node [state,label={[yshift=.7cm, gray] 270:$0$}] (q') [on grid, right=2 cm of termp'] {$\langle Q,\,\sigma''\rangle$};
\node [] (dummy) [below= 0.8 cm of init] {$\vdots$};
   
    \node [] (dummy1) [on grid, right =2cm of q] {$\ldots$};
    \node [] (dummy2) [on grid, right =2cm of q'] {$\ldots$};
  \path [] 
      (init) edge [decorate,decoration={snake, post length=2mm}] (termp)
      (init) edge [decorate,decoration={snake, post length=2mm}] (dummy)
      (init) edge [decorate,decoration={snake, post length=2mm}] (termp')
      (termp) edge [] (q)
      (termp') edge [] (q')
      (q) edge [decorate,decoration={snake, post length=2mm}] (dummy1)
      (q') edge [decorate,decoration={snake, post length=2mm}] (dummy2)
  ;
\end{tikzpicture}
\end{center}
In this RMC every path in $\paths{\langle P;\, Q,\, \sigma \rangle}{\sink}$ starts with $\langle P;\, Q,\, \sigma \rangle$, eventually reaches $\langle \term;\, Q,\, \sigma'\rangle$, and then immediately after that reaches $\langle Q,\, \sigma'\rangle$ which is the initial state of $\Rdtmc{\sigma'}{f}{Q}$ for which the expected reward is given by $\ExpRew{\Rdtmc{\sigma'}{f}{Q}}{\lozenge \sinklabel}$.
By this insight we can transform the above RMC into the RMC with equal expected reward below:
\begin{center}
    \begin{tikzpicture}[->,>=stealth',shorten >=1pt,node distance=2.7cm,semithick,minimum size=1cm]
\tikzstyle{every state}=[draw=none]
  \draw[white, use as bounding box] (-1.2,-2.7) rectangle (6.1,.25);
   \node [state, initial, initial text=,label={[yshift=.8cm, gray] 270:$0$}] (init) {$\langle P,\, \sigma\rangle$};  
   \node [state,label={[yshift=.6cm, gray] 280:$\ExpRew{\Rdtmc{\sigma'}{f}{Q}}{\lozenge \sinklabel}$}] (termp) [on grid, right=2.5 cm of init] {$\langle \term,\,\sigma'\rangle$};
      \node [state,label={[yshift=.6cm, gray] 280:$\ExpRew{\Rdtmc{\sigma''}{f}{Q}}{\lozenge \sinklabel}$}] (termp') [on grid, below=1.8 cm of termp] {$\langle \term,\,\sigma''\rangle$};
\node [] (dummy) [below= 1.2 cm of init] {$\vdots$};
   
  \path [] 
      (init) edge [decorate,decoration={snake, post length=2mm}] (termp)
      (init) edge [decorate,decoration={snake, post length=2mm}] (termp')
(init) edge [decorate,decoration={snake, post length=2mm}] (dummy)
  ;
\end{tikzpicture}
\end{center}But the above RMC is exactly $\Rdtmc{\sigma}{\lambda \tau.\ExpRew{\Rdtmc{\tau}{f}{Q}}{\lozenge \sinklabel}}{P}$ for which the expected reward is also known by the induction hypothesis. 
So we have
\begin{align*}
&\ExpRew{\Rdtmc{\sigma}{f}{P;\, Q}}{\lozenge \sinklabel}\\
 ~=~&\ExpRew{\Rdtmc{\sigma}{\lambda \tau. \ExpRew{\Rdtmc{\tau}{f}{Q}}{\lozenge \sinklabel}}{P}}{\lozenge \sinklabel}\\
 ~=~&\ExpRew{\Rdtmc{\sigma}{\wp[Q](f)}{P}}{\lozenge \sinklabel}\tag{I.H. on $Q$}\\
 ~=~&\wp[P](\wp[Q](f))(\sigma)\tag{I.H. on $P$}\\
 ~=~&\wp[P;\, Q](f)
\end{align*}

\textbf{The Conditional Choice $\Cond G P Q$.} For this program there are two cases: In Case 1 we have $\sigma \models G$, so we have $\ToExp{G}(\sigma) = 1$ and $\ToExp{\neg G}(\sigma) = 0$. 
The RMC in this case is of the following form:
\begin{center}
 \begin{tikzpicture}[->,>=stealth',shorten >=1pt,node distance=2.7cm,semithick,minimum size=1cm]
\tikzstyle{every state}=[draw=none]
  \draw[white, use as bounding box] (-2.8,-2.2) rectangle (5.25,.2);
   \node [state, initial, initial text=,label={[yshift=1.8cm, gray] 270:$0$}] (init) {$\langle \Cond G P Q~ G,\, \sigma\rangle$};  
   \node [state,label={[yshift=.6cm, gray] 260:$0$}] (p) [right=1.8 cm of init] {$\langle P,\,\sigma\rangle$};
    \node [] (dummy) [below =.7cm of p] {$\vdots$};
  \path [] 
      (init) edge [] (p)
      (p) edge [decorate,decoration={snake, post length=2mm}] (dummy)
  ;
\end{tikzpicture}
\end{center}
In this RMC every path in $\paths{\langle \Cond G P Q,\, \sigma \rangle}{\sink}$ starts with $\langle \Cond G P Q,\, \sigma \rangle \rightarrow \langle P,\, \sigma \rangle \rightarrow \cdots$.
As the state $\langle \Cond G P Q,\, \sigma \rangle$ collects zero reward, the expected reward of the above RMC is equal to the expected reward of the following RMC:
\begin{center}
    \begin{tikzpicture}[->,>=stealth',shorten >=1pt,node distance=2.7cm,semithick,minimum size=1cm]
\tikzstyle{every state}=[draw=none]
  \draw[white, use as bounding box] (-1.3,-.6) rectangle (2.2,.2);
      \node [state, initial, initial text=,label={[yshift=.7cm, gray] 270:$0$}] (p) [] {$\langle P,\,\sigma\rangle$};
    \node [] (dummy) [right =.7cm of p] {$\ldots$};
  \path [] 
      (p) edge [decorate,decoration={snake, post length=2mm}] (dummy)
  ;
\end{tikzpicture}
\end{center}
But the above RMC is exactly $\Rdtmc{\sigma}{f}{P}$ for which the expected reward is known by the induction hypothesis. 
So we have
\begin{align*}
&\ExpRew{\Rdtmc{\sigma}{f}{\Cond G P Q}}{\lozenge \sinklabel}\\
 ~=~&\ExpRew{\Rdtmc{\sigma}{f}{P}}{\lozenge \sinklabel}\\
 ~=~&\wp[P](f)(\sigma)\tag{I.H.}\\
 ~=~&1 \cdot \wp[P](f)(\sigma) + 0 \cdot \wp[Q](f)(\sigma) \\
 ~=~&\ToExp{G}(\sigma) \cdot \wp[P](f)(\sigma) + \ToExp{\neg G}(\sigma) \cdot \wp[Q](f)(\sigma) \\
 ~=~&\wp[\Cond G P Q](f)(\sigma)~.
\end{align*}
In Case 2 we have $\sigma \not\models G$, so we have $\ToExp{G}(\sigma) = 0$ and $\ToExp{\neg G}(\sigma) = 1$. 
The RMC in this case is of the following form:
\begin{center}
    \begin{tikzpicture}[->,>=stealth',shorten >=1pt,node distance=2.7cm,semithick,minimum size=1cm]
\tikzstyle{every state}=[draw=none]
  \draw[white, use as bounding box] (-2.8,-2.2) rectangle (5.3,.2);
   \node [state, initial, initial text=,label={[yshift=1.8cm, gray] 270:$0$}] (init) {$\langle \Cond G P Q~ G,\, \sigma\rangle$};  
   \node [state,label={[yshift=.6cm, gray] 260:$0$}] (q) [right=1.8 cm of init] {$\langle Q,\,\sigma\rangle$};
    \node [] (dummy) [below =.7cm of q] {$\vdots$};
  \path [] 
      (init) edge [] (q)
      (q) edge [decorate,decoration={snake, post length=2mm}] (dummy)
  ;
\end{tikzpicture}
\end{center}
In this RMC every path in $\paths{\langle \Cond G P Q,\, \sigma \rangle}{\sink}$ starts with $\langle \Cond G P Q,\, \sigma \rangle \rightarrow \langle Q,\, \sigma \rangle \rightarrow \cdots$.
As the state $\langle \Cond G P Q,\, \sigma \rangle$ collects zero reward, the expected reward of the above RMC is equal to the expected reward of the following RMC:
\begin{center}
    \begin{tikzpicture}[->,>=stealth',shorten >=1pt,node distance=2.7cm,semithick,minimum size=1cm]
\tikzstyle{every state}=[draw=none]
  \draw[white, use as bounding box] (-1.3,-.65) rectangle (2.2,.2);
      \node [state, initial, initial text=,label={[yshift=.7cm, gray] 270:$0$}] (q) [] {$\langle Q,\,\sigma\rangle$};
    \node [] (dummy) [right =.7cm of q] {$\ldots$};
  \path [] 
      (q) edge [decorate,decoration={snake, post length=2mm}] (dummy)
  ;
\end{tikzpicture}
\end{center}
But the above RMC is exactly $\Rdtmc{\sigma}{f}{Q}$ for which the expected reward is known by the induction hypothesis. 
So we also have
\begin{align*}
&\ExpRew{\Rdtmc{\sigma}{f}{\Cond G P Q}}{\lozenge \sinklabel}\\
 ~=~&\ExpRew{\Rdtmc{\sigma}{f}{Q}}{\lozenge \sinklabel}\\
 ~=~&\wp[Q](f)(\sigma)\tag{I.H.}\\
 ~=~&0 \cdot \wp[P](f)(\sigma) + 1 \cdot \wp[Q](f)(\sigma) \\
 ~=~&\ToExp{G}(\sigma) \cdot \wp[P](f)(\sigma) + \ToExp{\neg G}(\sigma) \cdot \wp[Q](f)(\sigma) \\
 ~=~&\wp[\Cond G P Q](f)(\sigma)~.
\end{align*}

\textbf{The Probabilistic Choice $\PChoice P p Q$}. For this program the RMC is of the following form:
\begin{center}
    \begin{tikzpicture}[->,>=stealth',shorten >=1pt,node distance=2.7cm,semithick,minimum size=1cm]
\tikzstyle{every state}=[draw=none]
  \draw[white, use as bounding box] (-1.9,-2.65) rectangle (5.2,.65);
   \node [state, initial, initial text=,label={[yshift=1.2cm, gray] 270:$0$}] (init) {$\langle \PChoice P p Q, \sigma\rangle$};  
   
   \node [state,label={[yshift=.7cm, gray] 270:$0$}] (p) [on grid, right=3 cm of init] {$\langle P,\,\sigma\rangle$};
      \node [state,label={[yshift=.7cm, gray] 270:$0$}] (q) [on grid, below  =2 cm of p] {$\langle Q,\,\sigma\rangle$};
      
    \node [] (dummy1) [on grid, right =2cm of p] {$\ldots$};
    \node [] (dummy2) [on grid, right =2cm of q] {$\ldots$};
  \path [] 
      (init) edge [] node [above] {\scriptsize{$p$}} (p)
      (init) edge [] node [above] {\scriptsize{$1-p$}} (q)
      (p) edge [decorate,decoration={snake, post length=2mm}] (dummy1)
      (q) edge [decorate,decoration={snake, post length=2mm}] (dummy2)
  ;
\end{tikzpicture}
\end{center}In this RMC every path in $\paths{\langle \PChoice P p Q,\, \sigma \rangle}{\sink}$ starts with $\langle \PChoice P p Q,\, \sigma \rangle$ and immediately after that reaches $\langle P,\, \sigma\rangle$ with probability $p$ or $\langle Q,\, \sigma\rangle$ with probability $1-p$.
$\langle P,\, \sigma\rangle$ is the initial state of $\Rdtmc{\sigma}{f}{P}$ and $\langle Q,\, \sigma\rangle$ is the initial state of $\Rdtmc{\sigma}{f}{Q}$.
By this insight we can transform the above RMC into the RMC with equal expected reward below:
\begin{center}
    \begin{tikzpicture}[->,>=stealth',shorten >=1pt,node distance=2.7cm,semithick,minimum size=1cm]
\tikzstyle{every state}=[draw=none]
  \draw[white, use as bounding box] (-1.9,-2.65) rectangle (6.45,.6);
   \node [state, initial, initial text=,label={[yshift=1.1cm, gray] 270:$0$}] (init) {$\langle \PChoice P p Q, \sigma\rangle$};  
   
   \node [state,label={[yshift=.8cm, gray] 280:$\ExpRew{\Rdtmc{\sigma}{f}{P}}{\lozenge \sinklabel}$}] (p) [on grid, right=3 cm of init] {$\langle P,\,\sigma\rangle$};
      \node [state,label={[yshift=.8cm, gray] 280:$\ExpRew{\Rdtmc{\sigma}{f}{Q}}{\lozenge \sinklabel}$}] (q) [on grid, below  =2 cm of p] {$\langle Q,\,\sigma\rangle$};
      
  \path [] 
      (init) edge [] node [above] {\scriptsize{$p$}} (p)
      (init) edge [] node [above] {\scriptsize{$1-p$}} (q)
  ;
\end{tikzpicture}
\end{center}The expected reward of the above RMC is given by $p \cdot \ExpRew{\Rdtmc{\sigma}{f}{P}}{\lozenge \sinklabel} + (1 - p) \cdot \ExpRew{\Rdtmc{\sigma}{f}{Q}}{\lozenge \sinklabel}$, so in total we have for the expected reward:
\begin{align*}
&\ExpRew{\Rdtmc{\sigma}{f}{\PChoice P p Q}}{\lozenge \sinklabel}\\
 ~=~&p \cdot \ExpRew{\Rdtmc{\sigma}{f}{P}}{\lozenge \sinklabel} \\
 &\qquad{}+ (1 - p) \cdot \ExpRew{\Rdtmc{\sigma}{f}{Q}}{\lozenge \sinklabel}\\
 ~=~&p \cdot \wp[P](f)(\sigma) + (1 - p) \cdot \wp[Q](f)(\sigma)\tag{I.H.}\\
 ~=~&\wp[\PChoice P p Q](f)~.
\end{align*}

\textbf{The Loop $\WhileDo G Q$.} By Lemma \ref{lem:whilekwhile} we have
\begin{align*}
&\wp[\WhileDo G P](f) ~=~ \sup_{k \in \Naturals} \wp[\WhileDok k G P](f)
\end{align*}
and as $\WhileDok k G P$ is a purely syntactical construct (made up from \Abort, \Skip, conditional choice, and $P$) we can (using what we have already established on \Abort, \Skip, conditional choice, and using the induction hypothesis on $P$) also establish that
\begin{align*}
&\wp[\WhileDo G P](f)\\
=~ &\sup_{k \in \Naturals} \ExpRew{\Rdtmc{\sigma}{f}{\WhileDok k G P}}{\lozenge \sinklabel}~.
\end{align*}
It is now left to show that
\begin{align}
&\sup_{k \in \Naturals} \ExpRew{\Rdtmc{\sigma}{f}{\WhileDok k G P}}{\lozenge \sinklabel}\label{eq:supkwhiledok}\\
=~ &\ExpRew{\Rdtmc{\sigma}{f}{\WhileDo G P}}{\lozenge \sinklabel}\label{eq:whiledo}~.
\end{align}
While the above is intuitively evident, it is a tedious and technically involved task to prove it. 
Herefore we just provide an intuition thereof:
For showing $\eqref{eq:supkwhiledok} \leq \eqref{eq:whiledo}$, we know that every path in the RMDP $\Rdtmc{\sigma}{f}{\WhileDok k G P}$ either terminates properly or is prematurely aborted (yielding 0 reward) due to the fact that the bound of less than $k$ loop iterations was reached. The RMDP $\Rdtmc{\sigma}{f}{\WhileDo G P}$ for the unbounded while--loop does not prematurely abort executions, so left--hand--side is upper bounded by the right--hand--side of the equation. 
For showing $\eqref{eq:supkwhiledok} \geq \eqref{eq:whiledo}$, we know that a path that collects positive reward is necessarily finite. Therefore there exists some $k \in \Naturals$ such that $\Rdtmc{\sigma}{f}{\WhileDok k G P}$ includes this path. Taking the supremum over $k$ we eventually include every path  in $\Rdtmc{\sigma}{f}{\WhileDo G P}$ that collects positive reward.
\end{proof}

\subsection{Proof of Lemma \ref{lem:ExpRew_is_wp} (ii)}

\label{proofof:LExpRew_is_wlp}

\begingroup
\def\thelemma{\ref{lem:ExpRew_is_wp} (ii)}
\begin{lemma}
For $P \in \cfpGCL$, $f \in \Ex, g \in \BEx$, and $\sigma \in \State$:
\begin{align*}
\LExpRew{\Rdtmc{\sigma}{g}{P}}{\lozenge \sink} &= \wlp[P](g)(\sigma)
\end{align*}
\end{lemma}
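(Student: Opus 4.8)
The plan is to proceed by structural induction on $P$, closely mirroring the proof of Lemma~\ref{lem:ExpRew_is_wp}~(i). Throughout I would exploit that $\Rdtmc{\sigma}{g}{P}$ has exactly the same transition structure as in part~(i)---only the reward function depends on the post--expectation, not the transition probabilities---so every RMC diagram and every ``collapsing'' transformation used there can be reused verbatim. The one new ingredient is the definition of the liberal reward, which I would unfold as
\[
\LExpRew{\Rdtmc{\sigma}{g}{P}}{\lozenge \sink} ~=~ \ExpRew{\Rdtmc{\sigma}{g}{P}}{\lozenge \sink} + \Prr{\Rdtmc{\sigma}{g}{P}}(\neg\lozenge\sink),
\]
so that the first summand is already settled by part~(i) and the whole argument reduces to tracking how the divergence probability $\Prr{\Rdtmc{\sigma}{g}{P}}(\neg\lozenge\sink)$ (the probability of never reaching $\sink$) propagates. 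For the base cases $\Skip$, $\Ass{x}{E}$ and $\Observe~G$ the RMC reaches $\sink$ almost surely, so the divergence term vanishes, and since $\wlp$ coincides with $\wp$ on these statements the claim is immediate from part~(i). The one genuinely different base case is $\Abort$: here the RMC self--loops forever, whence $\ExpRew{\Rdtmc{\sigma}{g}{\Abort}}{\lozenge\sink}=0$ and $\Prr{\Rdtmc{\sigma}{g}{\Abort}}(\neg\lozenge\sink)=1$, giving $\LExpRew{\Rdtmc{\sigma}{g}{\Abort}}{\lozenge\sink}=1=\wlp[\Abort](g)(\sigma)$.

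For the inductive cases of concatenation, conditional choice and probabilistic choice I would reuse the RMC pictures of part~(i) with $\ExpRew$ replaced by $\LExpRew$, the point being that the divergence probability decomposes exactly like the ordinary reward. For $\PChoice{P}{p}{Q}$ it is the convex combination $p\cdot\Prr{\Rdtmc{\sigma}{g}{P}}(\neg\lozenge\sink)+(1-p)\cdot\Prr{\Rdtmc{\sigma}{g}{Q}}(\neg\lozenge\sink)$, so $\LExpRew$ is itself a convex combination and equals $p\cdot\wlp[P](g)+(1-p)\cdot\wlp[Q](g)=\wlp[\PChoice{P}{p}{Q}](g)$ by the induction hypothesis; the conditional choice is the degenerate case selecting one branch. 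For $P;Q$ I would apply the same collapse as in part~(i), now assigning each terminal state $\langle\term,\sigma'\rangle$ of $P$ the reward $\LExpRew{\Rdtmc{\sigma'}{g}{Q}}{\lozenge\sink}$; a short computation shows that the liberal reward of this collapsed RMC equals $\LExpRew{\Rdtmc{\sigma}{g}{P;Q}}{\lozenge\sink}$, because $P;Q$ diverges either when $P$ diverges or when $P$ terminates successfully in some $\sigma'$ and $Q$ subsequently diverges from $\sigma'$. Two applications of the induction hypothesis then yield $\wlp[P](\wlp[Q](g))(\sigma)=\wlp[P;Q](g)(\sigma)$.

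The loop $\WhileDo{G}{P}$ is the main obstacle and needs a limiting argument dual to the one for $\wp$. Reusing the bounded loops $\WhileDok{k}{G}{P}$, I would first prove by induction on $k$ that $\wlp[\WhileDok{k}{G}{P}](g)=F^k(\ExpOne)$, where $F(X)=\ToExp{G}\cdot\wlp[P](X)+\ToExp{\neg G}\cdot g$; here the base case $\WhileDok{0}{G}{P}=\Abort$ contributes the top element $\ExpOne$ rather than $\ExpZero$. Since $\wlp[P]$ is continuous over $(\BEx,\sqsupseteq)$ by Lemma~\ref{thm:wp-ext-is-cont}, $F$ is continuous in the reversed order, so the Kleene Fixed Point Theorem applied to $\sqsupseteq$ gives $\wlp[\WhileDo{G}{P}](g)=\inf_{k}F^k(\ExpOne)=\inf_{k}\wlp[\WhileDok{k}{G}{P}](g)$, a decreasing iteration from $\ExpOne$ realising the greatest fixed point. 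As $\WhileDok{k}{G}{P}$ is built only from $\Abort$, $\Skip$, conditional choice and $P$, the already--established cases together with the induction hypothesis on $P$ give $\LExpRew{\Rdtmc{\sigma}{g}{\WhileDok{k}{G}{P}}}{\lozenge\sink}=\wlp[\WhileDok{k}{G}{P}](g)(\sigma)$ for every $k$.

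It then remains to match the two sides in the limit, that is, to show
\[
\inf_{k\in\Naturals}\LExpRew{\Rdtmc{\sigma}{g}{\WhileDok{k}{G}{P}}}{\lozenge\sink} ~=~ \LExpRew{\Rdtmc{\sigma}{g}{\WhileDo{G}{P}}}{\lozenge\sink}.
\]
This is the technically delicate step---the analogue of the point left informal in part~(i), and subtler here: the only difference between the bounded and the unbounded RMC is the family of runs cut off by the depth--$k$ $\Abort$, each of which the \emph{liberal} reward counts as divergence with the full weight $1$; as $k$ grows this extra probability mass decreases monotonically to the genuine divergence probability of the unbounded loop, so the infimum over $k$ collapses precisely onto $\LExpRew{\Rdtmc{\sigma}{g}{\WhileDo{G}{P}}}{\lozenge\sink}$. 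Combining the displayed equalities closes the loop case, and with it the induction. I expect this convergence-from-above argument for the liberal reward of bounded loops to be the hardest part, precisely because the liberal semantics, unlike $\wp$, must account for the pseudo--divergent cut--off runs and show that their mass tends exactly to the true non--termination probability.
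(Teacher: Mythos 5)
Your proposal is correct and follows essentially the same route as the paper's proof: structural induction with a per--case RMC analysis, the collapse construction for sequencing with terminal states of $P$ relabelled by the liberal reward of $Q$, and a bounded--loop approximation for \While. The only difference is one of detail --- the paper dispatches the loop case with a single sentence (``dual to the non--liberal case''), whereas you spell out the decreasing Kleene iteration from $\ExpOne$ and correctly isolate the convergence of the cut--off mass as the delicate step.
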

\addtocounter{lemma}{-1}
\endgroup

\begin{proof}
The proof goes by induction over all \cfpGCL programs. 
For the induction base we have:
\textbf{The Effectless Program \Skip.} The RMC for this program is of the following form:
\begin{center}
    \begin{tikzpicture}[->,>=stealth',shorten >=1pt,node distance=2.5cm,semithick,minimum size=1cm]
\tikzstyle{every state}=[draw=none]
  \draw[white, use as bounding box] (-1.6,-0.85) rectangle (6.5,.25);
   \node [state, initial, initial text=,label={[yshift=0.76cm, gray] 270:$0$}] (init) {$\langle \Skip,\, \sigma\rangle$};  
   \node [state,label={[yshift=0.5cm, gray] 270:$f(\sigma)$}] (exit) [right of=init] {$\langle\term,\,\sigma\rangle$};
   \node [state,label={[yshift=0.5cm, gray] 270:$0$}] (sink) [right of=exit] {$\sink$};

  \path [] 
      (init) edge [] (exit)
      (exit) edge [] (sink)
      (sink) edge [loop right] (sink)
  ;
\end{tikzpicture}
\end{center}
In this RMC we have $\Pi := \paths{\langle \Skip,\, \sigma \rangle}{\sink} = \{\hat\pi_1\}$ with $\hat\pi_1 = \langle \Skip,\, \sigma \rangle \rightarrow \langle \term,\, \sigma\rangle \rightarrow \sink$.
Then we have for the liberal expected reward:
\begin{align*}
&\LExpRew{\Rdtmc{\sigma}{g}{\Skip}}{\lozenge \sinklabel}\\
 ~=~&\sum_{\hat\pi \in \Pi}\Pr(\hat\pi) \cdot r(\hat\pi) + \Pr(\neg\lozenge \sink)\\
 ~=~&\Pr(\hat\pi) \cdot r(\hat\pi) + 0\\
 ~=~&1 \cdot g(\sigma)\\
 ~=~ &g(\sigma)\\
 ~=~ &\wlp[\Skip](g)(\sigma)
\end{align*}

\textbf{The Faulty Program \Abort.} The RMC for this program is of the following form:
\begin{center}
    \begin{tikzpicture}[->,>=stealth',shorten >=1pt,auto,node distance=3.75cm,semithick,minimum size=1cm]
\tikzstyle{every state}=[draw=none]
  \draw[white, use as bounding box] (-1.5,-0.75) rectangle (5.15,.39);
   \node [state, initial, initial text=,label={[yshift=0.9cm, gray] 270:$0$}] (init) {$\langle \Abort,\, \sigma\rangle$};  
      \node [state,label={[yshift=0.55cm, gray] 270:$0$}] (sink) [right of=init] {$\sink$};

  \path []
       (init) edge [loop right] (init)
       (sink) edge [loop right] (sink);
       ;
\end{tikzpicture}
\end{center}
In this RMC we have $\Pi := \paths{\langle \Abort,\, \sigma \rangle}{\sink} = \emptyset$.
Then we have for the liberal expected reward:
\begin{align*}
&\ExpRew{\Rdtmc{\sigma}{g}{\Abort}}{\lozenge \sinklabel}\\
 ~=~&\sum_{\hat\pi \in \Pi}\Pr(\hat\pi) \cdot r(\hat\pi) + \Pr(\neg\lozenge \sink)\\
 ~=~&\sum_{\hat\pi \in \emptyset}\Pr(\hat\pi) \cdot r(\hat\pi) + 1\\
 ~=~&0 + 1\\
  ~=~&1\\
 ~=~&\ExpOne(\sigma)\\
 ~=~ &\wlp[\Abort](g)(\sigma)
\end{align*}

\textbf{The Assignment $\Ass x E$.} The RMC for this program is of the following form:
\begin{center}
    \begin{tikzpicture}[->,>=stealth',shorten >=1pt,node distance=2.5cm,semithick,minimum size=1cm]
\tikzstyle{every state}=[draw=none]
  \draw[white, use as bounding box] (-1.6,-0.85) rectangle (6.4,.25);
   \node [state, initial, initial text=,label={[yshift=0.9cm, gray] 270:$0$}] (init) {$\langle \Ass x E,\, \sigma\rangle$};  
   \node [state,label={[yshift=0.89cm, gray] 270:$f(\sigma[E/x])$}] (exit) [right of=init] {$\langle\term,\,\sigma[E/x]\rangle$};
   \node [state,label={[yshift=0.5cm, gray] 270:$0$}] (sink) [right of=exit] {$\sink$};

  \path [] 
      (init) edge [] (exit)
      (exit) edge [] (sink)
      (sink) edge [loop right] (sink)
  ;
\end{tikzpicture}
\end{center}
In this RMC we have $\Pi := \paths{\langle \Ass x E,\, \sigma \rangle}{\sink} = \{\hat\pi_1\}$ with $\hat\pi_1 = \langle \Ass x E,\, \sigma \rangle \rightarrow \langle \term,\, \sigma[E / x]\rangle \rightarrow \sink$.
Then we have for the liberal expected reward:
\begin{align*}
&\LExpRew{\Rdtmc{\sigma}{g}{\Ass x E}}{\lozenge \sinklabel}\\
 ~=~&\sum_{\hat\pi \in \Pi}\Pr(\hat\pi) \cdot r(\hat\pi) + \Pr(\neg\lozenge \sink)\\
 ~=~&\Pr(\hat\pi_1) \cdot r(\hat\pi_1) + 0\\
 ~=~&1 \cdot g(\sigma[E / x])\\
 ~=~&g(\sigma[E / x])\\
 ~=~&g[E / x](\sigma)\\
 ~=~&\wlp[\Ass x E](g)(\sigma)
\end{align*}

\textbf{The Observation $\Observe~ G$.} For this program there are two cases: In Case 1 we have $\sigma \models G$, so we have $\ToExp{G}(\sigma) = 1$. 
The RMC in this case is of the following form:
\begin{center}
    \begin{tikzpicture}[->,>=stealth',shorten >=1pt,node distance=2.7cm,semithick,minimum size=1cm]
\tikzstyle{every state}=[draw=none]
  \draw[white, use as bounding box] (-1.85,-0.85) rectangle (6.2,.25);
   \node [state, initial, initial text=,label={[yshift=1.2cm, gray] 270:$0$}] (init) {$\langle \Observe~ G,\, \sigma\rangle$};  
   \node [state,label={[yshift=0.5cm, gray] 270:$f(\sigma)$}] (exit) [right=.8 cm of init] {$\langle\term,\,\sigma\rangle$};
   \node [state,label={[yshift=0.5cm, gray] 270:$0$}] (sink) [right=.8 of exit] {$\sink$};

  \path [] 
      (init) edge [] (exit)
      (exit) edge [] (sink)
      (sink) edge [loop right] (sink)
  ;
\end{tikzpicture} 
\end{center}
In this RMC we have $\Pi := \paths{\langle \Observe~G,\, \sigma \rangle}{\sink} \allowbreak=\allowbreak \{\hat\pi_1\}$ with $\hat\pi_1 = \langle \Observe~G,\, \sigma \rangle \rightarrow \langle \term,\, \sigma\rangle \rightarrow \sink$.
Then we have for the liberal expected reward:
\begin{align*}
&\LExpRew{\Rdtmc{\sigma}{g}{\Observe~G}}{\lozenge \sinklabel}\\
 ~=~&\sum_{\hat\pi \in \Pi}\Pr(\hat\pi) \cdot r(\hat\pi) + \Pr(\neg\lozenge \sink)\\
 ~=~&\Pr(\hat\pi_1) \cdot r(\hat\pi_1) + 0\\
 ~=~&1 \cdot g(\sigma)\\
 ~=~&\ToExp{G}(\sigma) \cdot g(\sigma)\\
 ~=~&(\ToExp{G}\cdot g)(\sigma)\\
 ~=~&\wlp[\Observe~G](g)(\sigma)
\end{align*}
In Case 2 we have $\sigma \not\models G$, so we have $\ToExp{G}(\sigma) = 0$. 
The RMC in this case is of the following form:
\begin{center}
    \begin{tikzpicture}[->,>=stealth',shorten >=1pt,node distance=2.7cm,semithick,minimum size=1cm]
\tikzstyle{every state}=[draw=none]
  \draw[white, use as bounding box] (-1.9,-0.8) rectangle (6.2,.25);
   \node [state, initial, initial text=,label={[yshift=1.2cm, gray] 270:$0$}] (init) {$\langle \Observe~ G,\, \sigma\rangle$};  
   \node [state,label={[yshift=0.35cm, gray] 270:$0$}] (undesired) [right=.95 cm of init] {$\undesired$};
   \node [state,label={[yshift=0.5cm, gray] 270:$0$}] (sink) [right=.8 of exit] {$\sink$};

  \path [] 
      (init) edge [] (exit)
      (exit) edge [] (sink)
      (sink) edge [loop right] (sink)
  ;
\end{tikzpicture}
\end{center}
In this RMC we have $\Pi := \paths{\langle \Observe~G,\, \sigma \rangle}{\sink} \allowbreak=\allowbreak \{\hat\pi_1\}$ with $\hat\pi_1 = \langle \Observe~G,\, \sigma \rangle \rightarrow \undesired \rightarrow \sink$.
Then we have for the liberal expected reward:
\begin{align*}
&\LExpRew{\Rdtmc{\sigma}{g}{\Observe~G}}{\lozenge \sinklabel}\\
 ~=~&\sum_{\hat\pi \in \Pi}\Pr(\hat\pi) \cdot r(\hat\pi)  + \Pr(\neg\lozenge \sink)\\
 ~=~&\Pr(\hat\pi_1) \cdot r(\hat\pi_1) + 0\\
 ~=~&1 \cdot 0\\
 ~=~& 0\\
 ~=~& 0 \cdot g(\sigma)\\
 ~=~&\ToExp{G}(\sigma) \cdot g(\sigma)\\
 ~=~&(\ToExp{G}\cdot g)(\sigma)\\
 ~=~&\wlp[\Observe~G](g)(\sigma)
\end{align*}

\textbf{The Concatenation $P;\, Q$.} 
For this program the RMC is of the following form:
\begin{center}
 \scalebox{.84}{\begin{tikzpicture}[->,>=stealth',shorten >=1pt,node distance=2.7cm,semithick,minimum size=1cm, inner sep=.5mm]
\tikzstyle{every state}=[draw=none]
  \draw[white, use as bounding box] (-1.3,-2.5) rectangle (8.6,2.7);
   \node [state, initial, initial text=,label={[yshift=.9cm, gray] 270:$0$}] (init) {$\langle P;\, Q,\, \sigma\rangle$};  
   \node [state,label={[yshift=.9cm, gray] 270:$0$}] (termp) [on grid, right=2.5 cm of init] {$\langle \term;\,Q,\,\sigma'\rangle$};
      \node [state,label={[yshift=.7cm, gray] 270:$0$}] (q) [on grid, right=2 cm of termp] {$\langle Q,\,\sigma'\rangle$};
      \node [state,label={[yshift=.9cm, gray] 270:$0$}] (termp') [on grid, below=1.8 cm of termp] {$\langle \term;\,Q,\,\sigma''\rangle$};
       \node [state,label={[yshift=.7cm, gray] 270:$0$}] (q') [on grid, right=2 cm of termp'] {$\langle Q,\,\sigma''\rangle$};
       \node [state] (diverge1) [above=.7cm of termp] {$\mathpzc{diverge}$\ldots};
       \node [state] (diverge2) [above right=1cm of dummy1] {$\mathpzc{diverge}$\ldots};
   
    \node [] (dummy1) [on grid, right =2cm of q] {$\ldots$};
    \node [] (dummy2) [on grid, right =2cm of q'] {$\ldots$};
  \path [] 
      (init) edge [decorate,decoration={snake, post length=2mm}] (termp)
      (init) edge [decorate,decoration={snake, post length=2mm}] (termp')
      (termp) edge [] (q)
      (termp') edge [] (q')
      (q) edge [decorate,decoration={snake, post length=2mm}] (dummy1)
      (q') edge [decorate,decoration={snake, post length=2mm}] (dummy2)
      (init) edge [decorate,decoration={snake, post length=2mm}] (diverge1)
      (diverge1) edge [decorate,decoration={snake, post length=2mm}, loop right] (diverge1)
      (q) edge [decorate,decoration={snake, post length=2mm}] (diverge2)
      (diverge2) edge [decorate,decoration={snake, post length=2mm}, loop right] (diverge2)
  ;
\end{tikzpicture}}
\end{center}
In this RMC every path in $\paths{\langle P;\, Q,\, \sigma \rangle}{\sink}$ starts with $\langle P;\, Q,\, \sigma \rangle$, eventually reaches $\langle \term;\, Q,\, \sigma\rangle$, and then immediately after that reaches $\langle Q,\, \sigma\rangle$ which is the initial state of $\Rdtmc{\sigma}{g}{Q}$. 
Every diverging path either diverges because the program $P$ diverges or because the program $Q$ diverges.
If we attempt to make the RMC smaller (while preserving the liberal expected reward) by cutting it off at states of the form $\langle \term;\, Q, \tau\rangle$, we have to assign to them the liberal expected reward $\LExpRew{\Rdtmc{\tau}{g}{Q}}{\lozenge \sinklabel}$ in order to not loose the non--termination probability caused by $Q$.
By this insight we can now transform the above RMC into the RMC with equal liberal expected reward below:
\begin{center}
 \begin{tikzpicture}[->,>=stealth',shorten >=1pt,node distance=2.7cm,semithick,minimum size=1cm]
\tikzstyle{every state}=[draw=none]
  \draw[white, use as bounding box] (-1.15,-2.7) rectangle (6.3,1.95);
   \node [state, initial, initial text=,label={[yshift=.4cm, gray] 270:$0$}] (init) {$\langle P,\, \sigma\rangle$};  
   \node [state,label={[yshift=.6cm, gray] 280:$\LExpRew{\Rdtmc{\sigma'}{f}{Q}}{\lozenge \sinklabel}$}] (termp) [on grid, right=2.5 cm of init] {$\langle \term,\,\sigma'\rangle$};
      \node [state,label={[yshift=.6cm, gray] 280:$\LExpRew{\Rdtmc{\sigma''}{f}{Q}}{\lozenge \sinklabel}$}] (termp') [on grid, below=1.8 cm of termp] {$\langle \term,\,\sigma''\rangle$};
\node [state] (diverge1) [above=0cm of termp] {$\mathpzc{diverge}$\ldots};
   
  \path [] 
      (init) edge [decorate,decoration={snake, post length=2mm}] (termp)
      (init) edge [decorate,decoration={snake, post length=2mm}] (termp')
(diverge1) edge [decorate,decoration={snake, post length=2mm}, loop right] (diverge1)
      (init) edge [decorate,decoration={snake, post length=2mm}] (diverge1)
  ;
\end{tikzpicture}
\end{center}
But the above RMC is exactly $\Rdtmc{\sigma}{\LExpRew{\Rdtmc{\sigma}{g}{Q}}{\lozenge \sinklabel}}{P}$ for which the liberal expected reward is known by the induction hypothesis. 
So we have for the liberal expected reward:
\begin{align*}
&\LExpRew{\Rdtmc{\sigma}{g}{P;\, Q}}{\lozenge \sinklabel}\\
 ~=~&\LExpRew{\Rdtmc{\sigma}{\LExpRew{\Rdtmc{\sigma}{g}{Q}}{\lozenge \sinklabel}}{P}}{\lozenge \sinklabel}\\
 ~=~&\LExpRew{\Rdtmc{\sigma}{\wlp[Q](g)}{P}}{\lozenge \sinklabel}\tag{I.H. on $Q$}\\
 ~=~&\wlp[P](\wlp[Q](g))(\sigma)\tag{I.H. on $P$}\\
 ~=~&\wlp[P;\, Q](g)~.
\end{align*}

\textbf{The Conditional Choice $\Cond G P Q$.} For this program there are two cases: In Case 1 we have $\sigma \models G$, so we have $\ToExp{G}(\sigma) = 1$ and $\ToExp{\neg G}(\sigma) = 0$. 
The RMC in this case is of the following form:
\begin{center}
 \begin{tikzpicture}[->,>=stealth',shorten >=1pt,node distance=2.7cm,semithick,minimum size=1cm]
\tikzstyle{every state}=[draw=none]
  \draw[white, use as bounding box] (-2.8,-2.2) rectangle (5.25,.2);
   \node [state, initial, initial text=,label={[yshift=1.8cm, gray] 270:$0$}] (init) {$\langle \Cond G P Q~ G,\, \sigma\rangle$};  
   \node [state,label={[yshift=.6cm, gray] 260:$0$}] (p) [right=1.8 cm of init] {$\langle P,\,\sigma\rangle$};
    \node [] (dummy) [below =.7cm of p] {$\vdots$};
  \path [] 
      (init) edge [] (p)
      (p) edge [decorate,decoration={snake, post length=2mm}] (dummy)
  ;
\end{tikzpicture}
\end{center}
As the state $\langle \Cond G P Q,\, \sigma \rangle$ collects zero reward, the expected reward of the above RMC is equal to the expected reward of the following RMC:
\begin{center}
 \begin{tikzpicture}[->,>=stealth',shorten >=1pt,node distance=2.7cm,semithick,minimum size=1cm]
\tikzstyle{every state}=[draw=none]
  \draw[white, use as bounding box] (-1.3,-.6) rectangle (2.2,.2);
      \node [state, initial, initial text=,label={[yshift=.7cm, gray] 270:$0$}] (p) [] {$\langle P,\,\sigma\rangle$};
    \node [] (dummy) [right =.7cm of p] {$\ldots$};
  \path [] 
      (p) edge [decorate,decoration={snake, post length=2mm}] (dummy)
  ;
\end{tikzpicture}
\end{center}
But the above RMC is exactly $\Rdtmc{\sigma}{g}{P}$ for which the expected reward is known by Lemma . 
A similar argument can be applied to the probability of not eventually reaching $\sink$.
So we have for the liberal expected reward:
\begin{align*}
&\LExpRew{\Rdtmc{\sigma}{g}{\Cond G P Q}}{\lozenge \sinklabel}\\
 ~=~&\ExpRew{\Rdtmc{\sigma}{g}{\Cond G P Q}}{\lozenge \sinklabel}\\
        &{} + \Pr^{\Rdtmc{\sigma}{g}{\Cond G P Q}}(\neg\lozenge \sink)\\
 ~=~&\ExpRew{\Rdtmc{\sigma}{g}{P}}{\lozenge \sinklabel}  + \Pr^{\Rdtmc{\sigma}{g}{P}}(\neg\lozenge \sink)\\
 ~=~&\wlp[P](g)(\sigma)\tag{I.H.}\\
 ~=~&1 \cdot \wlp[P](g)(\sigma) + 0 \cdot \wlp[Q](g)(\sigma) \\
 ~=~&\ToExp{G}(\sigma) \cdot \wlp[P](g)(\sigma) + \ToExp{\neg G}(\sigma) \cdot \wlp[Q](g)(\sigma) \\
 ~=~&\wlp[\Cond G P Q](g)(\sigma)~.
\end{align*}
In Case 2 we have $\sigma \not\models G$, so we have $\ToExp{G}(\sigma) = 0$ and $\ToExp{\neg G}(\sigma) = 1$. 
The RMC in this case is of the following form:
\begin{center}
 \begin{tikzpicture}[->,>=stealth',shorten >=1pt,node distance=2.7cm,semithick,minimum size=1cm]
\tikzstyle{every state}=[draw=none]
  \draw[white, use as bounding box] (-2.8,-2.2) rectangle (5.3,.2);
   \node [state, initial, initial text=,label={[yshift=1.8cm, gray] 270:$0$}] (init) {$\langle \Cond G P Q~ G,\, \sigma\rangle$};  
   \node [state,label={[yshift=.6cm, gray] 260:$0$}] (q) [right=1.8 cm of init] {$\langle Q,\,\sigma\rangle$};
    \node [] (dummy) [below =.7cm of q] {$\vdots$};
  \path [] 
      (init) edge [] (q)
      (q) edge [decorate,decoration={snake, post length=2mm}] (dummy)
  ;
\end{tikzpicture}
\end{center}
In this RMC every path in $\paths{\langle \Cond G P Q,\, \sigma \rangle}{\sink}$ starts with $\langle \Cond G P Q,\, \sigma \rangle \rightarrow \langle Q,\, \sigma \rangle \rightarrow \cdots$.
As the state $\langle \Cond G P Q,\, \sigma \rangle$ collects zero reward, the expected reward of the above RMC is equal to the expected reward of the following RMC:
\begin{center}
 \begin{tikzpicture}[->,>=stealth',shorten >=1pt,node distance=2.7cm,semithick,minimum size=1cm]
\tikzstyle{every state}=[draw=none]
  \draw[white, use as bounding box] (-1.3,-.65) rectangle (2.2,.2);
      \node [state, initial, initial text=,label={[yshift=.7cm, gray] 270:$0$}] (q) [] {$\langle Q,\,\sigma\rangle$};
    \node [] (dummy) [right =.7cm of q] {$\ldots$};
  \path [] 
      (q) edge [decorate,decoration={snake, post length=2mm}] (dummy)
  ;
\end{tikzpicture}
\end{center}
But the above RMC is exactly $\Rdtmc{\sigma}{g}{Q}$ for which the expected reward is known by the induction hypothesis. 
A similar argument can be applied to the probability of not eventually reaching $\sink$.
So we also have for the liberal expected reward:
\begin{align*}
&\LExpRew{\Rdtmc{\sigma}{g}{\Cond G P Q}}{\lozenge \sinklabel}\\
 ~=~&\ExpRew{\Rdtmc{\sigma}{g}{\Cond G P Q}}{\lozenge \sinklabel}\\
        &{} + \Pr^{\Rdtmc{\sigma}{g}{\Cond G P Q}}(\neg\lozenge \sink)\\
 ~=~&\ExpRew{\Rdtmc{\sigma}{g}{Q}}{\lozenge \sinklabel}  + \Pr^{\Rdtmc{\sigma}{g}{Q}}(\neg\lozenge \sink)\\
 ~=~&\wlp[Q](g)(\sigma)\tag{I.H.}\\
 ~=~&0 \cdot \wlp[P](g)(\sigma) + 1 \cdot \wlp[Q](g)(\sigma) \\
 ~=~&\ToExp{G}(\sigma) \cdot \wlp[P](g)(\sigma) + \ToExp{\neg G}(\sigma) \cdot \wlp[Q](g)(\sigma) \\
 ~=~&\wlp[\Cond G P Q](g)(\sigma)~.
\end{align*}

\textbf{The Probabilistic Choice $\PChoice P p Q$.} For this program the RMC is of the following form:
\begin{center}
    \begin{tikzpicture}[->,>=stealth',shorten >=1pt,node distance=2.7cm,semithick,minimum size=1cm]
\tikzstyle{every state}=[draw=none]
  \draw[white, use as bounding box] (-1.9,-2.65) rectangle (5.2,.65);
   \node [state, initial, initial text=,label={[yshift=1.2cm, gray] 270:$0$}] (init) {$\langle \PChoice P p Q, \sigma\rangle$};  
   
   \node [state,label={[yshift=.7cm, gray] 270:$0$}] (p) [on grid, right=3 cm of init] {$\langle P,\,\sigma\rangle$};
      \node [state,label={[yshift=.7cm, gray] 270:$0$}] (q) [on grid, below  =2 cm of p] {$\langle Q,\,\sigma\rangle$};
      
    \node [] (dummy1) [on grid, right =2cm of p] {$\ldots$};
    \node [] (dummy2) [on grid, right =2cm of q] {$\ldots$};
  \path [] 
      (init) edge [] node [above] {\scriptsize{$p$}} (p)
      (init) edge [] node [above] {\scriptsize{$1-p$}} (q)
      (p) edge [decorate,decoration={snake, post length=2mm}] (dummy1)
      (q) edge [decorate,decoration={snake, post length=2mm}] (dummy2)
  ;
\end{tikzpicture}
\end{center}In this RMC every path in $\paths{\langle \PChoice P p Q,\, \sigma \rangle}{\sink}$ starts with $\langle \PChoice P p Q,\, \sigma \rangle$ and immediately after that reaches $\langle P,\, \sigma\rangle$ with probability $p$ or $\langle Q,\, \sigma\rangle$ with probability $1-p$.
$\langle P,\, \sigma\rangle$ is the initial state of $\Rdtmc{\sigma}{f}{P}$ and $\langle Q,\, \sigma\rangle$ is the initial state of $\Rdtmc{\sigma}{f}{Q}$.
The same holds for all paths that do not eventually reach \sink.
By this insight we can transform the above RMC into the RMC with equal liberal expected reward below:
\begin{center}
    \begin{tikzpicture}[->,>=stealth',shorten >=1pt,node distance=2.7cm,semithick,minimum size=1cm]
\tikzstyle{every state}=[draw=none]
  \draw[white, use as bounding box] (-1.9,-2.65) rectangle (6.45,.6);
   \node [state, initial, initial text=,label={[yshift=1.1cm, gray] 270:$0$}] (init) {$\langle \PChoice P p Q, \sigma\rangle$};  
   
   \node [state,label={[yshift=.8cm, gray] 280:$\ExpRew{\Rdtmc{\sigma}{f}{P}}{\lozenge \sinklabel}$}] (p) [on grid, right=3 cm of init] {$\langle P,\,\sigma\rangle$};
      \node [state,label={[yshift=.8cm, gray] 280:$\ExpRew{\Rdtmc{\sigma}{f}{Q}}{\lozenge \sinklabel}$}] (q) [on grid, below  =2 cm of p] {$\langle Q,\,\sigma\rangle$};
      
  \path [] 
      (init) edge [] node [above] {\scriptsize{$p$}} (p)
      (init) edge [] node [above] {\scriptsize{$1-p$}} (q)
  ;
\end{tikzpicture}
\end{center}The liberal expected reward of the above RMC is given by $p \cdot \LExpRew{\Rdtmc{\sigma}{f}{P}}{\lozenge \sinklabel} + (1 - p) \cdot \LExpRew{\Rdtmc{\sigma}{f}{Q}}{\lozenge \sinklabel}$, so in total we have for the liberal expected reward:
\begin{align*}
&\LExpRew{\Rdtmc{\sigma}{f}{\PChoice P p Q}}{\lozenge \sinklabel}\\
 ~=~&p \cdot \LExpRew{\Rdtmc{\sigma}{f}{P}}{\lozenge \sinklabel} \\
 &\qquad {}+ (1 - p) \cdot \LExpRew{\Rdtmc{\sigma}{f}{Q}}{\lozenge \sinklabel}\\
 ~=~&p \cdot \wlp[P](f)(\sigma) + (1 - p) \cdot \wlp[Q](f)(\sigma)\tag{I.H.}\\
 ~=~&\wlp[\PChoice P p Q](f)~.
\end{align*}

\textbf{The Loop $\WhileDo G Q$.}

The argument is dual to the case for the (non--liberal) expected reward.
\end{proof}

\subsection{Proof of Lemma \ref{lem:Pr_not_bad_is_wlp_1}}
\label{proofof:lem:Pr_not_bad_is_wlp_1}

\begingroup
\def\thelemma{\ref{lem:Pr_not_bad_is_wlp_1}}
\begin{lemma}
For $P \in \cfpGCL$, $g \in \BEx$, and $\sigma \in \State$:
\begin{align*}
\Prr{\Rdtmc{\sigma}{g}{P}}({\neg \lozenge \bad}) ~=~ \wlp[P](\ExpOne)(\sigma)~.
\end{align*}
\end{lemma}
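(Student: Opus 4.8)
The plan is to avoid a fresh structural induction and instead derive the statement from Lemma~\ref{lem:ExpRew_is_wp}(ii), instantiated at the constant post--expectation $\ExpOne$. The first observation is that the left--hand side does not actually depend on the reward structure: the states and transitions of $\Rdtmc{\sigma}{g}{P}$, and hence the path measure $\Prr{\Rdtmc{\sigma}{g}{P}}$, are independent of $g$, so $\Prr{\Rdtmc{\sigma}{g}{P}}(\neg\lozenge\bad) = \Prr{\Rdtmc{\sigma}{\ExpOne}{P}}(\neg\lozenge\bad)$ for every $g \in \BEx$. It therefore suffices to work in the RMC $\Rdtmc{\sigma}{\ExpOne}{P}$, where every $\exit$--labelled terminal state $\langle\term,\tau\rangle$ carries reward $\ExpOne(\tau)=1$ and all other states carry reward $0$.

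Next I would unfold $\wlp[P](\ExpOne)(\sigma)$ using Lemma~\ref{lem:ExpRew_is_wp}(ii) together with the definition of the liberal expected reward:
\begin{align*}
\wlp[P](\ExpOne)(\sigma)
&= \LExpRew{\Rdtmc{\sigma}{\ExpOne}{P}}{\lozenge\sinklabel} \\
&= \ExpRew{\Rdtmc{\sigma}{\ExpOne}{P}}{\lozenge\sinklabel} + \Prr{\Rdtmc{\sigma}{\ExpOne}{P}}(\neg\lozenge\sinklabel)~.
\end{align*}
The two remaining facts are elementary consequences of the operational rules in Figure~\ref{fig:rmdprules}. First, a finite path reaching $\sink$ does so either through a single $\exit$--labelled terminal state (successful termination) or through $\undesired$ (a failed observation), and these are the only ways to enter $\sink$; with reward function $\ExpOne$ the former kind of path accumulates reward exactly $1$ and the latter reward $0$. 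Hence $\ExpRew{\Rdtmc{\sigma}{\ExpOne}{P}}{\lozenge\sinklabel} = \Prr{\Rdtmc{\sigma}{\ExpOne}{P}}(\lozenge\exit)$, the probability of successful termination. Second, a run diverges iff it never reaches $\sink$, so $\Prr{\Rdtmc{\sigma}{\ExpOne}{P}}(\neg\lozenge\sinklabel)$ is precisely the divergence probability.

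Finally I would invoke the disjointness noted in Section~\ref{sec:ope}: the events ``reach $\exit$'', ``reach $\bad$'', and ``diverge'' partition the path space, since entering $\sink$ forces a run through exactly one of $\exit$ or $\undesired$. Consequently the complement of $\lozenge\bad$ decomposes as the disjoint union $\neg\lozenge\bad = \lozenge\exit \,\uplus\, \neg\lozenge\sinklabel$, which gives
\[
\Prr{\Rdtmc{\sigma}{\ExpOne}{P}}(\neg\lozenge\bad) = \Prr{\Rdtmc{\sigma}{\ExpOne}{P}}(\lozenge\exit) + \Prr{\Rdtmc{\sigma}{\ExpOne}{P}}(\neg\lozenge\sinklabel)~,
\]
and the right--hand side is exactly the sum computed above, establishing the claim. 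The only genuinely delicate point is the bookkeeping in these last two facts---checking against the operational rules that $\sink$ is reachable solely via $\exit$-- or $\undesired$--states, and that the three path classes are disjoint and exhaustive; everything else is routine. A self--contained alternative would be a direct structural induction mirroring the proof of Lemma~\ref{lem:ExpRew_is_wp}(ii), where the real work would again sit in the $\WhileDo{G}{P'}$ case and its greatest--fixed--point limit argument, but routing through Lemma~\ref{lem:ExpRew_is_wp}(ii) makes that unnecessary.
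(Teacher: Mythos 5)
Your proposal is correct and follows essentially the same route as the paper's own proof: both rest on the disjoint decomposition $\neg\Finally\bad = \Finally\exit \uplus \neg\Finally\sinklabel$, the identification of $\Prr{}(\Finally\exit)$ with the expected reward under the constant--one reward (noting that paths entering $\sink$ via $\undesired$ contribute zero), and the appeal to Lemma~\ref{lem:ExpRew_is_wp}(ii) to identify $\LExpRew{\Rdtmc{\sigma}{\ExpOne}{P}}{\Finally\sinklabel}$ with $\wlp[P](\ExpOne)(\sigma)$. The only cosmetic difference is the direction of the chain of equalities (you start from the $\wlp$ side, the paper from the probability side) and your explicit remark that the path measure is independent of $g$, which the paper uses tacitly.
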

\addtocounter{lemma}{-1}
\endgroup

\begin{proof}
First, observe that paths on reaching \exit or \bad imme{\-}di{\-}ate{\-}ly move to the state \sink. 
Moreover, all paths that never visit $\bad$ either (a) visit a terminal \exit--state (which are the only states that can possibly collect positive reward) or (b) diverge and never reach \sink and therefore neither reach \exit nor \bad. 
Furthermore the set of ``(a)--paths" and the set of ``(b)--paths" are disjoint.
Thus:
\begin{align*}
&\Prr{\Rdtmc{\sigma}{f}{P}}({\neg \lozenge \bad})\\
~=~ &\Prr{\Rdtmc{\sigma}{f}{P}}({\lozenge \exit}) + \Prr{\Rdtmc{\sigma}{f}{P}}({\neg \lozenge \sinklabel})\\
\intertext{and by assigning reward one to every \exit--state, and zero to all other states, we can turn the probability measure into an expected reward, yielding}
~=~ &\ExpRew{\Rdtmc{\sigma}{\ExpOne}{P}}{\lozenge \exit} + \Prr{\Rdtmc{\sigma}{g}{P}}({\neg\lozenge \sinklabel})\\
\intertext{As every path that reaches sink over a \bad--state cumulates zero reward, we finally get:}
~=~ &\ExpRew{\Rdtmc{\sigma}{\ExpOne}{P}}{\lozenge \sinklabel} + \Prr{\Rdtmc{\sigma}{g}{P}}({\neg\lozenge \sinklabel})\\
~=~ &\LExpRew{\Rdtmc{\sigma}{\ExpOne}{P}}{\lozenge \sinklabel}\\
~=~ &\wlp[P](\ExpOne)\tag{Lemma \ref{lem:ExpRew_is_wp}}\\[-1.5\baselineskip]
\end{align*}
\end{proof}

\subsection{Proof of Theorem \ref{thm:correspondence:cwp}}
\label{proofof:thm:correspondence:cwp}

\begingroup
\def\thelemma{\ref{thm:correspondence:cwp}}
\begin{theorem}[Correspondence theorem]
For $P \in \cfpGCL$, $f \in \Ex$, $g \in \BEx$ and $\sigma \in \State$,
\begin{align*}
\CExpRew{\Rdtmc{\sigma}{f}{P}}{\lozenge \sinklabel}{\neg\lozenge\bad} ~&=~ \qcwp[P](f)(\sigma) \\
\CLExpRew{\Rdtmc{\sigma}{g}{P}}{\lozenge \sinklabel}{\neg\lozenge\bad} ~&=~ \qcwlp[P](g) (\sigma)~.
\end{align*}
\end{theorem}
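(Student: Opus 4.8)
The plan is to reduce both equalities to the three results already available---Lemmas~\ref{lem:ExpRew_is_wp} and \ref{lem:Pr_not_bad_is_wlp_1} and the decoupling Theorem~\ref{thm:cwp-decuop}---by unfolding the conditional (liberal) expected reward into a quotient and then recognising numerator and denominator separately. Since $\Rdtmc{\sigma}{f}{P}$ is an RMC, no scheduler has to be resolved and the definition of conditional expected reward reads directly
\[
\CExpRew{\Rdtmc{\sigma}{f}{P}}{\lozenge \sinklabel}{\neg\lozenge\bad} ~=~ \frac{\ExpRew{\Rdtmc{\sigma}{f}{P}}{\lozenge \sinklabel \cap \neg\lozenge\bad}}{\Prr{\Rdtmc{\sigma}{f}{P}}(\neg\lozenge\bad)}~,
\]
and likewise for the liberal variant with $\LExpRew{}{}$ in the numerator.

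For the \cwp-equality I would first simplify the numerator. The \exit-labelled states are the only ones carrying positive reward, and by construction they are never visited on a path that passes \undesired; hence every reward-accumulating path lies in $\neg\lozenge\bad$ and $\ExpRew{\Rdtmc{\sigma}{f}{P}}{\lozenge \sinklabel \cap \neg\lozenge\bad} = \ExpRew{\Rdtmc{\sigma}{f}{P}}{\lozenge \sinklabel} = \wp[P](f)(\sigma)$ by Lemma~\ref{lem:ExpRew_is_wp}(i). The denominator is independent of the reward and equals $\wlp[P](\CteFun{1})(\sigma)$ by Lemma~\ref{lem:Pr_not_bad_is_wlp_1}. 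By Theorem~\ref{thm:cwp-decuop} we have $\cwp[P](f,\CteFun{1}) = (\wp[P](f),\,\wlp[P](\CteFun{1}))$, so the quotient is precisely $\qcwp[P](f)(\sigma) = \wp[P](f)(\sigma)/\wlp[P](\CteFun{1})(\sigma)$, closing the first equality.

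The liberal equality has the same denominator, so the work concentrates on the numerator $\LExpRew{\Rdtmc{\sigma}{g}{P}}{\lozenge \sinklabel \cap \neg\lozenge\bad}$. Among the paths avoiding \undesired there are exactly two disjoint kinds: those reaching a \exit-state (hence \sink) and those that diverge. As a diverging path can never reach \bad, the liberal bonus---the probability of \emph{not} reaching \sink inside the conditioned event $\neg\lozenge\bad$---coincides with the plain divergence probability $\Prr{\Rdtmc{\sigma}{g}{P}}(\neg\lozenge\sinklabel)$. Using the numerator simplification as above together with Lemma~\ref{lem:ExpRew_is_wp}(i)--(ii) I would then obtain
\[
\LExpRew{\Rdtmc{\sigma}{g}{P}}{\lozenge \sinklabel \cap \neg\lozenge\bad} ~=~ \wp[P](g)(\sigma) + \Prr{\Rdtmc{\sigma}{g}{P}}(\neg\lozenge\sinklabel) ~=~ \wlp[P](g)(\sigma)~,
\]
and dividing by $\wlp[P](\CteFun{1})(\sigma)$ and invoking $\qcwlp[P](g) = \wlp[P](g)/\wlp[P](\CteFun{1})$ (again from Theorem~\ref{thm:cwp-decuop}) settles the liberal claim.

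The step I expect to be the main obstacle is this liberal accounting: one must take the liberal ``not-reaching-\sink'' contribution \emph{relative to the conditioned event} $\neg\lozenge\bad$, not relative to the whole path space, for otherwise the probability of reaching \bad would spuriously enter the numerator and break the identity. The pairwise disjointness of the reach-\exit, reach-\bad and diverging path sets, noted just before the operational semantics, together with the fact that divergence forces $\neg\lozenge\bad$, is exactly what lets the bonus collapse to the plain divergence probability and thereby makes the numerator equal to $\wlp[P](g)(\sigma)$.
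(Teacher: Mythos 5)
Your proposal is correct and follows essentially the same route as the paper's proof: unfold the conditional (liberal) expected reward into a quotient, drop $\neg\lozenge\bad$ from the numerator because only $\exit$--labelled states carry reward and those are disjoint from paths through $\undesired$, identify numerator and denominator via Lemma~\ref{lem:ExpRew_is_wp} and Lemma~\ref{lem:Pr_not_bad_is_wlp_1}, and conclude with Theorem~\ref{thm:cwp-decuop}. Your explicit accounting of the liberal bonus relative to the conditioned event $\neg\lozenge\bad$ is in fact more detailed than the paper, which only remarks that the liberal case ``goes along the same arguments''.
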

\addtocounter{lemma}{-1}
\endgroup

\begin{proof}
\belowdisplayskip=-1\baselineskip
We prove only the first equation. 
The proof of the second equation goes along the same arguments.
\begin{align*}
&\CExpRew{\Rdtmc{\sigma}{f}{P}}{\lozenge \sinklabel}{\neg\lozenge\bad}\\
~=~ &\frac{\ExpRew{\Rdtmc{\sigma}{f}{P}}{\lozenge \sinklabel}}{\Prr{\Rdtmc{\sigma}{f}{P}}(\neg\lozenge \bad)}\\
~=~ &\frac{\wp[P](f)}{\wlp[P](\ExpOne)}\tag{Lemmas \ref{lem:ExpRew_is_wp}, \ref{lem:Pr_not_bad_is_wlp_1}}\\
~=~ &\frac{\cwp_1[P](f,\, \ExpOne)}{\cwp_2[P](f,\, \ExpOne)}\tag{Theorem \ref{thm:cwp-decuop}}\\
~=~ &\qcwp[P](f)
\end{align*}
\end{proof}

\subsection{Proof of  Theorem~\ref{thm:prog-trans-sound}}
\label{sec:proof-hoist-trans}

\begingroup
\def\thelemma{\ref{thm:prog-trans-sound}}
\begin{theorem}[Program Transformation Correctness]
  Let $P \in \cfpGCL$ admit at least one feasible run for every initial state and $\Tr (P,\CteFun{1}) = (\hat{P},\,\hat{h})$. Then for any $f 
  \in \Ex$ and $g \in \BEx$,
\begin{align*}
\wp[\hat{P}](f) = \qcwp[P](f) 
\quad \text{and} \quad
\wlp[\hat{P}](g) = \qcwlp[P](g).
\end{align*}
\end{theorem}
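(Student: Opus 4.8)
The plan is to prove a single \emph{multiplicative} invariant that decouples the transformation, in the spirit of Theorem~\ref{thm:cwp-decuop}, and then to specialise it to the starting argument $\CteFun{1}$. The statement as given (phrased in terms of $\Tr(P,\CteFun 1)$) is not directly amenable to structural induction, because $\Tr$ threads its second argument through the program: e.g.\ the concatenation clause transforms the left program $P$ with the \wlp\ of $Q$, not with $\CteFun 1$. First I would therefore generalise the second argument of $\Tr$ and prove, for every $P\in\cfpGCL$, every $g\in\BEx$ and every $f$, that writing $(\hat P,\hat h)=\Tr(P,g)$ one has
\[
\hat h \;=\; \wlp[P](g)
\qquad\text{and}\qquad
\wlp[P](g)\cdot\wp[\hat P](f)\;=\;\wp[P](g\cdot f),
\]
together with the liberal analogue $\wlp[P](g)\cdot\wlp[\hat P](f)=\wlp[P](g\cdot f)$. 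The theorem is then the case $g=\CteFun 1$: the invariant reads $\wlp[P](\CteFun 1)\cdot\wp[\hat P](f)=\wp[P](f)$, and dividing by $\wlp[P](\CteFun 1)$ (positive by the feasibility hypothesis) and invoking the decoupling $\qcwp[P](f)=\wp[P](f)/\wlp[P](\CteFun 1)$ from Theorem~\ref{thm:cwp-decuop} gives $\wp[\hat P](f)=\qcwp[P](f)$; the liberal equation is symmetric.

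The identity $\pi_2(\Tr(P,g))=\wlp[P](g)$ is dispatched immediately: comparing the second components in Figure~\ref{fig:prog-transf} with the \wlp\ rules of Figure~\ref{fig:cwp} shows they coincide clause by clause (in particular $\Tr(\Observe\,G,g)$ produces $\ToExp G\cdot g=\wlp[\Observe\,G](g)$, and the loop clause uses exactly the \wlp\ characteristic functional), so a routine structural induction suffices. With this in hand the multiplicative invariant is proved by induction on $P$. The base cases (\Skip, \Abort, assignment, observation) are direct evaluations. For $\PChoice{P_1}{p}{P_2}$, writing $h_i=\wlp[P_i](g)$ and $\hat P_i=\pi_1(\Tr(P_i,g))$, the reweighted parameter $p'=p\,h_1/(p\,h_1+(1-p)h_2)$ is designed precisely so that $\hat h=p\,h_1+(1-p)h_2$ cancels: $\hat h\cdot\wp[\hat P](f)=p\,h_1\,\wp[\hat P_1](f)+(1-p)h_2\,\wp[\hat P_2](f)$, and the two summands collapse to $p\,\wp[P_1](g\cdot f)$ and $(1-p)\wp[P_2](g\cdot f)$ by the induction hypothesis. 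For $P;Q$ I would chain the two hypotheses: applying the invariant for $P$ (transformed with $\wlp[Q](g)$) and then for $Q$ (transformed with $g$) telescopes $\wlp[P;Q](g)\cdot\wp[P'](\wp[Q'](f))$ into $\wp[P](\wlp[Q](g)\cdot\wp[Q'](f))=\wp[P](\wp[Q](g\cdot f))$. The conditional case is pure indicator algebra, using $\ToExp G\cdot\ToExp{\lnot G}=\ExpZero$.

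The loop is the main obstacle. Write $w=\wlp[\WhileDo G {P_0}](g)$ (the second component produced by the loop clause) and $\hat P=\WhileDo G {P_0'}$ with $(P_0',\ourunderscore)=\Tr(P_0,w)$. I would argue on Kleene approximants: let $Y_n$ be the iterates from $\ExpZero$ of the $\wp$-functional of $\hat P$ (so $\sup_n Y_n=\wp[\hat P](f)$), and $Z_n$ the iterates from $\ExpZero$ of the $\wp$-functional of $\WhileDo G {P_0}$ on argument $g\cdot f$, and prove $w\cdot Y_n=Z_n$ by induction on $n$. The step uses two facts: that $w$, being a fixed point of the \wlp\ functional, satisfies $\ToExp G\cdot w=\ToExp G\cdot\wlp[P_0](w)$ and $\ToExp{\lnot G}\cdot w=\ToExp{\lnot G}\cdot g$; and the body hypothesis $\wlp[P_0](w)\cdot\wp[P_0'](Y_n)=\wp[P_0](w\cdot Y_n)$. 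Together these turn $w\cdot Y_{n+1}$ into $\ToExp G\cdot\wp[P_0](w\cdot Y_n)+\ToExp{\lnot G}\cdot g\cdot f=\ToExp G\cdot\wp[P_0](Z_n)+\ToExp{\lnot G}\cdot g\cdot f=Z_{n+1}$. Passing to the supremum, and using that multiplication by the bounded expectation $w\in\BEx$ commutes with suprema of directed sets (continuity, as exploited in Lemma~\ref{thm:wp-ext-is-cont}), yields $w\cdot\wp[\hat P](f)=\wp[\WhileDo G {P_0}](g\cdot f)$. The liberal loop case is dual: one iterates the greatest-fixed-point functionals downward from $\ExpOne$ and uses that multiplication by $w$ commutes with infima of such decreasing chains.

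Finally, besides the loop this leaves one genuine bookkeeping nuisance: the reweighted $p'$ is a $0/0$ form at states where $\hat h=p\,h_1+(1-p)h_2$ vanishes. I would establish the invariant pointwise and observe that wherever $\wlp[P](g)(\sigma)=0$ both sides are $0$ — a vanishing $\wlp[P](g)$ forces the $g$-weighted terminating mass of $P$ to vanish, so under the convention $0\cdot\infty=0$ on $\Ex$ one has $\wp[P](g\cdot f)(\sigma)=0$ — hence the value of $p'$ there is irrelevant; the feasibility hypothesis guarantees $\wlp[P](\CteFun 1)(\sigma)>0$ at the top level, which is exactly what is needed to divide and recover $\qcwp/\qcwlp$ via Theorem~\ref{thm:cwp-decuop}. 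I expect the loop case and this well-definedness bookkeeping to absorb essentially all the work; everything else is routine algebra over the rules of Figures~\ref{fig:cwp} and~\ref{fig:prog-transf}.
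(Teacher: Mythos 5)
Your proposal is correct and follows essentially the same route as the paper: the paper's own proof reduces the theorem via Theorem~\ref{thm:cwp-decuop} to an auxiliary lemma stating exactly your multiplicative invariant ($\hat h\cdot\wp[\hat P](f)=\wp[P](h\cdot f)$, its liberal analogue, and $\hat h=\wlp[P](h)$ for general $h$), proved by structural induction with the same Kleene-approximant argument for loops and the same pointwise case split on $\hat h(s)=0$ for the reweighted probabilistic choice. No substantive differences.
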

\addtocounter{lemma}{-1}
\endgroup
\noindent
In view of Theorem~\ref{thm:cwp-decuop}, the proof reduces to showing equations $\hat{h} \cdot \wp[\hat{P}] (f) = \wp[P](f)$, $\hat{h} \cdot \wlp[\hat{P}] (f) = \wlp[P](f)$ and $\hat{h} = \wlp[P](\CteFun{1})$, which follow immediately from the auxiliary Lemma~\ref{thm:prog-trans-sound-strong} below by taking $h=\CteFun{1}$.

\begin{lemma}\label{thm:prog-trans-sound-strong}
  Let $P \in \cfpGCL$. Then for all expectations $f \in \Ex$ and $g, h \in \BEx$,
  it holds
  \begin{gather}
  \hat{h} \cdot \wp[\hat{P}] (f)  = \wp[P](h \cdot  f) \label{eq:1} \\ 
  \hat{h} \cdot \wlp[\hat{P}] (g)  = \wlp[P](h \cdot  g) \label{eq:2} \\
  \hat{h} =  \wlp[P](h), \label{eq:3}
  \end{gather}
  where $(\hat{P}, \hat{h}) = \Tr(P,\,h)$.
\end{lemma}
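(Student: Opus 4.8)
The plan is to prove the three equations \eqref{eq:1}, \eqref{eq:2}, \eqref{eq:3} \emph{simultaneously} by induction on the structure of $P$, where throughout $(\hat{P},\hat{h}) = \Tr(P,h)$ for an arbitrary $h \in \BEx$. Within each case I would establish \eqref{eq:3} first, since it identifies the hoisted expectation $\hat h$ with $\wlp[P](h)$, and this identification is exactly what makes the multiplicative factor $\hat h$ cancel correctly in \eqref{eq:1} and \eqref{eq:2}. The base cases ($\Skip$, $\Abort$, $\Ass x E$, $\Observe\, G$) are immediate from the clauses of $\Tr$ in Figure~\ref{fig:prog-transf} together with the matching $\wp/\wlp$ rules; for instance $\Tr(\Observe\, G,h) = (\Skip,\ToExp G\cdot h)$ gives $\hat h\cdot\wp[\Skip](f) = \ToExp G\cdot h\cdot f = \wp[\Observe\, G](h\cdot f)$, and likewise for the liberal variant and for \eqref{eq:3}.

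For the non-loop inductive cases the argument is purely algebraic, using $\ToExp G\cdot\ToExp{\lnot G} = \CteFun0$, $\ToExp G\cdot\ToExp G = \ToExp G$, and the induction hypotheses on the subprograms. For sequencing $\Tr(P;Q,h) = (P';Q',h'')$ with $(Q',h') = \Tr(Q,h)$ and $(P',h'') = \Tr(P,h')$, I would apply IH~\eqref{eq:1} for $P$ at the argument $\wp[Q'](f)$ and then IH~\eqref{eq:1} for $Q$, obtaining $h''\cdot\wp[P'](\wp[Q'](f)) = \wp[P](h'\cdot\wp[Q'](f)) = \wp[P](\wp[Q](h\cdot f))$; equation \eqref{eq:3} chains IH~\eqref{eq:3} on $P$ and $Q$ directly. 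The only delicate non-loop case is the probabilistic choice, where $p' = \tfrac{p\cdot h_P}{\hat h}$ with $\hat h = p\cdot h_P + (\CteFun1{-}p)\cdot h_Q$: the key cancellations are $\hat h\cdot p' = p\cdot h_P$ and $\hat h\cdot(\CteFun1{-}p') = (\CteFun1{-}p)\cdot h_Q$, after which IH~\eqref{eq:1} on each branch closes the case. When $\hat h(\sigma)=0$ the parameter $p'$ degenerates to $\tfrac{0}{0}$; this is handled separately by observing that $\hat h(\sigma)=0$ forces both sides of \eqref{eq:1} and \eqref{eq:2} to vanish at $\sigma$, since by \eqref{eq:3} the value $\wlp[P](h)(\sigma)=0$ means $P$ terminates almost surely in states where $h=0$, whence $\wp[P](h\cdot f)(\sigma)=0$.

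The loop $\WhileDo G P$ is the main obstacle. I would first settle \eqref{eq:3}: since $h' = \gfp X\mydot \bigl(\ToExp G\cdot(\pi_2\circ\Tr)(P,X) + \ToExp{\lnot G}\cdot h\bigr)$ and, by IH~\eqref{eq:3}, $(\pi_2\circ\Tr)(P,X) = \wlp[P](X)$, the functional defining $h'$ coincides with the one defining $\wlp[\WhileDo G P](h)$, so $\hat h = h' = \wlp[\WhileDo G P](h)$. From the fixed-point equation for $\hat h$ one extracts $\ToExp{\lnot G}\cdot\hat h = \ToExp{\lnot G}\cdot h$ and $\ToExp G\cdot\hat h = \ToExp G\cdot\wlp[P](\hat h)$; combining the latter with IH~\eqref{eq:1} for $P$ at the argument $h'=\hat h$ (noting $\ToExp G\cdot\hat h = \ToExp G\cdot(\pi_2\circ\Tr)(P,\hat h)$) yields the crucial \emph{intertwining identity}
\begin{align*}
\hat h\cdot\Phi(Y) ~=~ \Psi(\hat h\cdot Y),
\end{align*}
where $\Phi(Y) = \ToExp G\cdot\wp[P'](Y)+\ToExp{\lnot G}\cdot f$ is the functional of $\wp[\WhileDo G{P'}](f)$ and $\Psi(Z) = \ToExp G\cdot\wp[P](Z)+\ToExp{\lnot G}\cdot h\cdot f$ is the functional of $\wp[\WhileDo G P](h\cdot f)$.

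Finally I would transfer the identity to the fixed points. As $\Phi,\Psi$ are continuous (Lemma~\ref{thm:wp-ext-is-cont}) and multiplication by the bounded $\hat h$ preserves suprema, a routine induction on $n$ gives $\hat h\cdot\Phi^n(\CteFun0) = \Psi^n(\CteFun0)$; taking suprema yields $\hat h\cdot\wp[\WhileDo G{P'}](f) = \wp[\WhileDo G P](h\cdot f)$, which is \eqref{eq:1}. For the liberal equation \eqref{eq:2} the same intertwining holds for the liberal analogues $\Phi_\ell,\Psi_\ell$ of $\Phi,\Psi$, but the greatest fixed points are approximated from $\CteFun1$; the induction then gives $\hat h\cdot\Phi_\ell^n(\CteFun1) = \Psi_\ell^n(\hat h)$, so $\hat h\cdot\gfp\Phi_\ell = \inf_n\Psi_\ell^n(\hat h)$. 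To see this equals $\gfp\Psi_\ell$ I would invoke monotonicity: since $g\sqsubseteq\CteFun1$, we get $\gfp\Psi_\ell = \wlp[\WhileDo G P](h\cdot g)\sqsubseteq\wlp[\WhileDo G P](h) = \hat h$, whence $\gfp\Psi_\ell\sqsubseteq\Psi_\ell^n(\hat h)$ for all $n$, supplying the non-trivial inequality (the other being immediate from $\hat h\sqsubseteq\CteFun1$). I expect this fixed-point transfer for the loop — in particular reconciling the descending iteration from $\CteFun1$ in the liberal case — to be the technically delicate part; everything else is bookkeeping.
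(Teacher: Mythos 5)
Your proposal is correct and follows essentially the same route as the paper's proof: structural induction proving the three equations simultaneously, the same algebraic cancellations $\hat{h}\cdot p' = p\cdot\hat{h}_P$ for probabilistic choice with a separate treatment of the degenerate case $\hat{h}(\sigma)=0$, and for the loop the same inductive identity $\hat{h}\cdot\Phi^n(\CteFun{0})=\Psi^n(\CteFun{0})$ transferred to the fixed points via Kleene iteration and continuity (your ``intertwining identity'' is exactly the paper's induction step, and your direct identification of the two gfp--functionals for \eqref{eq:3} is a slightly cleaner packaging of the paper's mutual--fixed--point argument). The one place you go beyond the paper is the liberal loop case of \eqref{eq:2}, which the paper dismisses as ``similar reasoning'': you correctly isolate the actual subtlety there --- the descending iteration starts from $\CteFun{1}$ while intertwining produces $\Psi_\ell^n(\hat{h})$ --- and your sandwich argument via $\gfp\Psi_\ell \sqsubseteq \hat{h} \sqsubseteq \CteFun{1}$ and monotonicity closes it properly.
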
 
\begin{proof}
  We prove only equations \eqref{eq:1} and \eqref{eq:3} since \eqref{eq:2} follows a reasoning similar to \eqref{eq:1}. The proof proceeds by induction on the structure of $P$. In the remainder we will refer to the inductive hypothesis about \eqref{eq:1} as to IH$_1$ and to the inductive hypothesis about \eqref{eq:3} as to IH$_2$.

\textbf{The Effectless Program \Skip.} We have $\Tr(\Skip,h) = (\Skip,
  h)$ and the statement follows immediately since 
\begin{gather*}
h \cdot \wp[\Skip](f) = h \cdot f = \wp[\Skip](h \cdot
f)
\intertext{and}
h = \wlp[\Skip](h).
\end{gather*}
\textbf{The Faulty Program \Abort.} We have $\Tr(\Abort,h) =
  (\Abort, \CteFun{1})$ and the statement follows immediately since 
\begin{gather*}
\CteFun{1} \cdot \wp[\Abort](f) =  \CteFun{1} \cdot \CteFun{0} = \wp[\Abort](h \cdot
f) 
\intertext{and} 
\CteFun{1} = \wlp[\Abort](h).
\end{gather*}
\textbf{The Assignment $\Ass x E$.}  We have $\Tr(\Ass{x}{E},h) =
  (\Ass{x}{E}, h\subst{x}{E})$ and the statement follows immediately since 
\begin{gather*}
h\subst{x}{E} \cdot \wp[\Ass{x}{E}] (f)
= 
h\subst{x}{E} \cdot f\subst{x}{E}\\
=
(h \cdot f) \subst{x}{E}
=
\wp[\Ass{x}{E}](h \cdot f)
\intertext{and}
h\subst{x}{E} = \wlp[\Ass{x}{E}](h).
\end{gather*}
\textbf{The Observation $\Observe~ G$.} We have $\Tr(\Observe \:
  G,h) \allowbreak=\allowbreak (\Skip, \ToExp{G} \cdot h)$ and the statement follows
  immediately since
\begin{gather*}
\ToExp{G} \cdot h \cdot \wp[\Skip] (f) 
= 
\ToExp{G} \cdot h \cdot f \\
= 
\wp[\Observe \: G](h \cdot f)
\intertext{and}
\ToExp{G} \cdot h = \wlp[\Observe \: G](h).
\end{gather*}
\textbf{The Concatenation $P;\, Q$.}  Let $(\hat{Q},\hat{h}_Q) = \Tr(Q,
  h)$ and $(\hat{P},\hat{h}_P) = \Tr(P, \hat{h}_Q)$. In view
  of these definitions, we obtain
\[
\Tr(P;Q, h) = (\hat{P};\hat{Q}, \hat{h}_P).
\]
Now 
\begin{align*}
\MoveEqLeft[3] 
\hat{h}_P \cdot \wp[\hat{P};\hat{Q}] (f)  & &\\
& = 
\hat{h}_P \cdot \wp[\hat{P}] \left( \wp[\hat{Q}] (f) \right)  & & \\
& =
\wp[P](\hat{h}_Q \cdot \wp[\hat{Q}] (f) )  & & 
\text{(IH$_1$ on $P$)}\\
& =
\wp[P](\wp[Q] (h \cdot f))  & & 
\text{(IH$_1$ on $Q$)}\\
& =
\wp[P;Q](h \cdot f)  & & 
\intertext{and}
\hat{h}_P 
& = 
\wlp[P](\hat{h}_Q)  & & 
\text{(IH$_2$ on $P$)}\\
& =
\wlp[P](\wlp[Q](h))  & & 
(\text{IH$_2$ on $Q$)}\\
& =
\wlp[P;Q](h).  & & 
\end{align*}

\textbf{The Conditional Choice $\Cond G P Q$.} Let $(\hat{P},\hat{h}_P) \allowbreak=\allowbreak \Tr(P, h)$
  and $(\hat{Q},\hat{h}_Q) = \Tr(Q, h)$. On view of these
  definitions, we obtain
\begin{gather*}
\Tr(\Cond{G}{P}{Q}, h) = \\
 (\Cond{G}{\hat{P}}{\hat{Q}},  \ToExp{G}\cdot \hat{h}_P +
 \ToExp{\lnot G} \cdot \hat{h}_Q).
\end{gather*}
Now 
\begin{align*}
\MoveEqLeft[1]
(\ToExp{G}\cdot \hat{h}_P +
 \ToExp{\lnot G} \cdot \hat{h}_Q) \\ 
 & \qquad \cdot \wp[\Cond{G}{\hat{P}}{\hat{Q}}] (f) & & \\
& =
(\ToExp{G}\cdot \hat{h}_P +
 \ToExp{\lnot G} \cdot \hat{h}_Q) && \\ 
& \qquad \cdot  
(\ToExp{G}\cdot \wp[\hat{P}](f) +
 \ToExp{\lnot G} \cdot \wp[\hat{Q}](f))  & & \\
& =
\ToExp{G}\cdot \hat{h}_P \cdot \wp[\hat{P}](f) +
 \ToExp{\lnot G} \cdot \hat{h}_Q \cdot \wp[\hat{Q}](f)  & & \\
& =
\ToExp{G} \cdot \wp[P](h \cdot f) +
 \ToExp{\lnot G} \cdot \wp[Q](h \cdot f)  & & 
\text{(IH$_1$)} \\
& =
\wp[\Cond{G}{P}{Q}] (h \cdot f) & &
\intertext{and}
\MoveEqLeft[1]
\ToExp{G}\cdot \hat{h}_P +
 \ToExp{\lnot G} \cdot \hat{h}_Q & & \\
& =
\ToExp{G}\cdot \wlp[P](h) +
 \ToExp{\lnot G} \cdot \wlp[Q](h)  & & 
\text{(IH$_2$)} \\ 
& =
\wlp[\Cond{G}{P}{Q}] (h) & &
\end{align*}
\textbf{The Probabilistic Choice $\PChoice P p Q$.} Let $(\hat{P},\,\hat{h}_P) =\Tr(P,\, h)$
  and $(\hat{Q},\allowbreak\,\hat{h}_Q) \allowbreak=\allowbreak \Tr(Q,\, h)$. On view of these
  definitions, we obtain 
\begin{align*}
\MoveEqLeft[2]
\Tr(\PChoice{P}{\phi}{Q}, h) = \\
 & (\PChoice{\hat{P}}{\nicefrac{\phi \cdot \hat{h}_P}{\hat{h}}}{\hat{Q}}, \phi \cdot \hat{h}_P +
(\CteFun{1}-\phi ) \cdot \hat{h}_Q) 
\end{align*}
with $\hat{h} = \phi \cdot \hat{h}_P +
(\CteFun{1}-\phi) \cdot \hat{h}_Q$.

To prove the first claim
\[
\hat{h} \cdot \wp[\PChoice{\hat{P}}{\nicefrac{\phi \cdot
    \hat{h}_P}{\hat{h}}}{\hat{Q}}](f)
=
\wp[\PChoice{P}{\phi}{Q}](h \cdot f)
\]
of the lemma we need to make a case distinction between those states that are
mapped by $\hat{h}$ to a positive number and those that are mapped to
$0$. In the first case, \ie if $\hat{h}(s) > 0$, we reason as follows:
\begin{align*}
\MoveEqLeft[1]
\hat{h} (s) \cdot 
\wp[\PChoice{\hat{P}}{\nicefrac{\phi \cdot \hat{h}_P}{\hat{h}}}{\hat{Q}}]
(f) (s) & & \\
& =
\hat{h} (s)  \cdot \left(
 \tfrac{\phi \cdot \hat{h}_P}{\hat{h}} (s) \cdot \wp[\hat{P}] (f) (s) \right. &&  \\
 &  \qquad \quad \left. + \tfrac{(\CteFun{1}-\phi) \cdot \hat{h}_Q}{\hat{h}} (s) \cdot
 \wp[\hat{Q}](f) (s) \right)  & & \\
& =
\phi (s) \cdot \hat{h}_P (s) \cdot \wp[\hat{P}] (f) (s) && \\
& \qquad  + (\CteFun{1}-\phi) (s) \cdot \hat{h}_Q (s) \cdot \wp[\hat{Q}](f) (s) & & \\
& =
\phi (s) \cdot \wp[P] (h \cdot f) (s) && \\
& \qquad + (\CteFun{1}-\phi)(s) \cdot \wp[Q](h \cdot f) (s) & & 
\text{(IH$_1$)} \\
& =
\wp[\PChoice{P}{\phi}{Q}](h \cdot f) (s) & &
\end{align*}
while in the second case, \ie if $\hat{h}(s) = 0$, the claim holds because
we will have $\wp[\PChoice{P}{\phi}{Q}](h \cdot f)(s)=0$. To see
this note that if $\hat{h}(s) = 0$ then either $\phi(s)=0 \land
\hat{h}_Q(s)=0$ or $\phi(s)=1 \land \hat{h}_P(s)=0$ holds. Now
assume we are in the first case (an analogous argument works for the other
case); using the IH$_1$ over $Q$ we obtain
\begin{gather*}
\wp[\PChoice{P}{0}{Q}](h \cdot f)(s) 
= 
\wp[Q](h \cdot f)(s) \\
=
\hat{h}_Q(s) \cdot \wp[Q](f)(s)
=
0. 
\end{gather*}

The proof of the second claim of the lemma is straightforward:
\begin{align*}
\MoveEqLeft[1]
\phi \cdot \hat{h}_P +
(\CteFun{1}-\phi ) \cdot \hat{h}_Q & & \\
& =
\phi \cdot \wlp[P] (h) +
(\CteFun{1}-\phi) \cdot \wlp[Q](h) & & 
\text{(IH$_2$)} \\
& =
\wlp[\PChoice{P}{\phi}{Q}](h). & &
\end{align*}
\textbf{The Loop $\WhileDo G Q$.} Let $\hat{h} = \gfp F$ where $F(X) = \ToExp{G} \cdot
  \Tr_P(X) + \ToExp{\lnot G} \cdot h$ and $\Tr_P(\cdot)$ is a short--hand for $\pi_2 \circ T(P, \cdot)$. Now if we let $(\hat{P}, \theta) =
  \Tr(P,\hat{h})$ by definition of $\Tr$ we obtain 
\[
\Tr(\WhileDo{G}{P}, h) = (\WhileDo{G}{\hat{P}}, \hat{h}).
\]
The first claim of the lemma says that 
\[
\hat{h} \cdot \wp[\WhileDo{G}{\hat{P}}] (f) 
= 
\wp[\WhileDo{G}{P}](h \cdot f).
\]
Now if we let $H(X) = \ToExp{G} \cdot \wp[\hat{P}] (X) + \ToExp{\lnot G} \cdot f$ and $I(X) = \ToExp{G} \cdot \wp[P] (X) + \ToExp{\lnot G} \cdot h \cdot f$, the claim can be rewritten as $\hat{h} \cdot \lfp H = \lfp I$ and a straightforward argument using the Kleene fixed point theorem (and the continuity of \wp established in Lemma~\ref{thm:wp-ext-is-cont}) shows that it is entailed by formula $\forall n \mydot \hat{h} \cdot H^n(\CteFun{0}) = I^n(\CteFun{0})$. We prove the formula by induction on $n$. The case $n=0$ is trivial. For the inductive case we reason as follows:
\begin{align*}
\MoveEqLeft[1]
\hat{h} \cdot H^{n+1}(\CteFun{0}) & & \\
& =
F(\hat{h}) \cdot H^{n+1}(\CteFun{0}) & & 
\text{(def.~$\hat{h}$)} \\
& =
(\ToExp{G} \cdot
  \Tr_P(\hat{h}) + \ToExp{\lnot G} \cdot h) \cdot H^{n+1}(\CteFun{0}) & & 
\text{(def.~$F$)} \\
& =
(\ToExp{G} \cdot
  \Tr_P(\hat{h}) + \ToExp{\lnot G} \cdot h)  & & \\
& \qquad \cdot (\ToExp{G} \cdot
  \wp[\hat{P}] (H^n(\CteFun{0})) + \ToExp{\lnot G} \cdot f) & & 
\text{(def.~$H$)} \\
& =
\ToExp{G} \cdot \Tr_P(\hat{h}) \cdot \wp[\hat{P}] (H^n(\CteFun{0})) && \\
& \qquad  + \ToExp{\lnot G} \cdot h \cdot f & &
\text{(algebra)} \\
& =
\ToExp{G} \cdot \theta \cdot \wp[\hat{P}] (H^n(\CteFun{0})) + 
\ToExp{\lnot G} \cdot h \cdot f & & 
\text{(def.~$\theta$)} \\
& =
\ToExp{G} \cdot \wp[P](\hat{h} \cdot H^n(\CteFun{0})) + 
\ToExp{\lnot G} \cdot h \cdot f & & 
\text{(IH$_1$ on P)} \\
& =
I(\hat{h} \cdot H^n(\CteFun{0})) & & 
\text{(def.~$I$)} \\
& =
I^{n+1}(\CteFun{0}) & &
\text{(IH on $n$)}
\end{align*}

We now turn to proving the second claim
\[
\hat{h}  
= 
\wlp[\WhileDo{G}{P}](h)
\]
of the lemma. By letting $J(X) = \ToExp{G} \cdot \wlp[P] (X) +
\ToExp{\lnot G} \cdot h$, the claim reduces to $\gfp F = \gfp J$, which we
prove showing that $\hat{h} = \gfp F$ is a fixed point of $J$ and $\gfp J$
is a fixed point of $F$. (These assertions basically imply that $\gfp F \geq \gfp
J$ and $\gfp J \geq \gfp F$, respectively.)
\begin{align*}
J (\hat{h}) 
& = 
\ToExp{G} \cdot \wlp[P] (\hat{h}) + \ToExp{\lnot G} \cdot
h  & & 
\text{(def.~$J$)} \\
& =
\ToExp{G} \cdot \theta + \ToExp{\lnot G} \cdot
h  & & 
\text{(IH$_2$ on $P$)} \\
& =
\ToExp{G} \cdot \Tr_P(\hat{h}) + \ToExp{\lnot G} \cdot
h  & & 
\text{(def.~$\theta$)} \\
& =
F(\hat{h})  & & 
\text{(def.~$F$)} \\
& =
\hat{h}  & & 
\text{(def.~$\hat{h}$)} \\
& & & \\
F (\gfp J) 
& = 
\ToExp{G} \cdot \Tr_P(\gfp J) + \ToExp{\lnot G} \cdot h & &
\text{(def.~$F$)} \\
& = 
\ToExp{G} \cdot \wlp[P](\gfp J) + \ToExp{\lnot G} \cdot h & &
\text{(IH$_2$ on $P$)} \\
& = 
J (\gfp J) & & 
\text{(def.~$J$)} \\
& = 
\gfp J & & 
\text{(def.~$\gfp J$)} 
\end{align*}
\end{proof}

\subsection{Proof of Theorem \ref{thm:sugar}}
\label{sec:proof-sugar}


\begin{proof}
Let us take the operational point of view. Let $\sinit$ be some initial state of $P$.
\begin{align}
{}&\qcwp[P](f)(\sinit)\\
={}&\CExpRew{\Rdtmc{\sinit}{f}{P}}{\Finally \sinklabel}{\neg\Finally\bad}\\
={}&\CExpRew{\Rdtmc{\sinit}{f}{P'}}{\Finally \sinklabel}{\neg\Finally\continue}
\label{eqn:line3}
\\
={}&\frac{
	\ExpRew {\Rdtmc{\sinit}{f}{P'}} {\Finally \sinklabel \cap \neg\Finally \continue}
	}{
	\Pr^{\Rdtmc{\sinit}{f}{P'}}(\neg\Finally \continue)
	}
\\
={}&\frac{
	\sum_{\fpath\in\Finally \sinklabel \cap \neg\Finally \continue}\Pr^{\Rdtmc{\sinit}{f}{P'}}(\fpath) \cdot f(\fpath)
	}{
	1-\Pr^{\Rdtmc{\sinit}{f}{P'}}(\Finally \continue)
	}
\\
={}&
	\sum_{i=0}^\infty \Pr^{\Rdtmc{\sinit}{f}{P'}}(\Finally \continue)^i \nonumber\\
{}&\qquad\qquad
	\cdot\sum_{\fpath\in\Finally \sinklabel \cap \neg\Finally \continue}\Pr^{\Rdtmc{\sinit}{f}{P'}}(\fpath) \cdot f(\fpath)
	\label{eqn:line6}
	\\
={}&
	\sum_{\fpath\in\Finally \sinklabel \cap \neg\Finally \continue}\sum_{i=0}^\infty 
		\left(\Pr^{\Rdtmc{\sinit}{f}{P'}}(\Finally \continue)^i \right.\nonumber\\
{}&\qquad\qquad\qquad\qquad\qquad
	\left.\cdot\Pr^{\Rdtmc{\sinit}{f}{P'}}(\fpath) \cdot f(\fpath)\right)
	\label{eqn:line7}
	\\
={}& \sum_{\fpath\in\Finally \sinklabel}\Pr^{\Rdtmc{\sinit}{f}{P''}}(\fpath) \cdot f(\fpath)\\
={}& \ExpRew{\Rdtmc{\sinit}{f}{P''}}{\Finally \sinklabel}\\
={}& \wp(P'',f)(\sinit)\enspace.
\end{align}
The equality (\ref{eqn:line3}) holds because, by construction, the probability to violate an observation in $P$ agrees with the probability to reach a state in $P'$ where \continue is \true. 
In order to obtain equation (\ref{eqn:line6}) we use the fact that for a fixed real value $r$ and probability $a$ it holds
\[\frac{r}{1-a} = \sum_{i=0}^\infty a^i r\enspace.\]
Rewriting (\ref{eqn:line6}) into (\ref{eqn:line7}) precisely captures the expected cumulative reward of all terminating paths in $P''$ which is the expression in the following line. Finally we return from the operational semantics to the denotational semantics and obtain the desired result.
\end{proof}

\subsection{Detailed calculations for Section~\ref{sec:crowds}}
\label{sec:calculation}
We refer to the labels \Init and \Loop introduced in the program $P$ in Section~\ref{sec:crowds}.
Further let \Body denote the program in the loop's body.
For readability we abbreviate the variable names \textit{delivered} as \delivered, \textit{counter} as \counter and \textit{intercepted} as \intercepted.
In the following we consider \delivered and \intercepted as boolean variables.
In order to determine (\ref{eqn:goal}) we first start with the numerator.
This quantity is given by
\begin{align}
{}&\wp[\Init;\Loop;\Observe( \counter\leq k)]([\neg \intercepted])\\
\label{eqn:defwp}
={}&\wp[\Init](\wp[\Loop]([\counter\leq k \wedge \neg \intercepted]))\\
\label{eqn:defwhile}
={}&\wp[\Init](\lfp\! F \mydot\left( [\neg\delivered]\cdot\wp[\Body](F) \right.\nonumber\\
&\qquad\qquad\left.+[\delivered \wedge \counter\leq k\wedge\neg\intercepted]\right)) \\
\label{eqn:kleene}
={}&\wp[\Init](\sup_n\left( [\neg\delivered]\cdot\wp[\Body](\CteFun{0}) \right.\nonumber\\
&\qquad\qquad\left.+[\delivered \wedge \counter\leq k\wedge\neg\intercepted]\right)^n)
\end{align}
where $\Phi^n$ denotes the $n$-fold application of $\Phi$.
Equation (\ref{eqn:defwp}) is given directly by the semantics of sequential composition of \cpGCL commands.
In the next line we apply the definition of loop semantics in terms of the least fixed point.
Finally, (\ref{eqn:kleene}) is given by the Kleene fixed point theorem as a solution to the fixed point equation in (\ref{eqn:defwhile}).
We can explicitly find the supremum by considering the expression for several $n$ and deducing a pattern. Let $\Phi(F) = [\neg\delivered]\cdot\wp[\Body](F)+[\delivered \wedge \counter\leq k\wedge\neg\intercepted]$. Then we have

\begingroup
\allowdisplaybreaks
\begin{align*}
\Phi(\CteFun{0})&=[\neg\delivered]\cdot\wp[\Body](\CteFun{0})+[\delivered \wedge \counter\leq k\wedge\neg\intercepted]\\
&=[\delivered \wedge \counter\leq k\wedge\neg\intercepted] \\
&\\
\Phi^2(\CteFun{0}) 
&= \Phi([\delivered \wedge \counter\leq k\wedge\neg\intercepted])\\
&= [\neg\delivered]\cdot\wp[\Body]([\delivered \wedge \counter\leq k\wedge\neg\intercepted])\\
&\qquad+[\delivered \wedge \counter\leq k\wedge\neg\intercepted]\\
&= [\neg\delivered]\cdot\left(p(1-c)\cdot[\delivered \wedge \counter+1\leq k\wedge\neg\intercepted]\right.\\
&\qquad\qquad\left.+(1-p)\cdot[\counter\leq k\wedge\neg\intercepted]\right)\\
&\qquad+[\delivered \wedge \counter\leq k\wedge\neg\intercepted]\\
&= [\neg\delivered\wedge \counter\leq k\wedge\neg\intercepted]\cdot(1-p)\\
&\qquad+[\delivered \wedge \counter\leq k\wedge\neg\intercepted]\\
&\\
\Phi^3(\CteFun{0}) 
&= \Phi([\neg\delivered\wedge \counter\leq k\wedge\neg\intercepted]\cdot(1-p)\\
&\quad+[\delivered \wedge \counter\leq k\wedge\neg\intercepted]) \\
&=\ldots \\
&= [\neg\delivered\wedge \counter\leq k\wedge\neg\intercepted]\cdot(1-p)\\
&\quad +[\neg\delivered\wedge \counter+1\leq k\wedge\neg\intercepted]\cdot(1-p)p(1-c)\\
&\quad +[\delivered \wedge \counter\leq k\wedge\neg\intercepted]
\end{align*}
\endgroup
As we continue to compute $\Phi^n(\CteFun{0})$ in each step we add a summand of the form
\[
[\neg\delivered\wedge \counter+i\leq k\wedge\neg\intercepted]\cdot(1-p)(p(1-c))^i
\]
However we see that the predicate evaluates to \false for all $i > k -\counter$.
Hence the non-zero part of the fixed point is given by
\begin{align*}
{}&[\delivered \wedge \counter\leq k\wedge\neg\intercepted]\\
{}&+\sum_{i=0}^{k-\counter}[\neg\delivered\wedge \counter+i\leq k\wedge\neg\intercepted]\cdot(1-p)(p(1-c))^i\\
={}&[\delivered \wedge \counter\leq k\wedge\neg\intercepted]\\
&+[\neg\delivered\wedge \counter\leq k\wedge\neg\intercepted]\cdot\sum_{i=0}^{k-\counter}(1-p)(p(1-c))^i\\
={}&[\delivered \wedge \counter\leq k\wedge\neg\intercepted]\\
&+[\neg\delivered\wedge \counter\leq k\wedge\neg\intercepted]\\
&\qquad\cdot(1-p)\frac{1-(p(1-c))^{k-\counter+1}}{1-p(1-c)}\enspace.
\end{align*}
where for the last equation we use a property of the finite geometric series, namely that for $r\neq 1$
\[\sum_{k=0}^{n-1} ar^k= a \, \frac{1-r^{n}}{1-r}\enspace.\]
The result coincides with the intuition that in a state where $\delivered = \false$, the probability to fail to reach the goal $\neg\intercepted \wedge \counter \leq k$ is distributed geometrically with probability $p(1-c)$. 
It is easy to verify that our educated guess is correct by checking that we indeed found a fixed point of $\Phi$:
\begin{align*}
&\Phi([\delivered \wedge \counter\leq k\wedge\neg\intercepted]\\
&+[\neg\delivered\wedge \counter\leq k\wedge\neg\intercepted]\\
&\quad\cdot(1-p)\frac{1-(p(1-c))^{k-\counter+1}}{1-p(1-c)})\\
={}&[\delivered \wedge \counter \leq k \wedge \neg \intercepted]\\
& +[\neg\delivered]\cdot\left((1-p)\cdot[\counter\leq k \wedge \neg\intercepted] \phantom{\frac{a^k}{b}}\right.\\
&\quad +p(1-c)\left([\delivered \wedge \counter+1 \leq k \wedge \neg \intercepted] \phantom{\frac{a^k}{b}}\right.\\
&\quad\qquad +[\neg \delivered \wedge \counter+1 \leq k \wedge \neg \intercepted]\\
&\quad\qquad\qquad\left.\left.\cdot(1-p)\frac{1-p(1-c)^{k-\counter}}{1-p(1-c)}\right)\right) \\
={}&[\delivered \wedge \counter \leq k \wedge \neg \intercepted]\\
& +[\neg\delivered\wedge\counter\leq k \wedge \neg\intercepted]\cdot(1-p)\\
& +[\neg\delivered\wedge\counter+1 \leq k \wedge \neg\intercepted]\\
& \qquad\cdot(1-p)(p(1-c))\frac{1-p(1-c)^{k-\counter}}{1-p(1-c)} \\
={}&[\delivered \wedge \counter \leq k \wedge \neg \intercepted]\\
& +[\neg\delivered\wedge\counter = k \wedge \neg\intercepted]\cdot(1-p)\\
& +[\neg\delivered\wedge\counter+1 \leq k \wedge \neg\intercepted]\cdot(1-p)\\
& +[\neg\delivered\wedge\counter+1 \leq k \wedge \neg\intercepted]\\
& \qquad\cdot(1-p)(p(1-c))\frac{1-p(1-c)^{k-\counter}}{1-p(1-c)} \\
={}&[\delivered \wedge \counter \leq k \wedge \neg \intercepted]\\
& +[\neg\delivered\wedge\counter = k \wedge \neg\intercepted]\cdot(1-p)\\
& +[\neg\delivered\wedge\counter+1 \leq k \wedge \neg\intercepted]\\
&\cdot(1-p)\frac{1-p(1-c)+(p(1-c))\left(1-p(1-c)^{k-\counter}\right)}{1-p(1-c)} \\
={}&[\delivered \wedge \counter \leq k \wedge \neg \intercepted]\\
& +[\neg\delivered\wedge\counter\leq k \wedge \neg\intercepted]\\
&\quad\cdot(1-p)\frac{1-p(1-c)^{k-\counter+1}}{1-p(1-c)}
\end{align*}
Moreover this fixed point is the only fixed point and therefore the least.
The justification is given by~\cite{McIver:2004} where they show that loops which terminate almost surely have only one fixed point.
We can now continue our calculation from (\ref{eqn:kleene}).
\begin{align}
={}&\wp[\Init]([\delivered \wedge \counter\leq k\wedge\neg\intercepted]\nonumber\\
&\quad+[\neg\delivered\wedge \counter\leq k\wedge\neg\intercepted]\\
&\qquad\cdot(1-p)\frac{1-(p(1-c))^{k-\counter+1}}{1-p(1-c)})\nonumber\\
={}&(1-c)(1-p)\frac{1-(p(1-c))^{k}}{1-p(1-c)}\enspace.\label{eqn:wpresult}
\end{align}
This concludes the calculation of the numerator of (\ref{eqn:goal}).
Analogously we find the denominator
\begin{align}
{}&\wlp[\Init;\Loop;\Observe(\counter\leq k)](\CteFun{1}) \nonumber\\
={}&\wlp[\Init](\wlp[\Loop]([\counter\leq k]))\nonumber\\
={}&\wlp[\Init](\gfp\! F \mydot\left( [\neg\delivered]\cdot\wlp[\Body](F) \right.\nonumber\\
&\qquad\qquad\left.+[\delivered \wedge \counter\leq k]\right)) \nonumber\\
={}&\wlp[\Init](\sup_n\left( [\neg\delivered]\cdot\wlp[\Body](\CteFun{1}) \right.\nonumber\\
&\qquad\qquad\left.+[\delivered \wedge \counter\leq k]\right)^n)\nonumber\\
={}&\wlp[\Init]([\delivered \wedge \counter \leq k] \nonumber\\
&+ [\neg \delivered \wedge \counter \leq k]\cdot(1-p^{k-counter+1}))\nonumber\\
={}&1-p^k\enspace.\label{eqn:wlpresult}
\end{align}
The only difference is that here the supremum is taken with respect to the reversed order $\geq$ in which $\CteFun{1}$ is the bottom and $\CteFun{0}$ is the top element.
However as mentioned earlier \Loop terminates with probability one and the notions of \wp and \wlp coincide.
We divide (\ref{eqn:wpresult}) by (\ref{eqn:wlpresult}) to finally arrive at
\begin{align*}
&\qcwp[\textit{P}]([\neg\textit{intercepted}])\\
={}&(1-c)(1-p)\frac{1-(p(1-c))^k}{1-p(1-c)}\cdot\frac{1}{1-p^k}\enspace.
\end{align*}
One can visualise it as a function in $k$ by fixing the parameters $c$ and $p$.
For example, Figure~\ref{fig:plot} shows the conditional probability plotted for various parameter settings.
\begin{figure}
\begin{tikzpicture}[scale=1.0]
\begin{axis}[legend style={at={(0,0)},anchor=north west,at={(axis description cs:0,-0.1)},draw=none,legend columns=2},column sep=1ex,cycle list={blue,mark=o\\ red,mark=x\\ green,mark=triangle\\ orange,mark=square\\ }]
	\foreach \c in {0.1,0.2}{
		\foreach \p in {0.6,0.8}{
    		\addplot+[domain=1:20] {(1-\c)*(1-\p)*(1-(\p*(1-\c))^x)/(1-\p*(1-\c))*(1)/(1-\p^x)};
			\addlegendentryexpanded{$c=\c\quad p=\p$}
		}
	}
\end{axis}
\end{tikzpicture}
\caption{The conditional probability that a message is intercepted as a function of $k$ for fixed $c$ and $p$.}
\label{fig:plot}
\end{figure}
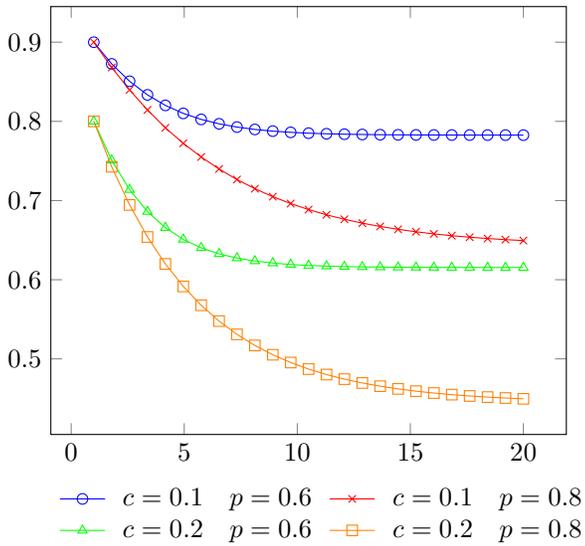

\end{document}